\numberwithin{equation}{section}
\newcommand\numberthis{\addtocounter{equation}{1}\tag{\theequation}}
\def\showauthornotes{1}
\def\showdraftbox{0}
\def\showcolorlinks{1}
\newtheorem{theorem}{Theorem}[section]
\newtheorem*{theorem*}{Theorem}
\newtheorem{proposition}[theorem]{Proposition}
\newtheorem*{proposition*}{Proposition}
\newtheorem{lemma}[theorem]{Lemma}
\newtheorem*{lemma*}{Lemma}
\newtheorem{corollary}[theorem]{Corollary}
\newtheorem*{conjecture*}{Conjecture}
\newtheorem*{fact*}{Fact}
\newtheorem*{hypothesis*}{Hypothesis}
\theoremstyle{definition}
\newtheorem{definition}[theorem]{Definition}
\theoremstyle{remark}
\newtheorem{claim}[theorem]{Claim}
\newtheorem*{claim*}{Claim}
\newtheorem{remark}[theorem]{Remark}
\newtheorem*{remark*}{Remark}
\newtheorem*{observation*}{Observation}
\newcommand{\savehyperref}[2]{\texorpdfstring{\hyperref[#1]{#2}}{#2}}
\newcommand{\Authornote}[2]{{\sffamily\small\color{red}{[#1: #2]}}}
\newcommand{\Authornotecolored}[3]{{\sffamily\small\color{#1}{[#2: #3]}}}
\newcommand{\Authorcomment}[2]{{\sffamily\small\color{gray}{[#1: #2]}}}
\newcommand{\Authorstartcomment}[1]{\sffamily\small\color{gray}[#1: }
\newcommand{\Authorfnote}[2]{\footnote{\color{red}{#1: #2}}}
\newcommand{\Authorfixme}[1]{\Authornote{#1}{\textbf{??}}}
\newcommand{\Authormarginmark}[1]{\marginpar{\textcolor{red}{\fbox{\Large #1:!}}}}
\newcommand{\Authornote}[2]{}
\newcommand{\Authornotecolored}[3]{}
\newcommand{\Authorcomment}[2]{}
\newcommand{\Authorstartcomment}[1]{}
\newcommand{\Authorfnote}[2]{}
\newcommand{\Authorfixme}[1]{}
\newcommand{\Authormarginmark}[1]{}
\newcommand{\problemmacro}[1]{\texorpdfstring{\textsc{#1}}{#1}\xspace}
\newcommand{\draftbox}{\begin{center}
  \fbox{%
    \begin{minipage}{2in}%
      \begin{center}%
          \Large\textsc{Working Draft}\\%
        Please do not distribute%
      \end{center}%
    \end{minipage}%
  }%
\end{center}
\vspace{0.2cm}}
\newcommand{\draftbox}{}
\newcommand{\clp}[1]{\ensuremath{CLP(#1)}}
\newcommand{\conf}[1]{\ensuremath{\mathcal{C}(#1)}}
\title{Lazy Local Search Meets Machine Scheduling}
\author{
Chidambaram Annamalai\thanks{Department of Computer Science,
    ETH Zurich. Email:
\href{mailto:cannamalai@inf.ethz.ch}{cannamalai@inf.ethz.ch}.}}
\date{\today}
\begin{document}

\maketitle
\draftbox
\thispagestyle{empty}

\begin{abstract}
  We study the \emph{restricted case} of \problemmacro{Scheduling on
    Unrelated Parallel Machines}. In this problem, we are given a set
  of jobs $J$ with processing times $p_j$ and each job may be
  scheduled only on some subset of machines $S_j \subseteq M$. The
  goal is to find an assignment of jobs to machines to minimize the
  time by which all jobs can be processed. In a seminal paper,
  Lenstra, Shmoys, and Tardos~\cite{lenstra1987approximation} designed
  an elegant $2$-approximation for the problem in 1987. The question
  of whether approximation algorithms with \emph{better} guarantees
  exist for this classic scheduling problem has since remained a
  source of mystery.

  \quad In recent years, with the improvement of our understanding of
  Configuration LPs, it now appears an attainable goal to design such
  an algorithm. Our main contribution is to make progress towards this
  goal. When the processing times of jobs are either $1$ or
  $\epsilon \in (0,1)$, we design an approximation algorithm whose
  guarantee tends to $1 + \sqrt{3}/2 \approx 1.8660254,$ for the
  interesting cases when $\epsilon \to 0$. This improves on the
  $2-\epsilon_0$ guarantee recently obtained by Chakrabarty, Khanna,
  and Li~\cite{chakrabarty20151} for some constant $\epsilon_0 > 0$.


\end{abstract}

\medskip
\noindent
{\small \textbf{Keywords:}
  scheduling, unrelated parallel machines, restricted assignment,
  configuration linear programs.  }


\newpage
\section{Introduction}
We study a special case of the problem of \problemmacro{Scheduling on
  Unrelated Parallel Machines}. An instance $\mathcal{I}$ of this
problem consists of machines $M$, jobs $J$, and a collection of
positive processing times $\{p_{ij}\}_{i \in M, j \in J}$ for every
machine--job pair. The goal is to assign all the jobs to the available
machines such that the \emph{makespan} of the resulting schedule is as
small as possible. The makespan of a schedule $\sigma : J \mapsto M$
is defined as \[\max_{i \in M} \sum_{j \in \sigma^{-1}(i)} p_{ij}.\]
It is one of the major open questions~\cite{williamson2011design} in
the field of approximation algorithms to better understand the
approximability of this problem. Curiously, the best known hardness
result \emph{and} approximation algorithm for the problem today were
simultaneously established in the 1987 paper of Lenstra, Shmoys and
Tardos~\cite{lenstra1987approximation}.

Given the challenging nature of the problem, the road to a better
understanding of its approximability has focused on two special cases
that each isolate two difficult aspects associated with it---on the
one hand a given job $j \in J$ may be assigned with a finite
processing time to an unbounded number of machines, and on the other
hand, its processing time may vary considerably across those
machines. In the \emph{graph balancing} case, only instances where
every job may be assigned to at most two machines with finite
processing time are considered. In the \emph{restricted} case, often
referred to as the \problemmacro{restricted assignment makespan
  minimization} problem, the input processing times obey the condition
$p_{ij}\in \{\infty, p_j\}$ for each $i \in M$ and $j \in J$, where
$p_j$ is a \emph{machine independent} processing time for job
$j$. This has the natural interpretation that each job has a fixed
processing time but may only be scheduled on some subset of the
machines. The latter special case is the focus of this work.

The elegant $2$-approximation of Lenstra et
al.~\cite{lenstra1990approximation} works by rounding extreme point
solutions to a linear program called the Assignment LP. As this linear
program has a matching integrality gap, one of the natural directions
was to develop a stronger convex relaxation for the \emph{restricted}
case. An important step forward was made by Bansal and
Sviridenko~\cite{bansal2006santa} who, among other things, introduced
the Configuration LP for the problem, which has exponentially many
decision variables. At the time, however, it was not clear if this new
linear program was indeed stronger than the Assignment LP in the sense
of a worst case integrality gap. A breakthrough in this direction was
achieved by Svensson~\cite{svensson2012santa} who proved that the
integrality of the Configuration LP is no worse than
$33/17 \approx 1.94$, and, therefore, strictly better than the
Assignment LP. Tantalizingly, however, the proof of his result did not
lead to an approximation algorithm with the same (or even similar)
guarantee. This is a fairly strange situation for a problem, as we
usually expect integrality gap upper bounds to accompany an
approximation algorithm; indeed, it is often established as a
\emph{consequence} of the latter.

The difficulties in turning the non-constructive aspects of Svensson's
result into an efficient algorithm mirror the situation faced in the
\problemmacro{restricted max-min fair allocation} problem. In the
latter problem, following a long line of
work~\cite{bansal2006santa,feige2008allocations,haeupler2011new,asadpour2012santa,polacek2012quasi,DBLP:conf/soda/AnnamalaiKS15},
a non-constructive integrality gap upper bound on the Configuration LP
by Asadpour, Feige and Saberi~\cite{asadpour2012santa} was turned into
an approximation
algorithm~\cite{DBLP:conf/soda/AnnamalaiKS15}. Although one might
reasonably hope for a similar resolution in the \emph{restricted}
case, it has proved to be elusive thus far. Indeed, there have been
works aimed at obtaining better-than-2 approximation algorithms for
special cases, and the gap between the $33/17 \approx 1.94$ (recently
improved to $11/6 \approx 1.83$~\cite{jansen2016configuration})
integrality gap upper bound and the $2$ approximation algorithm of
Lenstra et al.~\cite{lenstra1990approximation} persists. Ebenlendr,
K{\v{r}}{\'c}al and Sgall~\cite{ebenlendr2008graph} studied the case
when jobs may be assigned to at most two machines and gave a $1.75$
approximation in this case. Recently, Chakrabarty, Khanna, and
Li~\cite{chakrabarty20151} designed a $2-\epsilon_0$ approximation
algorithm, for some constant $\epsilon_0 > 0$, for the so-called
$(1,\epsilon)$-case where processing times of jobs are drawn from a
set of size two.

The special significance of $(1,\epsilon)$-case is that it already
proves to be hard from the perspective of the Configuration LP---the
best known $1.5$ factor integrality gap instances are of this
type---and seems to adequately capture the difficulty of the
\emph{restricted} case in general. It is also interesting in its own
right, and in a sense it is the simplest case of the problem that is
not yet fully understood. Indeed, the case when processing times of
all the jobs are equal can be solved optimally in polynomial
time: this amounts to finding a maximum flow in an appropriate flow
network with jobs as sources, machines as sinks, and setting the sink
capacities to be uniformly equal to some guess on the optimum
makespan.

\paragraph{Our Results.} After normalizing the job sizes, we assume
without loss of generality that the jobs are of size $1$ or $\epsilon$
for some $0 < \epsilon < 1$. Our main result is a new purely flow
based local search algorithm for the $(1,\epsilon)$-case.

\begin{theorem}\label{thm:one-epsilon}
  Let $0 < \epsilon < 1$. For an arbitrary but fixed $\zeta > 0$, the
  $(1,\epsilon)$-case of \problemmacro{restricted assignment makespan
    minimization} admits a polynomial time $1 + R(\epsilon, \zeta)$
  approximation algorithm where
  \[R(\epsilon, \zeta) \stackrel{\Delta}{=}
    \frac12 \left(\sqrt{3-2 \epsilon} + \epsilon\right) + \zeta.\]
\end{theorem}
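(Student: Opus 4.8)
The plan is to reduce the theorem to a ``promise'' version by guessing, and then to solve that version with a local-search algorithm that runs entirely on an auxiliary flow network, in the spirit of the lazy local search developed for \problemmacro{restricted max-min fair allocation}~\cite{DBLP:conf/soda/AnnamalaiKS15}. First I would reduce as follows. Since every $p_j\in\{1,\epsilon\}$, the optimal makespan $T^\star$ has the form $a+b\epsilon$ for integers $0\le a,b\le|J|$, so only polynomially many values are possible; moreover, the case in which all jobs have one size is solved optimally by a single max-flow computation, as noted in the introduction, so we may assume both sizes occur. It then suffices to build a polynomial-time routine that, given a target $T$, either returns a schedule of makespan at most $(1+R(\epsilon,\zeta))\,T$ or exhibits a certificate that no makespan-$T$ schedule exists: running it on each candidate $T$ in increasing order and keeping the first schedule returned yields a $(1+R(\epsilon,\zeta))$-approximation. (One may round the target once more to a multiple of $\epsilon$ and absorb the loss into $\zeta$.)

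\emph{The flow-based local search.} Fix the target $T$, write $T'=(1+R)\,T$, and maintain a partial schedule together with two invariants: each machine carries at most $\lfloor T'\rfloor$ size-$1$ jobs, and, for the current placement of the size-$1$ jobs, the size-$\epsilon$ jobs are routed to capacity by a max-flow computation granting machine $i$ a residual capacity of $T'$ minus the volume of size-$1$ jobs on $i$. In the binding regime $T\in[1,\,2/(1+R))$ the first invariant forces the size-$1$ jobs to form a partial matching into the machines---the very structure of an optimal makespan-$T$ schedule---while the $\epsilon$-jobs act as freely reroutable filler. Beginning from the empty schedule, repeatedly take an unassigned job and try to insert it by growing an alternating \emph{blocking} structure: from an unplaced size-$1$ job branch to its admissible machines; a machine that still has nonnegative residual capacity after also receiving this job, because its filler can be drained along the flow network, closes an augmenting sequence of moves; otherwise branch to the size-$1$ job presently on it and recurse, meanwhile tracking through the residual flow which machines are \emph{filler-blocked} (their $\epsilon$-jobs cannot be drained because they too are trapped). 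An unplaced size-$\epsilon$ job is handled symmetrically by trying to drain a machine in its admissible set.

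\emph{Correctness: no stuck state.} The heart of the argument is that as long as some job is unassigned, an augmenting sequence exists. Suppose not, and let $\mathcal N$ be the set of machines touched by the blocking structure, enlarged by the filler region it traps through the residual flow network. One checks that every size-$1$ job reachable in the structure has \emph{all} its admissible machines inside $\mathcal N$, and that the $\epsilon$-jobs trapped in the filler region can be placed only inside $\mathcal N$; summing, the work forced onto $\mathcal N$ strictly exceeds $|\mathcal N|\cdot T$, contradicting the fact that any makespan-$T$ schedule---in particular $\mathrm{OPT}$---places at most $|\mathcal N|\cdot T$ units of work on $\mathcal N$. The constant $R$ is pinned down by making this inequality go through: propagated across the layers of the blocking structure, the surplus over $|\mathcal N|\cdot T$ is governed by two competing losses relative to an optimal schedule---a machine on which the search is stuck carrying a size-$1$ job it ought not have, versus a machine on which the filler it is forced to keep falls a sliver of width at most $\epsilon$ short of capacity (an integrality effect that disappears as $\epsilon\to 0$)---and balancing them forces $R$ to satisfy $(2R-\epsilon)^2\ge 3-2\epsilon$, whose least solution is $R=\tfrac12\!\left(\sqrt{3-2\epsilon}+\epsilon\right)$, degenerating to $R^2\ge 3/4$ at $\epsilon=0$. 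Since $\mathrm{OPT}$ realizes makespan $T^\star$, the guess $T=T^\star$ admits no stuck state and the search assigns every job; for $T<T^\star$ the stuck structure is the promised certificate---in fact it witnesses infeasibility of the Configuration LP at makespan $T$.

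\emph{Running time---the main obstacle.} A naive implementation may rebuild the blocking structure exponentially many times. Following~\cite{DBLP:conf/soda/AnnamalaiKS15}, I would make the search \emph{lazy}: the blocking structure persists between insertions, only ``collapsible'' batches of moves are performed and the rest deferred, and a threshold parameter---the origin of the additive $\zeta$---classifies layers as addable versus blocking so that an amortized potential strictly decreases by a $\zeta$-dependent amount at every step, bounding the total number of moves by $n^{O(1/\zeta)}$. The hard part will be reconciling this potential argument with the counting of the previous step---choosing the layer thresholds so that the stuck-state inequality still reads $(2R-\epsilon)^2\ge 3-2\epsilon$ while the potential stays monotone---which is the most technically demanding piece and is exactly what gives the final guarantee $1+R(\epsilon,\zeta)=1+\tfrac12\!\left(\sqrt{3-2\epsilon}+\epsilon\right)+\zeta$.
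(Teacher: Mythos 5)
Your outline matches the paper's strategy in spirit — guess the Configuration-LP optimum, maintain a partial schedule, grow a layered blocking structure through flow networks, use a stuck state as a dual (Configuration LP infeasibility) certificate, and bound the iteration count by a laziness potential — and you even land on the correct constant $R=\tfrac12(\epsilon+\sqrt{3-2\epsilon})+\zeta$. But the proposal is a plan, not a proof: the two results that constitute the actual content are asserted, not established, and both are genuinely hard. First, you claim that in a stuck state ``the work forced onto $\mathcal N$ strictly exceeds $|\mathcal N|\cdot T$''; in the paper this is Theorem~\ref{thm:newlayer}, whose proof constructs an explicit dual solution to~\eqref{eq:clpdual} in four parts (layers, approximate matchings, a growing-mincut procedure with its own inner induction, and a residual part) and performs a careful amortized accounting. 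Crucially this requires an auxiliary parameter $\delta$ (distinct from $R$) controlling the layer-admission threshold; the binding inequality is, roughly, $(2R-\delta-\epsilon-1)\delta \ge 1-R$, and only after optimizing over $\delta$ does one recover $R=\tfrac12(\epsilon+\sqrt{3-2\epsilon})$. Your ``$(2R-\epsilon)^2\ge 3-2\epsilon$'' is just a restatement of that answer, not a derived constraint, and your sketch never identifies where $\delta$ comes from or why the four distinct machine classes ($L_{\le\ell}\cup I_{\le\ell}$, the approximate matching $M^{(2)}$, the grown mincut $C_{Y,f}$, the remainder $M^{(4)}$) each need a separate dual setting.

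Second, you gesture at a ``potential strictly decreases'' argument but do not say what the potential is or why the collapse operations respect it. The paper's signature vector works only because of an invariant you do not mention at all: after every \textsc{BigUpdate} and \textsc{SmallUpdate}, the flow value $|H^b_\sigma(B_{\le i}\cap M^b_\sigma, A_{i+1}\cup I_{\le i+1})|$ stays at least $|A_{i+1}|$ for every surviving layer (Theorem~\ref{thm:bigupdate}). Proving this requires the Source Alignment Lemma and the ``$D$ invariant'' — a rather delicate finite procedure for realigning two families of vertex-disjoint paths in $G^b_\sigma$ together with the canonical-decomposition properties of Proposition~\ref{prop:canonicald} — and it is exactly what lets the algorithm collapse a high-index layer without invalidating the bookkeeping for the lower-index ones. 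Without it, ``discard layers above $r{-}1$ and continue'' has no guarantee of progress. Finally, a smaller point: you start the search from the empty schedule, whereas a partial schedule in the paper must already place every big job (Definition~\ref{def:partialschedule}\ref{partialschedule:2}, via Lemma~\ref{lem:partialschedulesexist}); this asymmetry between big and small jobs — only small jobs are inserted by the local search — is what makes the layer bookkeeping tractable and is not just a cosmetic choice.
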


From the point of view of better-than-$2$ approximation algorithms,
the hard case is when $\epsilon \to 0$. For this range of values, the
approximation ratio guaranteed by Theorem~\ref{thm:one-epsilon} tends
to $1 + \sqrt{3}/2 \approx 1.87$. By balancing against a simple
algorithm based on bipartite matching we also derive an approximation
guarantee independent of the size of small jobs.

\begin{theorem}\label{thm:one-epsilonc}
  Let $0 < \epsilon < 1$. For an arbitrary but fixed $\zeta > 0$, the
  $(1,\epsilon)$-case of \problemmacro{restricted assignment makespan
    minimization} admits a polynomial time $17/9 + \zeta$
  approximation algorithm. Note that $17/9 \approx 1.89$.
\end{theorem}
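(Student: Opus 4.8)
The plan is to run two algorithms on the instance and return the schedule of smaller makespan. The first is the $\bigl(1+R(\epsilon,\zeta)\bigr)$-approximation of Theorem~\ref{thm:one-epsilon}. For fixed $\zeta$ the function $1+R(\epsilon,\zeta)=1+\tfrac12\bigl(\sqrt{3-2\epsilon}+\epsilon\bigr)+\zeta$ is increasing in $\epsilon$ on $(0,1)$ (its $\epsilon$-derivative $\tfrac12\bigl(1-1/\sqrt{3-2\epsilon}\bigr)$ is positive there), and at $\epsilon=\tfrac19$ one has $\sqrt{3-2\epsilon}=\tfrac53$, so $1+R(\tfrac19,\zeta)=1+\tfrac12\bigl(\tfrac53+\tfrac19\bigr)+\zeta=\tfrac{17}{9}+\zeta$. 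Hence this algorithm already achieves ratio $\tfrac{17}{9}+\zeta$ whenever $\epsilon\le\tfrac19$, and it remains to provide a second, simple algorithm whose ratio is at most $\tfrac{17}{9}$ whenever $\epsilon\ge\tfrac19$; taking the better of the two schedules then proves the theorem.

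For the second algorithm, guess the optimal makespan $T$: the optimum has the form $a+b\epsilon$ with integers $0\le a,b\le|J|$, so we may enumerate the polynomially many candidates and keep the smallest one for which the procedure below succeeds. If the instance has no big job then after rescaling all jobs have the same size, and the optimum is found exactly by binary search together with a maximum-flow feasibility test, so assume there is a big job and hence $T\ge1$. If $T\ge\tfrac98$, apply the Lenstra--Shmoys--Tardos rounding of the assignment/flow relaxation to obtain makespan at most $T+1$, which is at most $\tfrac{17}{9}T$ precisely because $T\ge\tfrac98$. The remaining regime is $1\le T<\tfrac98$, where every machine of the optimal schedule carries at most one big job (two would contribute load $2>T$); there I aim for makespan at most $T+1-\epsilon$, which suffices since $T+1-\epsilon\le\tfrac{17}{9}T$ whenever $T\ge1$ and $\epsilon\ge\tfrac19$.

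To reach makespan $T+1-\epsilon$ when $1\le T<\tfrac98$, I would solve a single flow/matching problem that simultaneously (i)~assigns every big job to a distinct machine and (ii)~packs all small jobs, subject to the rule that a machine receiving a big job may hold at most $\lfloor(T-\epsilon)/\epsilon\rfloor$ small jobs while a machine without one may hold at most $\lfloor(T+1-\epsilon)/\epsilon\rfloor$; any feasible solution then has every load at most $\max\bigl\{\,1+\lfloor(T-\epsilon)/\epsilon\rfloor\epsilon,\ \lfloor(T+1-\epsilon)/\epsilon\rfloor\epsilon\,\bigr\}\le T+1-\epsilon$. Such a solution exists because the optimal schedule provides one: its big jobs already form a matching, and the numbers of small jobs it assigns---at most $\lfloor(T-1)/\epsilon\rfloor$ on a machine with a big job and at most $\lfloor T/\epsilon\rfloor$ on any other---do not exceed the stated capacities, using $\epsilon\le1$. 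The two decisions genuinely have to be coupled: if one fixed an arbitrary saturating big-job matching first, a big job could land on a machine that the optimal schedule fills with $\lfloor T/\epsilon\rfloor$ small jobs and strand a small job admissible nowhere else.

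I expect the main obstacle to be realizing this coupled problem as a genuine polynomial-time flow or matching instance. Representing a big job by an arc whose capacity equals its ``footprint'' on a machine makes the capacity bookkeeping transparent, but a maximum integral flow may then split a big job across several machines, and forcing each big job onto a single machine is not itself a flow constraint; resolving this, given that the two job sizes are incommensurable in general, seems to require either a dedicated network gadget or an extra rounding step handling the few split big jobs. The remaining pieces---the monotonicity computation pinning the balance point at $\epsilon=\tfrac19$, the Lenstra--Shmoys--Tardos fallback for $T\ge\tfrac98$, the enumeration of $T$, and the equal-size subcase---are routine.
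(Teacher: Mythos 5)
Your high-level strategy---run the algorithm of Theorem~\ref{thm:one-epsilon} with $\zeta'=\zeta$, run a second simple algorithm, and take the better schedule---is exactly the paper's, and your computation pinning the balance point at $\epsilon=1/9$ (where $1+R(1/9,\zeta)=17/9+\zeta$ and $1+R$ is increasing in $\epsilon$) is correct. The gap is entirely in the second algorithm, where you yourself concede you do not know how to realize the coupled ``big-job matching plus small-job packing'' as a polynomial-time problem, and you worry about a flow splitting a big job across machines.

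The paper's second algorithm (Theorem~\ref{thm:one-epsilonb}) is a uniform $(2-\epsilon)$-approximation via a single bipartite matching, and it dissolves the obstacle you raise. For each machine create $\lfloor \texttt{OPT}/\epsilon-1\rfloor$ ``small'' nodes and one ``big'' node, on the other side put the jobs; a small job is adjacent to every node of every machine in its $\Gamma(j)$, a big job only to the big nodes of machines in its $\Gamma(j)$, and one seeks a matching saturating all jobs. Because each job is a single vertex that must be matched to exactly one machine-node, there is no possibility of a big job splitting---it is a matching, not a capacitated flow. The optimal schedule certifies existence: a machine carrying a big job has at most $\lfloor(\texttt{OPT}-1)/\epsilon\rfloor\le\lfloor \texttt{OPT}/\epsilon-1\rfloor$ small jobs (using $\epsilon\le1$), and a machine without one has at most $\lfloor \texttt{OPT}/\epsilon\rfloor=\lfloor \texttt{OPT}/\epsilon-1\rfloor+1$ small jobs, which fit in its slots. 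Any saturating matching then gives every machine load at most $\lfloor \texttt{OPT}/\epsilon-1\rfloor\epsilon+1\le \texttt{OPT}-\epsilon+1\le(2-\epsilon)\texttt{OPT}$. This works for every $T\ge1$, so the case split on $T\ge 9/8$ with a Lenstra--Shmoys--Tardos fallback, and the special handling of big-job-free instances, are both unnecessary. The coupling you correctly identified as essential is handled implicitly by letting big and small jobs compete for the same big node; the missing idea is the machine-to-slots expansion that turns the coupled packing into a pure bipartite matching.
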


\paragraph{Our Techniques.} We now give a very high level overview of
the ideas behind the proof of Theorem~\ref{thm:one-epsilon} assuming
that the optimum makespan is $1$ for simplicity.

Our local search algorithm continually increases the number of jobs
scheduled by an assignment
$\sigma : M \mapsto J \cup \{\text{TBD}\}$\footnote{TBD for ``to be
  decided''} that satisfies the $1+R$ makespan bound i.e.,
$\sum_{j \in \sigma^{-1}(i)} p_{ij} \leq 1 + R$ for each $i \in
M$. The algorithm takes a job $j_0$ such that
$\sigma(j_0) = \text{TBD}$ and attempts to assign it to one of the
machines $M$ while respecting the makespan bound of $1+R$. In general,
it may be required to modify the assignment $\sigma$, which we call a
partial schedule, before $j_0$ can be successfully assigned along with
the rest of the jobs in $\sigma^{-1}(M)$. The algorithm identifies a
set of machines $M_0$ such that sufficiently reducing the load on any
of the machines $i \in M_0$ suffices to assign $j_0$
successfully. Once again, to reduce the load on one of the machines in
$M_0$ a new set of machines $M_1$ that is disjoint from $M_0$ is
identified. In this way, in general there will be a sequence of
disjoint machine sets $M_0, M_1, \dots, M_\ell$ such that reducing the
load on some $M_i$ allows the load to be reduced on some machine in
$M_0 \cup \ldots \cup M_{i-1}$. At some point, it may turn out that a
lot of machines in $M_\ell$ have very little load, call them
\emph{free} machines, and therefore, it is possible for many of the
jobs currently scheduled on some machine in
$M_0 \cup \ldots \cup M_{\ell-1}$ to be relocated to free machines in
$M_\ell$, which represents progress towards our goal of eventually
scheduling $j_0$.

The first property we require is \emph{large progress} which ensures
that many free machines in $M_\ell$ implies that proportionally many
jobs from $\sigma^{-1}(M_0 \cup \ldots \cup M_{\ell-1})$ can be
relocated. Further, such relocations should be performed in a way as
to maintain the same property for the machine sets of smaller indices.

Large progress by itself would not guarantee that the algorithm
terminates quickly if it does not happen often enough. To ensure that
we find many free machines \emph{frequently}, a second property
ensures $|M_i| \geq \mu |M_0 \cup \ldots \cup M_{i-1}|$ so that large
progress happens at least once every $O_\mu(\log |M|)$ sets that are
encountered.

These two properties--and maintaining them as the partial schedule is
continually modified by the algorithm--roughly correspond to the core
technical difficulties of our approach. The proof of the first
property (see Section~\ref{sec:maintainingmulti}) makes fairly extensive
use of properties of maximum flows, while the second property (see
Section~\ref{sec:proportionally}) is based on new ideas for constructing
dual unboundedness certificates for the Configuration LP. It is
interesting to note that the construction of such certificates in the
analysis, to prove the required property about the local search
algorithm, effectively amounts to a second algorithm that is merely
used to determine an assignment of values to the dual variables in the
proof.

Our final algorithm is a highly structured local search that performs
job relocations \emph{only} through flow computations in two kinds of
flow networks. In essence, our approach can be seen as effectively
reducing the scheduling problem we started with to polynomially many
maximum flow computations in a structured fashion. The sets alluded to
earlier correspond to certain ``reachability graphs'' associated with
$\epsilon$ jobs whereas the number of such sets at any time is bounded
by a function of the number of $1$ jobs in the instance. We also
stress that identifying such sets, which allow our analyses to prove
these properties, requires a careful design of the algorithm in the
first place, and is not clear at the outset that this can be achieved.






\paragraph{Organization.} The rest of the paper is organized as
follows. In Section~\ref{sec:prelims}, we briefly go over some
notation and state the Configuration LP for our problem. In
Section~\ref{sec:flownetworks} we explain some of the basic concepts
such as flow networks and subroutines for job relocations used in the
final local search algorithm. The algorithm and its running time
analysis are presented in Sections~\ref{sec:lls}
and~\ref{sec:running-time} respectively, followed by the proof of the
main theorem in the paper in Section~\ref{sec:mainthm}. Proofs of some
statements missing in the body of the paper appear in
Appendix~\ref{sec:appendix}.



\section{Preliminaries}\label{sec:prelims}

\subsection{Notation and Conventions}\label{sec:notation}
Let $\mathcal{I}$ be the given instance of \problemmacro{restricted
  assignment makespan minimization}. By scaling all the processing
times in the instance, we assume without loss of generality that
$p_{\max} \stackrel{\Delta}{=} \max \{p_j \; | \; j \in J\} =
1$. We use \texttt{OPT} to denote the optimal makespan for
$\mathcal{I}$.

For a subset of jobs $S \subseteq J$, we use the
notation $p(S)$ and $p_i(S)$ to refer to $\sum_{j \in S} p_j$ and
$\sum_{j \in S} p_{ij}$ respectively. We define
$\Gamma: J \mapsto 2^M$ to be a function that maps each job $j \in J$
to the set of all machines that it can be assigned to with a finite
processing time. The input processing times
$\{p_{ij}\}_{i \in M, j \in J}$ and $\{p_j\}_{j \in J}$ satisfy
\[p_{ij} =
  \left. \begin{cases}
      p_j, &\text{if} \; i \in \Gamma(j),\\
      \infty, &\text{else}.
    \end{cases} \right.  \quad \forall i \in M\; \forall j \in J.
\]
For a collection of indexed sets $\{S_0,\dots, S_\ell\}$ and 
$0 \leq i \leq \ell$ we use the notation $S_{\leq i}$ to refer to the
set $\bigcup_{j=0}^i S_j$.

\subsection{The Configuration Linear Program}\label{subsec:clp}

The Configuration LP is a feasibility linear program
parametrized by a guess $\tau$ on the value of the optimal makespan for
$\mathcal{I}$, and is simply denoted by $\clp{\tau}$. A
\emph{configuration} for machine $i$ is a set of jobs with a total
processing time at most $\tau$ on machine $i$. The collection of all
such configurations for $i$ is denoted as $\conf{i,
  \tau}$. $\clp{\tau}$ ensures that each machine receives at most one
configuration fractionally, while ensuring that every job is assigned,
also in the fractional sense. The constraints of $\clp{\tau}$ are
described in \eqref{eq:clp}.
\begin{align}
  \begin{split}
    \sum_{C \in \conf{i, \tau}} x_{i C} &\leq 1, \quad \forall i \in M,\\
    \sum_{i \in M} \sum_{C \in \conf{i, \tau} \; : \; j \in C}
    x_{i C}
    &\geq 1, \quad \forall j \in J,\\
    x &\geq 0.
  \end{split}\label{eq:clp}
\end{align}

We can also write the dual of $\clp{\tau}$ as follows.
\begin{align}
  \begin{split}
    \max \; &  \sum_{j \in J} z_j - \sum_{i \in M} y_i\\
    y_i & \geq \sum_{j \in C} z_j, \quad \forall i \in M \; \text{and} \; C \in \conf{i,\tau}, \\
    y, z & \geq 0.
  \end{split}\label{eq:clpdual}
\end{align}

Let $\tau^*$ be the smallest value of $\tau$ for which $\clp{\tau}$ is
feasible. We refer to $\tau^*$ as the value of the Configuration
LP. Observe that $\tau^*$ is a lower bound on \texttt{OPT}. As
$p_{\max} = 1$, $\tau^*$ must be at least $1$.

\section{The $(1, \epsilon)$ Case}\label{sec:one-epsilon}

Let $0 < \epsilon < 1$. In the $(1,\epsilon)$-case of
\problemmacro{restricted assignment makespan minimization}, jobs
$j \in J$ have one of only two possible sizes: $1$ or $\epsilon$. We
partition the jobs accordingly into the sets
$J_b \stackrel{\Delta}{=} \{j \in J \; | \; p_j = 1\}$ and
$J_s \stackrel{\Delta}{=} J \setminus J_b$, which we will refer to as
the sets of \emph{big} and \emph{small} jobs respectively.

For the rest of the section fix some $0 < \epsilon < 1$ and
$\zeta > 0$. As $\epsilon$ and $\zeta$ are fixed constants, we refer
to $R(\epsilon, \zeta)$ as simply $R$. To prove
Theorem~\ref{thm:one-epsilon} we describe an algorithm that terminates
in polynomial time with a schedule of makespan at most $\tau^*+R$ for
the given instance $\mathcal{I}$. This algorithm, described in
Section~\ref{sec:lls}, is a local search algorithm which continually
increases the size of a \emph{partial schedule}.

\begin{definition}[Partial schedule]\label{def:partialschedule}
  A partial schedule is a map $\sigma:J \mapsto M \cup \{\text{TBD}\}$
  such that
  \begin{enumerate}[label=(\alph*)]
  \item\label{partialschedule:1} $\forall \; i \in M$, $p_i(\sigma^{-1}(i)) \leq \tau^* + R$,
  \item\label{partialschedule:2} $J_b \subseteq \sigma^{-1}(M)$, and
  \item\label{partialschedule:3} $\forall \; i \in M$, $|\sigma^{-1}(i) \cap J_b| \leq 1$.
  \end{enumerate}

  The \emph{size} of a partial schedule
  $\sigma$ is $|\sigma^{-1}(M)|$.
\end{definition}

\begin{remark}
  Following the description and analysis of our algorithms, it will
  become clear that solving the Configuration LP on the input instance
  $\mathcal{I}$, in order to determine $\tau^*$, is not necessary. For
  the moment, however, it may be assumed that it is somehow known. We
  remark that $\tau^*$ can be computed in polynomial time upto any
  desired accuracy $\nu > 0$ by using the ellipsoid algorithm with an
  appropriate separation oracle~\cite{bansal2006santa}.
\end{remark}

Of course, a partial schedule $\sigma$ of size $|J|$ is a schedule of
makespan at most $\tau^*+R$ for our instance $\mathcal{I}$. The following
statement ensures that partial schedules exist in the first place.
\begin{lemma}\label{lem:partialschedulesexist}
  Suppose $1 \leq \tau^* < 2$. Then, there is a map from $J_b$ to $M$
  such that i) no two big jobs are mapped to the same machine, and ii)
  every big job $j \in J_b$ is mapped to a machine $i_j \in M$ such
  that $i_j \in \Gamma(j)$. Furthermore, such a map can be computed in
  polynomial time.
\end{lemma}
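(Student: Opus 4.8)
The plan is to show that the existence of such a map follows from Hall's theorem, and that the required Hall condition is a consequence of the feasibility of $\clp{\tau^*}$ (equivalently, of the fact that $\tau^* < 2$, so every big job can "afford" a fresh machine in any fractional solution).

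**First I would** set up the bipartite graph $H$ with $J_b$ on one side and $M$ on the other, placing an edge $\{j, i\}$ whenever $i \in \Gamma(j)$. A map with the two desired properties is precisely a perfect matching of $J_b$ into $M$ in $H$. By Hall's theorem it suffices to verify that $|\Gamma(S)| \ge |S|$ for every $S \subseteq J_b$, where $\Gamma(S) = \bigcup_{j \in S} \Gamma(j)$. **The key step** is to derive this from $\clp{\tau^*}$ feasibility: fix a feasible solution $x$ to $\clp{\tau^*}$. Since $\tau^* < 2$ and every big job has size $1$, any configuration $C \in \conf{i,\tau^*}$ containing a big job $j \in S$ can contain no other big job (two big jobs have total size $2 > \tau^*$); hence for each machine $i$, $\sum_{C \in \conf{i,\tau^*}:\, C \cap S \neq \emptyset} x_{iC} \le 1$. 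Summing the covering constraints $\sum_{i} \sum_{C \ni j} x_{iC} \ge 1$ over all $j \in S$, and noting that only machines in $\Gamma(S)$ carry configurations meeting $S$, gives $|S| \le \sum_{j \in S} \sum_{i \in \Gamma(S)} \sum_{C \ni j} x_{iC} \le \sum_{i \in \Gamma(S)} \sum_{C \cap S \neq \emptyset} x_{iC} \le |\Gamma(S)|$, which is exactly Hall's condition.

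**For the algorithmic part**, I would observe that the map can be found in polynomial time either by computing a maximum bipartite matching in $H$ (which saturates $J_b$ by the argument above) or, avoiding any reference to solving the Configuration LP, by a direct max-flow computation: build a unit-capacity flow network with a source feeding each big job, each big job pointing to its allowed machines, and each machine feeding a sink with capacity $1$; a feasible schedule of makespan $\tau^* < 2$ in particular assigns the big jobs to distinct machines, so the corresponding integral flow of value $|J_b|$ exists, and max-flow finds it. **The main obstacle**, such as it is, is purely expository: making precise why $\tau^* < 2$ forces every configuration to hold at most one big job and threading that through the Hall/flow argument cleanly; there is no real combinatorial difficulty, since this is essentially the classical observation that the restricted assignment problem with all sizes equal reduces to bipartite matching. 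I would keep the proof short, presenting the Hall-condition verification as the single substantive line.
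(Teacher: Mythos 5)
Your main argument is correct and lands in the same place as the paper's, but you travel the opposite side of LP duality. You verify Hall's condition directly from a feasible primal solution $x$ of $\clp{\tau^*}$: each configuration contains at most one big job because $\tau^* < 2$, so summing the covering constraints over $S$ and exchanging sums gives $|S| \le \sum_{i \in \Gamma(S)}\sum_{C : C \cap S \neq \emptyset} x_{iC} \le |\Gamma(S)|$. The paper instead assumes Hall fails for some $S$, constructs the explicit dual assignment $y^*_i = \mathbf{1}[i \in N_G(S)]$, $z^*_j = \mathbf{1}[j \in S]$, checks feasibility for \eqref{eq:clpdual} (again using that no configuration holds two big jobs), and concludes from the positive objective $|S| - |N_G(S)|$ that the dual is unbounded, so $\clp{\tau^*}$ is infeasible --- a contradiction. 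The two are logically equivalent (contrapositives of each other under weak duality), and both hinge on the single observation about $\tau^* < 2$. The paper's dual-certificate style is deliberately consistent with the much heavier dual construction in the proof of Theorem~\ref{thm:newlayer}, while your primal computation is arguably the more elementary self-contained route and is perfectly acceptable here.

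One small flaw in your secondary remark: the alternative flow argument that ``a feasible schedule of makespan $\tau^* < 2$ in particular assigns the big jobs to distinct machines'' presupposes that an integral schedule of makespan $\tau^*$ exists. But $\tau^*$ is only a lower bound on \texttt{OPT}, so no such integral schedule need exist. Drop that aside; the bipartite matching route you give first (existence from Hall via CLP feasibility, then a polynomial-time maximum matching computation) is the correct and complete argument for the algorithmic claim.
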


\subsection{Flow Networks for Job Relocations}\label{sec:flownetworks}

Let $\sigma$ be some partial schedule. We now define several
quantities whose description depends on $\sigma$.

\begin{definition}[Job Assignment Graphs]\label{def:jobassignmentgraphs}
  $G_\sigma=(M \cup J, E)$ is a directed bipartite graph with machines
  and jobs in $\mathcal{I}$ as vertices. The edge set of $G_\sigma$ is
  defined as
  \begin{align*}
  E \stackrel{\Delta}{=} \{ (i, j) \;|\; \exists \; i \in M, j \in J \; : \; \sigma(j)
    = i\} \cup \{ (j, i) \;|\; \exists i \in M, j \in J \; : \;
    \sigma(j) \neq i, \;i \in \Gamma(j)\}.
  \end{align*}
  We define the graph of small job assignments
  $G_\sigma^s\stackrel{\Delta}{=}G_\sigma\setminus J_b$ and the graph
  of big job assignments
  $G_\sigma^b\stackrel{\Delta}{=}G_\sigma\setminus J_s$.
\end{definition}

\begin{definition}[Big and Small Machines]\label{def:bigandsmallmachines}
  Let
  $M_\sigma^b \stackrel{\Delta}{=} \{i \in M \;|\;\sigma^{-1}(i) \cap J_b \neq
  \emptyset\}$ and $M_\sigma^s \stackrel{\Delta}{=} M \setminus M_\sigma^b$, which we
  refer to as \emph{big machines} and \emph{small machines}
  respectively.
\end{definition}

We need to define two flow networks which will facilitate the movement
of the two kinds of jobs we have in our instance $\mathcal{I}$. We
will speak of \emph{the} maximum flow in these flow networks, even
though it may not necessarily be unique. In such cases it is
implicitly assumed that there is fixed rule to obtain a particular
maximum flow given a flow network. We also assume that flow
decompositions of flows in such networks contain only source to sink
paths (no cycles). First, we define the flow network for big jobs.

\begin{definition}[Flow Network for Big Jobs]\label{def:bigflownetwork}
  For collections of machines $S \subseteq M_\sigma^b$, and
  $T \subseteq M_\sigma^s$, the flow network $H_\sigma^b(S, T)$ is
  defined on the directed graph $G^b_\sigma$ as follows. Each machine
  to job arc has a capacity of $1$ whereas all other arcs have
  infinite capacity. $S$ and $T$ are the sources and sinks
  respectively in this flow network. Sinks have vertex capacities of
  $1$. The value of maximum flow in this flow network is denoted as
  $|H^b_\sigma(S, T)|$.
\end{definition}

The interpretation of the flow network $H^b_\sigma(S, T)$ is that it
allows us to compute the maximum number of vertex disjoint paths in
the graph $G^b_\sigma$ between the sets of vertices $S$ and $T$.

\begin{proposition}\label{prop:vertexdisjointpaths}
  For any $S \subseteq M_\sigma^b$ and $T \subseteq M_\sigma^s$, there
  are $|H^b_\sigma(S, T)|$ vertex disjoint paths in
  $G^b_\sigma$ with sources in $S$ and sinks in $T$.
\end{proposition}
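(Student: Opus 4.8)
The plan is to reduce the claim to a standard application of the max-flow/min-cut theorem together with a vertex-splitting gadget, which is exactly what the flow network $H^b_\sigma(S,T)$ encodes. The essential observation is that $H^b_\sigma(S,T)$ is built to count vertex-disjoint paths, not merely edge-disjoint ones: the only finite capacities are the machine-to-job arcs (capacity $1$) and the sink vertex capacities (capacity $1$), while every other arc has infinite capacity. So first I would make the vertex capacities explicit by the usual transformation: replace each job vertex $j$ (and each sink vertex in $T$) by an in-copy and an out-copy joined by a unit-capacity arc. In $G^b_\sigma$ the only arcs into a job $j$ are of the form $(i,j)$ with $\sigma(j)=i$, and since a partial schedule assigns at most one big job per machine, each job vertex has in-degree and these arcs already carry capacity $1$; hence the job vertices are automatically ``unit-capacity'' and no path can use a job vertex twice. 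The genuine vertex-capacity constraint that needs the gadget is on the sinks $T$.

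Next I would invoke the integral max-flow theorem: since all capacities are integral (in fact $0/1$ or $\infty$, but no $\infty$-capacity arc can be saturated in a finite flow, and a standard argument shows the max flow value is finite and bounded by $|T|$), there is an integral maximum flow $f$ of value $|H^b_\sigma(S,T)|$. Decompose $f$ into source-to-sink paths; by our convention flow decompositions contain no cycles, so $f$ decomposes into exactly $|H^b_\sigma(S,T)|$ unit-capacity $S$-to-$T$ paths. I then argue these paths are vertex-disjoint: two distinct paths cannot share a machine-to-job arc (capacity $1$), cannot share a sink vertex (vertex capacity $1$), and cannot share an intermediate job vertex (each job vertex lies on a unit-capacity in/out arc after splitting, equivalently has its unique incoming arc of capacity $1$). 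A path also cannot revisit a vertex, by acyclicity of the decomposition. Finally, distinct source machines in $S$: a path starts at a source $i\in S$ and the first arc is some $(i,j)$ with $\sigma(j)=i$; since each $j$ has a unique such predecessor, and the $(i,j)$ arcs are saturated to capacity $1$, two paths leaving the same source would have to reuse an arc — so the sources are distinct as well, giving genuinely vertex-disjoint paths in $G^b_\sigma$ with distinct endpoints in $S$ and in $T$.

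The converse direction — that $|H^b_\sigma(S,T)|$ paths is the \emph{maximum}, i.e. that one cannot find more vertex-disjoint $S$-$T$ paths — is not needed for the statement as phrased (it only asserts existence of that many paths), but I would remark on it anyway since it is immediate: any family of $k$ vertex-disjoint $S$-$T$ paths routes $k$ units of flow respecting all arc and vertex capacities, so $k\le|H^b_\sigma(S,T)|$; thus the quantity is exactly the max number of vertex-disjoint paths, justifying the interpretation stated just before the proposition.

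The only mildly delicate point — and the one I would be most careful about — is the bookkeeping around the $\infty$-capacity arcs and the sources: one must confirm the max flow is finite (it is, bounded by $\sum_{t\in T}1=|T|$ via the cut separating $T$ from everything else) and that ``at most one big job per machine'' in Definition~\ref{def:partialschedule}\ref{partialschedule:3} is what forces each job vertex to behave as a capacity-$1$ vertex, so that path-disjointness at jobs comes for free rather than needing a separate gadget. Everything else is the textbook integral-flow/path-decomposition argument.
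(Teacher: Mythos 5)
Your approach matches the paper's: take an integral max flow, decompose it into $|H^b_\sigma(S,T)|$ source-to-sink paths, and argue vertex-disjointness from the unit capacities. The decomposition step and your arguments for sinks (vertex capacity~$1$) and for job vertices (unique incoming arc of capacity~$1$) are correct.

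What is missing is a clean argument that two paths cannot meet at a machine vertex $i \notin T$, including the case of a shared source. That requires the observation, which the paper states explicitly, that by Definition~\ref{def:partialschedule}\ref{partialschedule:3} each machine has \emph{at most one outgoing} machine-to-job arc in $G^b_\sigma$; combined with the unit capacity of that arc, two paths through $i$ would have to collide on it. You instead invoke the at-most-one-big-job-per-machine condition to justify \emph{job in-degree} one (``since a partial schedule assigns at most one big job per machine, each job vertex has in-degree [one]''), but job in-degree $1$ follows simply from $\sigma$ being a function: the unique incoming arc to $j$ is $(\sigma(j),j)$. Without the machine out-degree bound, ``paths cannot share a unit-capacity machine-to-job arc'' does not by itself forbid two paths crossing at a non-sink machine that might a priori have two distinct outgoing arcs, and your argument for distinct sources (``since each $j$ has a unique such predecessor \dots two paths leaving the same source would have to reuse an arc'') is similarly reasoning from the job side when the source side is what is needed. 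Add the one-line out-degree observation and your proof is complete and essentially identical to the paper's.
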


These vertex disjoint paths suggest an update of the partial schedule
$\sigma$ in the natural way. Algorithm~\ref{alg:bigupdate} formalizes
the description of this task and Proposition~\ref{prop:bigupdate}
follows easily.

\begin{algorithm}[H]
  \caption{\texttt{BigUpdate}$(\sigma, X)$: Update $\sigma$ using flow
    paths in $X$.}
  \label{alg:bigupdate}
  \begin{algorithmic}
    \Require $\sigma$ is a partial schedule and $X$ is a flow in
    $H^b_\sigma(M^b_\sigma, M^s_\sigma)$ where \[\forall f \in X, \quad
    p(\sigma^{-1}(f \cap M^s_\sigma)) \leq \tau^*+R-1.\]
    \State $P \gets \text{Vertex disjoint paths corresponding to $X$
      as ensured by Proposition~\ref{prop:vertexdisjointpaths}}$.
    \ForAll{$p = i_0, j_0,\dots, j_{k_p-1}, i_{k_p} \in P$}
    \For{$\ell = 0, \dots, k_p-1$}
    \State $\sigma(j_\ell) \gets i_{\ell+1}$.
    \EndFor
    \EndFor
    \State \Return $\sigma$.
  \end{algorithmic}
\end{algorithm}

\begin{proposition}
  \label{prop:bigupdate}
  For any partial schedule $\sigma$, and flow $X$ in
  $H^b_\sigma(M^b_\sigma, M^s_\sigma)$ such that $\forall f \in X$,
  $p(\sigma^{-1}(f \cap M^s_\sigma)) \leq  \tau^*+R-1$,
  $\Call{BigUpdate}{\sigma, X}$ returns a partial schedule $\sigma'$ such that
  \begin{enumerate}[label=(\alph*)]
  \item\label{bigupdate:1} $\sigma'^{-1}(M) = \sigma^{-1}(M)$, and
  \item\label{bigupdate:2} $\forall f = i_0, j_0, \dots, j_{k-1}, i_k \in X, \; \sigma'^{-1}(i_0) \cap J_b = \emptyset$.
  \end{enumerate}
\end{proposition}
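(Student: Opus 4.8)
The plan is to unpack what the flow paths mean combinatorially, confirm that \texttt{BigUpdate} is well-defined on them, and then check, in order, that $\sigma'$ is a partial schedule and that it satisfies \ref{bigupdate:1} and \ref{bigupdate:2}.

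\emph{Structure of the paths.} First I would record what a path $f=i_0,j_0,\dots,j_{k-1},i_k\in P$ looks like inside $G^b_\sigma=G_\sigma\setminus J_s$. As that graph has no small-job vertex, the $j_\ell$ are big jobs, and reading off the arcs of $G_\sigma$ — $(i,j)$ exists only when $\sigma(j)=i$, $(j,i)$ only when $\sigma(j)\ne i$ and $i\in\Gamma(j)$ — gives $\sigma(j_\ell)=i_\ell$ and $i_{\ell+1}\in\Gamma(j_\ell)\setminus\{i_\ell\}$ for $0\le\ell<k$. Hence each machine $i_\ell$ with $0\le\ell<k$ hosts the big job $j_\ell$, so $i_0,\dots,i_{k-1}\in M_\sigma^b$; since also $i_k\in M_\sigma^s$ (sinks lie in $T=M_\sigma^s$), the \emph{only} small machine on $f$ is its sink, i.e.\ $f\cap M_\sigma^s=\{i_k\}$. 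I would also note that the paths in $P$ are vertex-disjoint (Proposition~\ref{prop:vertexdisjointpaths}), and that — because in $G^b_\sigma$ every machine has at most one out-arc (to its unique assigned big job) and every job a single in-arc — $P$ is exactly the source-to-sink path decomposition of $X$; so we may freely identify the $f\in X$ of the statement with the matching $f\in P$. Consequently \texttt{BigUpdate} reassigns every job at most once and $\sigma'$ is independent of the order in which paths are processed.

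\emph{$\sigma'$ is a partial schedule.} Along each $f\in P$ the update replaces $\sigma(j_\ell)=i_\ell$ by $\sigma'(j_\ell)=i_{\ell+1}$ for $0\le\ell<k$; its effect on the machines of $f$ is that the source $i_0$ loses $j_0$ and gains nothing, each intermediate $i_\ell$ ($0<\ell<k$) loses $j_\ell$ and gains $j_{\ell-1}$ (both of size $1$, the new assignment legal since $i_\ell\in\Gamma(j_{\ell-1})$), and the sink $i_k$ gains $j_{k-1}$ and loses nothing; machines on no path of $P$ are untouched. This gives Definition~\ref{def:partialschedule}\ref{partialschedule:3} at once: the big-job count per machine is unchanged, except that it drops to $0$ on sources and rises from $0$ to $1$ on sinks (a sink is a small machine, so had no big job). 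For Definition~\ref{def:partialschedule}\ref{partialschedule:1}, sources and intermediate machines do not increase their load, while a sink $i_k$ previously carried $p(\sigma^{-1}(i_k))=p(\sigma^{-1}(f\cap M_\sigma^s))\le\tau^*+R-1$ by the \textbf{Require} clause, so afterwards it carries $p(\sigma^{-1}(i_k))+1\le\tau^*+R$; this is the one place the hypothesis on $X$ enters. Finally every job \texttt{BigUpdate} touches was already assigned to some machine $i_\ell$ and is reassigned to another machine $i_{\ell+1}$ — nothing enters or leaves TBD — so $\sigma'^{-1}(M)=\sigma^{-1}(M)$, which is property~\ref{bigupdate:1} and also settles Definition~\ref{def:partialschedule}\ref{partialschedule:2} via $J_b\subseteq\sigma^{-1}(M)=\sigma'^{-1}(M)$.

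\emph{Property \ref{bigupdate:2}.} Fix $f=i_0,j_0,\dots,j_{k-1},i_k\in X$ and consider its source $i_0$. Under $\sigma$ its only big job was $j_0$ (Definition~\ref{def:partialschedule}\ref{partialschedule:3}), and $\sigma'(j_0)=i_1\ne i_0$ removes it; and no big job is newly placed on $i_0$, since the machines that receive a big job are exactly the non-source machine-vertices $i_{\ell+1}$ along the various paths, and $i_0$, being the source of $f$, is by vertex-disjointness not one of them. Hence $\sigma'^{-1}(i_0)\cap J_b=\emptyset$.

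\emph{Where the difficulty is.} None of the steps is deep; the only one needing real care is the load check on sink machines, and there the content is the structural observation that on each flow path the unique small machine is the sink — this is what turns the per-path \textbf{Require} bound $p(\sigma^{-1}(f\cap M_\sigma^s))\le\tau^*+R-1$ into a bound on the single machine whose load grows under the update. Everything else is bookkeeping riding on vertex-disjointness of $P$ and on the invariant that each machine holds at most one big job.
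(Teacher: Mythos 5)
Your proof is correct, and it matches the paper's intent exactly: the paper gives no argument for this proposition (it only says it ``follows easily'' from Proposition~\ref{prop:vertexdisjointpaths} and the description of Algorithm~\ref{alg:bigupdate}), and your write-up supplies precisely the expected bookkeeping — identifying the sink as the unique small-machine vertex on each path so the \textbf{Require} bound applies there, checking the three partial-schedule conditions under the shift, and using vertex-disjointness both for well-definedness and for emptying the sources of big jobs.
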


Now we define the flow network for small jobs.

\begin{definition}[Flow Network for Small Jobs]\label{def:smallflownetwork}
  For two \emph{disjoint} collections of machines
  $S \subseteq M_\sigma^s$ and $T \subseteq M$, the flow network
  $H^s_\sigma(S, T)$ is defined on the directed graph $G^s_\sigma$ as
  follows. The arcs going from machines to jobs have capacity
  $\epsilon$ while arcs going from jobs to machines have infinite
  capacity. $S$ and $T$ are the sources and sinks respectively in this
  flow network. The sinks have \emph{vertex capacities} are set as
  follows:
  \begin{align*}
    \forall i \in T, \quad c(i) = \left. \begin{cases}
        1+\tau^*+R-p(\sigma^{-1}(i))-\epsilon, &\text{if} \; i \in M_\sigma^b,\\
        \tau^*+R-p(\sigma^{-1}(i)), &\text{else}.
      \end{cases} \right.
  \end{align*}
  The value of the maximum flow in this network is denoted as
  $|H^s_\sigma(S, T)|$.
\end{definition}

By construction it is clear that the maximum flow in both flow
networks~(\ref{def:bigflownetwork}) and~(\ref{def:smallflownetwork})
is finite. By the max-flow min-cut theorem, infinite capacity arcs
going from the source-side to the sink-side will therefore not cross
any minimum capacity. We will use this fact later in our proof.

Algorithm~\ref{alg:smallupdate} interprets flows in $H^s_\sigma(S,T)$
as a collection of updates for $\sigma$.
Proposition~\ref{prop:smallupdate} is a statement about the partial
schedule $\sigma'$ output by $\Call{SmallUpdate}{\ldots}$, and the
flow $X'$ computed at the end of the \textbf{while} loop in the
procedure. For convenience we let $f^{\text{source}}$ and
$f^{\text{sink}}$ denote the source and sink vertices, respectively,
of a flow path $f$.

\begin{algorithm}[H]
  \caption{\texttt{SmallUpdate}$(\sigma, S, T)$: Update $\sigma$ using
    $H^s_\sigma(S, T)$.}
  \label{alg:smallupdate}
  \begin{algorithmic}
    \Require $\sigma$ is a partial schedule, $S \subseteq M_\sigma^s$,
     $T \subseteq M \setminus S$.


    \State $X_0 \gets \text{Maximum flow in the network
      $H^s_\sigma(S, \{i \in T \; | \; p(\sigma^{-1}(i)) \leq \tau^* +
      R - \epsilon\})$}$.
    \State $X \gets \text{Augment $X_0$ to a maximum flow in the network
      $H^s_\sigma(S, T)$}$.

    \While{$\exists f = i_0, j_0,\dots, j_{k_f-1}, i_{k_f} \in X \; : \; p(\sigma^{-1}(f^{\text{sink}})) \leq \tau^* + R - \epsilon$}
    \For{$\ell = 0, \dots, k_f-1$}
    \State $\sigma(j_\ell) = i_{\ell+1}$.
    \EndFor
    \State $X \gets X \setminus \{f\}$.
    \EndWhile
    \State \Return $\sigma$.
  \end{algorithmic}
\end{algorithm}

\begin{proposition}\label{prop:smallupdate}
  Let $S \subseteq M_\sigma^s, T \subseteq M \setminus S$, and $X$ be the maximum
  flow in $H^s_\sigma(S, T)$. Then,
  $\Call{SmallUpdate}{\sigma, S, T}$ returns a partial schedule
  $\sigma'$ and computes a maximum flow $X'$ in $H^s_{\sigma'}(S,T)$ at the end
  of the \textbf{while} loop in Algorithm~\ref{alg:smallupdate} such that
  \begin{enumerate}[label=(\alph*)]
  \item\label{smallupdate:1} $\sigma'^{-1}(M) = \sigma^{-1}(M)$,
  \item\label{smallupdate:2}
    $\forall i \in S, \; p(\sigma'^{-1}(i)) - \epsilon \cdot|\{f \in
    X'\;|\; f^{\text{source}} = i\}| = p(\sigma^{-1}(i)) - \epsilon\cdot|\{f
    \in X\;|\; f^{\text{source}} = i\}|$, and
  \item\label{smallupdate:3} $\forall f \in X, \; p(\sigma'^{-1}(f \cap T)) > \tau^* + R -\epsilon$.
  \item\label{smallupdate:4} There is no path in the graph $G^s_{\sigma'}$ from $S$ to some machine $i \in T$ such that \[p({\sigma'}^{-1}(i)) \leq \tau^*+R-\epsilon.\]
  \end{enumerate}
\end{proposition}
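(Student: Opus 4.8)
The plan is to read each iteration of the \textbf{while} loop as an augmentation step in $H^s_\sigma(S,T)$ and to exploit the resulting correspondence between the rewritten assignment and the residual network. First I would fix the two conventions the whole argument rests on: since every job vertex of $G^s_\sigma$ has exactly one incoming (machine-to-job) arc, of capacity $\epsilon$, the flows in question are $\epsilon$-integral and decompose into source-to-sink paths whose job sets are pairwise disjoint. Hence processing a path $f = i_0, j_0, \dots, j_{k_f-1}, i_{k_f}$ via the reassignments $\sigma(j_\ell) \gets i_{\ell+1}$ is well defined, relocates only small jobs, decreases $p(\sigma^{-1}(i_0))$ by $\epsilon$, increases $p(\sigma^{-1}(i_{k_f}))$ by $\epsilon$, and leaves the load of every intermediate machine of $f$ unchanged; two iterations never conflict because their paths share no job. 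Property~\ref{smallupdate:1} is then immediate, since we only relocate; conditions~\ref{partialschedule:2} and~\ref{partialschedule:3} of Definition~\ref{def:partialschedule} are inherited because only small jobs move; and for~\ref{partialschedule:1} a source or intermediate machine never gains load while a sink gains $\epsilon$ only at a moment when the loop guard certifies $p(\sigma^{-1}(i_{k_f})) \le \tau^* + R - \epsilon$, so its load stays $\le \tau^* + R$. Thus $\sigma'$ is a partial schedule.

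The key structural step I would establish next is that, writing $Y$ for the sum of the processed paths, $H^s_{\sigma'}(S,T)$ coincides with the residual network of $H^s_\sigma(S,T)$ with respect to $Y$, up to arcs that can only be used inside cycles and are therefore irrelevant for the acyclic flows we consider: each machine-to-job / job-to-machine arc along $Y$ flips exactly as $\sigma$ is rewritten to $\sigma'$, and each sink capacity drops by the net $Y$-flow entering it (the two cases $i \in M^b_\sigma$ and $i \in M^s_\sigma$ in Definition~\ref{def:smallflownetwork} match because relocating small jobs does not change whether a machine is big). Consequently $X' := X - Y$, which is exactly the value of the variable $X$ when the loop terminates, is a maximum flow in $H^s_{\sigma'}(S,T)$, because $X$ was maximum in $H^s_\sigma(S,T)$ and $Y \le X$. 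Property~\ref{smallupdate:2} is then bookkeeping: a machine $i \in S$ is never a sink, since $S \cap T = \emptyset$, so its load drops by $\epsilon$ for exactly each processed path it is the source of, i.e.\ by $\epsilon\bigl(|\{f \in X : f^{\text{source}} = i\}| - |\{f \in X' : f^{\text{source}} = i\}|\bigr)$, which rearranges to the stated identity. Property~\ref{smallupdate:4} follows similarly: a path in $G^s_{\sigma'}$ from $S$ to some $i \in T$ with $p(\sigma'^{-1}(i)) \le \tau^* + R - \epsilon$ would reach a sink with residual capacity at least $\epsilon$ in $H^s_{\sigma'}(S,T)$, and walking along it — using that each job vertex has incoming capacity exactly $\epsilon$, so any saturated machine-to-job arc on the walk is preceded by an available reversed arc — yields an augmenting path for $X'$, contradicting its maximality.

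The remaining property~\ref{smallupdate:3} is the one I expect to be the real obstacle. I would split $T$ by load under $\sigma$: sinks $i$ with $p(\sigma^{-1}(i)) > \tau^* + R - \epsilon$ are harmless since sink loads only ever increase, and the sinks with $p(\sigma^{-1}(i)) \le \tau^* + R - \epsilon$ are precisely those available to $X_0$. For the latter, the two-stage construction of $X$ must carry the argument: because $X_0$ is a maximum flow into these low-load sinks, augmenting $X_0$ to $X$ within $H^s_\sigma(S,T)$ cannot route any additional flow into them, so the amount $\phi_i$ that $X$ sends to such an $i$ equals what $X_0$ sends there; from $\phi_i \le c(i) = \tau^* + R - p(\sigma^{-1}(i))$ one checks that the loop guard never blocks the $\phi_i/\epsilon$ paths of $X$ ending at $i$, so all of them are processed and $i$ finishes with load $p(\sigma^{-1}(i)) + \phi_i$. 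What then remains — and where I would concentrate the work — is to show $p(\sigma^{-1}(i)) + \phi_i > \tau^* + R - \epsilon$, i.e.\ that $X_0$ fills every low-load sink it reaches to within $\epsilon$ of its capacity, via a max-flow / min-cut analysis of $H^s_\sigma(S, \{i \in T : p(\sigma^{-1}(i)) \le \tau^* + R - \epsilon\})$ showing that any slack at a reached sink would yield an augmenting path and contradict the maximality of $X_0$. Pinning down this saturation argument, together with the $\epsilon$-integrality and disjoint-job-set conventions the rest of the proof relies on, is the crux.
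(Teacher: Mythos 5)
Your treatment of property (a), the accounting for (b), and the inductive argument that the leftover flow $X'$ remains maximum all line up with the paper's proof: each processed path $f$ simply reverses the arcs of $f$ in $G^s_\sigma$ and shrinks $c(f^{\text{sink}})$ by $\epsilon$, so the min cut and the flow value drop by $\epsilon$ in lockstep.

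The argument you give for (d) has a genuine gap. You only invoke maximality of $X'$ in $H^s_{\sigma'}(S,T)$, but a directed path in $G^s_{\sigma'}$ from $S$ to a sink with spare sink capacity need not be an augmenting path for $X'$: machine-to-job arcs on it may be saturated by $X'$-paths heading elsewhere. The parenthetical about a saturated arc being ``preceded by an available reversed arc'' does not repair this --- that reverse arc points into the job from whichever machine the $X'$-flow \emph{exits to}, which is not generally a vertex on your walk. (Minimal picture: one job $j$ assigned to $s$ with $\Gamma(j)=\{s,t_1,t_2\}$ and $X'=\{s\to j\to t_2\}$; $t_1$ has slack and the path $s\to j\to t_1$ exists, yet the only arc out of $s$ is saturated and there is no augmenting path.) What rescues (d), and what the paper's proof actually does, is to transfer the offending path back to the \emph{original} network and look at the residual of $X\setminus X'$ in $H^s_\sigma(S,T)$. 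There the correspondence is exact: every arc of the path is either an untouched arc of $G^s_\sigma$ (residual capacity $\epsilon$) or a reversal of an $X\setminus X'$-arc (a residual backward arc), and since $p(\sigma^{-1}(i))\le p(\sigma'^{-1}(i))\le\tau^*+R-\epsilon$ the sink $i$ retains $\ge\epsilon$ spare sink capacity against $X\setminus X'$. The augmenting path then lands at a sink that was already in $\{i\in T:\; p(\sigma^{-1}(i))\le\tau^*+R-\epsilon\}$, contradicting the maximality of $X_0$ --- not of $X'$. The two-stage construction is not a side remark; it is the only reason (d) holds, and you cannot bypass $X_0$.

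For (c), your instinct that something must be proved is reasonable, but the saturation lemma you would build on --- that $X_0$ fills every low-load sink it reaches to within $\epsilon$ of its capacity --- is simply false: two sources each contributing a single $\epsilon$-path into one sink of capacity $3\epsilon$ give maximum flow $2\epsilon$, and processing both paths leaves the sink at exactly $\tau^*+R-\epsilon$, not strictly above. The strict inequality in (c) is needed (and immediate) only for the surviving paths $f\in X'$, where it is the negation of the \textbf{while}-loop guard; that is also the only form in which the paper later uses (c), in Claim~\ref{claim:inv1}.
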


\begin{proof}
  The first three properties follow directly from the updates
  performed in the \textbf{while} loop of
  Algorithm~\ref{alg:smallupdate}. To see that $X'$ is a maximum flow
  in $H^s_{\sigma'}(S,T)$, first observe that for each update of the
  partial schedule $\sigma$ maintained by the algorithm, along some
  flow path $f \in X$ from $\sigma^{(b)}$ to $\sigma^{(a)}$ in a
  single iteration of the \textbf{while} loop (superscripts for before
  and after), the graph $G^s_{\sigma^{(a)}}$ can be obtained from
  $G^s_{\sigma^{(b)}}$ by simply reversing the directions of arcs of
  the flow path in question. Suppose that $X$ is a maximum flow in
  $H^s_{\sigma^{(b)}}(S,T)$. By the max-flow min-cut theorem, the
  maximum flow is associated with a minimum capacity cut of equal
  value, and the latter observation implies that both the flow value
  in $X \setminus \{f\}$ and the corresponding cut capacity in the new
  network $H^s_{\sigma^{(a)}}(S,T)$ is less, than the flow value in
  $X$ and its corresponding cut capacity in $H^s_{\sigma^{(b)}}(S,T)$,
  by $\epsilon$. This implies that $X \setminus \{f\}$ is a maximum
  flow in $H^s_{\sigma^{(b)}}(S,T)$.

  The final property follows from the particular way in which the
  maximum flow $X$ in $H^s_\sigma(S,T)$ is constructed. Note that
  Algorithm~\ref{alg:smallupdate} first computes a flow $X_0$ that
  maximizes the flow to machines in
  $\{i \in T \; | \; p(\sigma^{-1}(i)) \leq \tau^* + R - \epsilon\}$,
  and then augments $X_0$ to a maximum flow in
  $H^s_\sigma(S,T)$. Suppose to the contrary that there is a path $P$
  in $G^s_{\sigma'}$ from $S$ to a machine $i$ such that
  $p(\sigma'^{-1}(i))\leq \tau^*+R-\epsilon$. We know that one can
  obtain $G^s_{\sigma'}$ from $G^s_{\sigma}$ by reversing the arc
  directions of all the flow paths in $X \setminus X'$ i.e., the paths
  that were used to update the partial schedule in the \textbf{while}
  loop. So each arc in $P$ is either i) present in some path of
  $X \setminus X'$ in the opposite direction, or ii) not present in
  the paths of $X \setminus X'$ in the opposite direction and,
  therefore, also present in $G^s_{\sigma}$. Now consider the
  \emph{residual flow network} of the flow $X \setminus X'$ in
  $H^s_{\sigma}(S,T)$. It is now easy to see that $P$ is an augmenting
  path in this residual flow network because
  $p(\sigma^{-1}(i)) + \epsilon \cdot |\{f \in X \setminus X' \; | \;
  f^{\text{sink}} = i\}| = p({\sigma'}^{-1}(i)) \leq
  \tau^*+R-\epsilon$, and, therefore,
  $c(i) \geq \tau^* + R - p(\sigma^{-1}(i)) \geq \epsilon \cdot |\{f
  \in X \setminus X' \; | \; f^{\text{sink}} = i\}| + \epsilon$. This,
  however, contradicts the maximality of the flow $X_0$ computed in
  the first step of the algorithm.
\end{proof}

We now have the tools necessary to state our local search algorithm.

\subsection{Flow Based Local Search}\label{sec:lls}

In this section we describe the local search algorithm that takes as
input $\tau^*$, a partial schedule $\sigma$, and a small job
$j_0 \in J_s \setminus \sigma^{-1}(M)$. It outputs a partial schedule
$\sigma'$ such that $\sigma'^{-1}(M) = \sigma^{-1}(M) \cup
\{j_0\}$. The algorithm is parameterized by three constants
$0 < \mu_1, \mu_2, \delta \leq 1$. They are defined as:
\begin{align}
  \begin{split}
    \mu_1 &\stackrel{\Delta}{=} \min\{1,\zeta\}/4,\\
    \mu_2 &\stackrel{\Delta}{=} \min\{\delta,\zeta\}/4,\\
    \delta &\stackrel{\Delta}{=} \left(\sqrt{3-2 \epsilon }-1\right)/2.
  \end{split}\label{eq:mudelta}
\end{align}

The algorithm maintains certain sets of machines in \emph{layers}
$L_0,\dots,L_\ell$ throughout its execution, where $\ell$ is some
dynamically updated index variable that always points to the last
layer. A layer $L_i$ is a tuple $(A_i, B_i)$ where
$A_i \subseteq M_\sigma^s$, and $B_i \subseteq M$, except $L_0$ which
is defined to be $(\{j_0\}, B_0)$ for some $B_0 \subseteq M$. In
addition to layers, the algorithm also maintains a collection of
machines $\{I_i\}_{i=0}^\ell$ that will be disjoint from the machines
in $L_{\leq \ell}$.

We will describe the algorithm in a procedural style, and in the
course of the execution the algorithm, sets and other variables will
be modified. Function calls are specified in the pseudocode assuming
call-by-value semantics. Concretely, this means that function calls
have no side effects besides the assignment of the returned values at
the call site. We abuse notation slightly and use $L_i$ to also refer
to $A_i \cup B_i$ (for $i=0$, as $A_0$ is not a set of machines, we
use $L_0$ to just mean $B_0$), so that $L_{\leq \ell}$ refers to a set
of machines. For a subset of machines $N \subseteq M$ we use
$R^s_\sigma(N)$ denote the set of all machines reachable from vertices
$N$ in the graph $G_\sigma^s$. Note that $N \subseteq R^s_\sigma(N)$
always holds.

\begin{algorithm}
  \caption{\texttt{LocalSearch}$(\tau^*, \sigma, j_0)$: Extend the partial
    schedule $\sigma$ to include small job $j_0$.}
  \label{alg:localsearch}
  \begin{algorithmic}[1]
    \Require $\sigma$ is a partial schedule, $j_0 \in J_s \setminus \sigma^{-1}(M)$.
    \State Set $A_0 \gets \{j_0\}$, $B_0 \gets R^s_\sigma(\Gamma(j_0))$.\label{step:init0}
    \Comment Construction of layer $L_0$
    \State Set $\ell \leftarrow 0$ and $I_0 \gets \emptyset$.
    \While{$\nexists \; i \in B_0$ such that $p(\sigma^{-1}(i)) \leq \tau^*
      + R - \epsilon$}\label{step:mainloop}
    \Comment Main loop
    \State\label{step:finishconstruction} $(\sigma, A_{\ell+1}, B_{\ell+1}) \gets
    \Call{BuildLayer}{\sigma, \{L_i\}_{i=0}^\ell, \{I_i\}_{i=0}^\ell}$.
    \Comment Construction of layer $L_{\ell+1}$
    \State\label{step:initI} Set $\ell \leftarrow \ell+1$ and $I_{\ell+1}\gets \emptyset$.
    \While{$\ell \geq 1$ and $|\{i \in A_\ell \; | \;
      p(\sigma^{-1}(i)) \leq \tau^* + R - 1\}| \geq \mu_2|A_\ell|$}\label{step:collapseq}
    \State Set $I \gets \{i \in A_{\ell}\; | \;p(\sigma^{-1}(i)) \leq
    \tau^* + R - 1\}$.\label{step:newcollapse}
    \State $(I'_0, \dots, I'_{\ell}, X) \gets
    \Call{CanonicalDecomposition}{\sigma, \{L_i\}_{i=0}^{\ell}, \{I_i\}_{i=0}^{\ell}, I}$.\label{step:canonicald}
    \State Set $I_i \gets I'_i$ for all $1 \leq i \leq \ell$.\label{step:assigncanon}
    \If{$\exists \; 1 \leq r \leq \ell : |I_r| \geq \mu_1\mu_2
      |B_{r-1} \cap M^b_\sigma|$}\label{step:ifr}
    \State Choose the \emph{smallest} such $r$.\label{step:chooser}
    \State $\sigma \gets \Call{BigUpdate}{\sigma, \{f \in X \; | \; f \cap
    I_r \neq \emptyset\}}$.\label{step:bigupdate}
    \State $\sigma \gets \Call{SmallUpdate}{\sigma, A_{r-1},
      B_{r-1}}$ \textbf{unless} $r = 1$.\label{step:smallupdate2}
    \State $B_{r-1} \gets R^s_\sigma(A_{r-1}) \setminus \left(
    A_{r-1}\cup L_{\leq r-2} \cup I_{\leq r-2} \right)$ \textbf{unless} $r = 1$.\label{step:updateB}
    \State Discard all layers with indices greater than $r-1$.\label{step:discard}
    \State Set $\ell \gets r-1$.\label{step:resetell}
    \EndIf
    \EndWhile
    \EndWhile
    \State Update $\sigma$ using a path from $j_0$ to $i$ in
    $G^s_\sigma$ where $p(\sigma^{-1}(i)) \leq \tau^* + R -\epsilon$.\label{step:termination}
    \State \Return $\sigma$.
    \State
    \Function{CanonicalDecomposition}{$\sigma, \{L_i\}_{i=0}^{\ell}, \{I_i\}_{i=0}^{\ell}, I$}
    \State Let $X$ be the maximum flow in
    $H^b_\sigma(B_0 \cap M_\sigma^b, I_{\leq \ell} \cup I)$.
    \For{$1 \leq i \leq \ell-1$}
    \State Augment $X$ to a maximum flow in $H^b_\sigma(B_{\leq i} \cap
    M_\sigma^b, I_{\leq \ell} \cup I)$.
    \EndFor
    \For{$1 \leq i \leq \ell$}
    \State Set $I'_i$ to be the collection of sinks used by flow paths from $X$ with sources in $B_{i-1}$
    \EndFor
    \State \Return $(\emptyset, I'_1,\dots,I'_{\ell}, X)$.
    \EndFunction
  \end{algorithmic}
\end{algorithm}

\begin{algorithm}
  \caption{\texttt{BuildLayer}$(\sigma, \{L_i\}_{i=0}^\ell, \{I_i\}_{i=0}^\ell)$: Construct and return a new layer.}
  \label{alg:buildlayer}
  \begin{algorithmic}[1]
  \State Let $S \leftarrow \emptyset$.\label{step:newphase}
  \While{$\exists i \in M : \Call{IsAddableQ}{i, \sigma, \{L_i\}_{i=0}^\ell, \{I_i\}_{i=0}^\ell, S}$}\label{step:makeS}
  \State $\sigma \gets \Call{SmallUpdate}{\sigma, S\cup\{i\},
    T\setminus \{i\}}$ where $T \gets M
  \setminus \left( L_{\leq \ell} \cup I_{\leq \ell} \cup S\right)$.\label{step:smallupdate}
  \State $S \gets S \cup \{i\}$.
  \EndWhile
  \State $A_{\ell+1} \gets S$.\label{step:assignS}
  \State
  $B_{\ell+1} \gets R^s_\sigma(A_{\ell+1}) \setminus \left(
    A_{\ell+1}\cup L_{\leq \ell} \cup I_{\leq \ell} \right)$.\label{step:makeB}
  \State \Return $(\sigma, A_{\ell+1}, B_{\ell+1})$.
  \State
  \Function{IsAddableQ}{$i, \sigma, \{L_i\}_{i=0}^\ell,
    \{I_i\}_{i=0}^\ell, S$}
  \Comment Decide if $i$ can be added to $S$
  \If{$i \not \in M_\sigma^s \setminus (L_{\leq \ell} \cup
    I_{\leq \ell} \cup S)$}
  \State \Return False.
  \EndIf
  \State Set $T \gets M \setminus \left(L_{\leq \ell} \cup I_{\leq
      \ell} \cup S\right)$.
  \If{$|H^b_\sigma(B_{\leq \ell} \cap M_\sigma^b, S \cup \{i\})| = |H^b_\sigma(B_{\leq \ell} \cap M_\sigma^b, S)| + 1$}
  \If{$|H^s_\sigma(S \cup \{i\}, T \setminus \{i\})| \; \geq \;
    |H^s_\sigma(S, T)| \; + \; \left(p(\sigma^{-1}(i)) - (\tau^* - 1 + R -
      \delta)\right)$}
  \State \Return True.
  \EndIf
  \EndIf
  \State \Return False.
  \EndFunction

  \end{algorithmic}
\end{algorithm}



The description is now found in Algorithm~\ref{alg:localsearch}. We
refer to the \textbf{while} loop in Step~\ref{step:mainloop} of
Algorithm~\ref{alg:localsearch} as the \emph{main loop} of the
algorithm. Observe that, in Step~\ref{step:finishconstruction},
Algorithm~\ref{alg:buildlayer} is used a subroutine, which constructs
and returns a new layer while potentially modifying the partial
schedule $\sigma$ maintained by Algorithm~\ref{alg:localsearch}.

\subsection{Running Time Analysis}\label{sec:running-time}
The \emph{state of the algorithm} is defined to be the dynamic tuple
$\mathcal{S} \stackrel{\Delta}{=} (\sigma, \ell,\{L_i\}_{i=0}^\ell,
\{I_i\}_{i=0}^\ell)$ which contains the variables and sets that are
maintained by Algorithm~\ref{alg:localsearch}. In the analysis it will
be useful to compare quantities before and after certain operations
performed by the algorithm, and we will consistently use $\mathcal{S}$
and $\mathcal{S}'$ to refer to the state of the algorithm before and
after such an operation. For example, if $\mathcal{S}$ and
$\mathcal{S}'$ denote the states of the algorithm before
Step~\ref{step:finishconstruction} and after Step~\ref{step:initI} in
Algorithm~\ref{alg:localsearch} respectively, then $\ell' = \ell+1$,
$I'_{\ell'} = \emptyset$, etc.

\subsubsection{Basic Invariants of the Algorithm}
By observing the description of Algorithms~\ref{alg:localsearch}
and~\ref{alg:buildlayer} we can conclude certain basic properties
which will come in handy when reasoning about its running time.

\begin{proposition}\label{prop:basicinvs}
  Consider some state $\mathcal{S}$ of the algorithm.
  \begin{enumerate}[label=(\alph*)]
  \item\label{basicinvs:1} The sets in the collection
    $\{A_i\}_{i=1}^\ell \cup \{B_i\}_{i=0}^\ell \cup
    \{I_i\}_{i=0}^\ell$ are pairwise disjoint subsets of $M$.
  \item\label{basicinvs:2} For each $i=1,\dots,\ell$, the sets $A_i$
    have not been modified since the last time $L_i$ was initialized
    in some execution of Step~\ref{step:finishconstruction} of
    Algorithm~\ref{alg:localsearch}. Similarly, for $i=0$, the sets
    $A_0$ and $B_0$ have not been modified since the execution of
    Step~\ref{step:init0}.
  \item\label{basicinvs:3} For each $i \in B_{\leq \ell}$,
    $p(\sigma^{-1}(i)) > \tau^* + R -\epsilon$.
  \item\label{basicinvs:4} For every
    $j \in \sigma^{-1}(L_{\leq \ell}) \cap J_s$,
    $\Gamma(j) \subseteq L_{\leq \ell} \cup I_{\leq \ell}$.
  \end{enumerate}
\end{proposition}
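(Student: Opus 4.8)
The plan is to prove all four items simultaneously, by induction on the sequence of state-modifying operations executed during a run of Algorithm~\ref{alg:localsearch}, unrolling each call to \texttt{BuildLayer} into its constituent steps. The operations that change the state $\mathcal{S}=(\sigma,\ell,\{L_i\}_{i=0}^\ell,\{I_i\}_{i=0}^\ell)$ are: the initialization (Step~\ref{step:init0} together with the subsequent assignments to $\ell$ and $I_0$); a complete execution of \texttt{BuildLayer} (Step~\ref{step:finishconstruction}) followed by Step~\ref{step:initI}; one pass through the ``collapse'' block of Steps~\ref{step:newcollapse}--\ref{step:resetell}; the termination update of Step~\ref{step:termination}; and, inside \texttt{BuildLayer}, one iteration of its \textbf{while} loop (the \texttt{SmallUpdate} of Step~\ref{step:smallupdate} together with $S\gets S\cup\{i\}$) and the closing Steps~\ref{step:assignS}--\ref{step:makeB}. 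In the base case, right after initialization the only layer is $L_0=(\{j_0\},B_0)$ with $B_0=R^s_\sigma(\Gamma(j_0))$ and $I_0=\emptyset$: items~\ref{basicinvs:1} and~\ref{basicinvs:2} are immediate, item~\ref{basicinvs:4} holds because $B_0$ is by definition closed under reachability in $G^s_\sigma$ (so every small job on a machine of $B_0$ has all of its eligible machines in $R^s_\sigma(\Gamma(j_0))=B_0$), and item~\ref{basicinvs:3} for $B_0$ holds at every state reached while executing the main loop, because the main-loop guard of Step~\ref{step:mainloop} is precisely its negation, while item~\ref{basicinvs:2} guarantees $B_0$ is never modified afterwards.

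Items~\ref{basicinvs:2} and~\ref{basicinvs:1} are then preserved essentially by inspection. For~\ref{basicinvs:2}: no step ever rewrites an already-created $A_i$ ($A_0$ only in Step~\ref{step:init0}, each $A_{\ell+1}$ only in Step~\ref{step:assignS} at the moment its layer is built), and $B_0$ is written only in Step~\ref{step:init0}, since Step~\ref{step:updateB} touches $B_{r-1}$ only when $r\geq 2$. For~\ref{basicinvs:1}: the set $A_{\ell+1}=S$ returned by \texttt{BuildLayer} contains only machines passing \texttt{IsAddableQ}, which requires membership in $M^s_\sigma\setminus(L_{\leq\ell}\cup I_{\leq\ell}\cup S)$, hence it is disjoint from everything already present; $B_{\ell+1}$ (Step~\ref{step:makeB}) and the recomputed $B_{r-1}$ (Step~\ref{step:updateB}) are defined by explicit set differences that excise the relevant previously created sets; and the $I'_i$ returned by \texttt{CanonicalDecomposition} are the sink-sets of a family of \emph{vertex-disjoint} flow paths in $H^b_\sigma$ (Proposition~\ref{prop:vertexdisjointpaths}), hence pairwise disjoint subsets of $I_{\leq\ell}\cup I$; after Step~\ref{step:discard} discards the layers and $I$-sets of index exceeding $r-1$, one checks that the surviving sets remain mutually disjoint.

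The substance is in items~\ref{basicinvs:3} and~\ref{basicinvs:4}, and the engine for both is the following consequence of~\ref{basicinvs:4} as maintained so far: in $G^s_\sigma$ the set $L_{\leq\ell}\cup I_{\leq\ell}$ is \emph{forward-closed through the small jobs it hosts} (from $i\in L_{\leq\ell}\cup I_{\leq\ell}$ the only out-arcs are to small jobs $j$ with $\sigma(j)=i$, and from such a $j$ the out-arcs go to $\Gamma(j)\subseteq L_{\leq\ell}\cup I_{\leq\ell}$, once~\ref{basicinvs:4} is also known for jobs assigned to $I_{\leq\ell}$, which I would fold into the induction as a mild strengthening of~\ref{basicinvs:4}). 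Consequently every small-job relocation performed by the algorithm---the \texttt{SmallUpdate} calls inside \texttt{BuildLayer}, the \texttt{SmallUpdate} of Step~\ref{step:smallupdate2}, and the path of Step~\ref{step:termination}---reverses a flow path that lies \emph{entirely inside} $L_{\leq\ell}\cup I_{\leq\ell}$ (when its source and sink sets lie there, as in Step~\ref{step:smallupdate2}) or \emph{entirely outside} it (when its sink set lies outside, as in \texttt{BuildLayer} and at termination), since such a path cannot both enter $L_{\leq\ell}\cup I_{\leq\ell}$ and then leave it to reach a sink outside. This gives~\ref{basicinvs:4}: a relocation inside $L_{\leq\ell}\cup I_{\leq\ell}$ only moves a small job between two machines of that set, leaving its static $\Gamma(j)\subseteq L_{\leq\ell}\cup I_{\leq\ell}$ unaffected, and a relocation outside never touches a job counted by~\ref{basicinvs:4}; the only other change to $L_{\leq\ell}\cup I_{\leq\ell}$ is its enlargement by \texttt{BuildLayer} or the collapse block, which can only help. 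It also gives~\ref{basicinvs:3}: every $B_k$ is (re)defined from $A_k$ via $R^s_\sigma(A_k)$ against the up-to-date $\sigma$, and for a newly built layer, item~\ref{smallupdate:4} of Proposition~\ref{prop:smallupdate}, applied after the final \texttt{SmallUpdate} call of \texttt{BuildLayer}, rules out a machine of $B_{\ell+1}$ of load at most $\tau^*+R-\epsilon$; the earlier \texttt{SmallUpdate} calls of \texttt{BuildLayer} are harmless to previously built layers because, staying outside $L_{\leq\ell}\cup I_{\leq\ell}$, they meet a layer machine at worst in the interior of a flow path, where the net load change is zero.

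The step I expect to be the main obstacle is carrying items~\ref{basicinvs:1} and~\ref{basicinvs:3} through the collapse block, where the interaction of \texttt{BigUpdate} with the \texttt{SmallUpdate} that follows it must be controlled precisely: one must argue (using the choice of the \emph{smallest} $r$ in Step~\ref{step:chooser}, the definition of $I'_r$, and Proposition~\ref{prop:bigupdate}) that \texttt{BigUpdate} in Step~\ref{step:bigupdate} removes big jobs only from machines of $B_{r-1}$---so that $B_0,\dots,B_{r-2}$ survive untouched as big-machine sets---and then that the \texttt{SmallUpdate} of Step~\ref{step:smallupdate2}, together with the recomputation of $B_{r-1}$ in Step~\ref{step:updateB} and the particular two-phase construction of the flow in Algorithm~\ref{alg:smallupdate}, re-establishes~\ref{basicinvs:3} for the new $B_{r-1}$ and its disjointness from the retained sets. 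Granting this, the induction closes, since the termination update of Step~\ref{step:termination} is reached only once the main loop has exited, after which no further state is inspected.
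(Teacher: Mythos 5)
Your overall strategy is sound and in fact more explicit than the paper's own proof (the paper dispatches item~\ref{basicinvs:4} in one sentence). You correctly identify the engine for items~\ref{basicinvs:3} and~\ref{basicinvs:4} as a forward-closure property of $L_{\leq\ell}\cup I_{\leq\ell}$, which is the right intuition, and your handling of items~\ref{basicinvs:1},~\ref{basicinvs:2}, and~\ref{basicinvs:3} essentially matches the paper (in particular, using Step~\ref{step:updateB} only acting on $B_{r-1}$ for $r\geq 2$, and the two-phase flow of Algorithm~\ref{alg:smallupdate} via Proposition~\ref{prop:smallupdate}\ref{smallupdate:4}).

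However, the preservation of item~\ref{basicinvs:4} through the collapse block has a genuine gap that you do not flag. You assert that ``the only other change to $L_{\leq\ell}\cup I_{\leq\ell}$ is its enlargement by \texttt{BuildLayer} or the collapse block, which can only help,'' but the collapse block \emph{shrinks} $L_{\leq\ell}\cup I_{\leq\ell}$ in three ways: Step~\ref{step:discard} drops layers $L_r,\dots,L_\ell$ and the corresponding $I$-sets; Step~\ref{step:updateB} replaces $B_{r-1}$ with a subset; and, most delicately, Step~\ref{step:assigncanon} overwrites every $I_i$ with $I'_i$, and the new $I'_{\leq\ell}\subseteq I^{\text{old}}_{\leq\ell}\cup I$ with $I\subseteq A_\ell$ can omit machines that were in $I^{\text{old}}_{\leq\ell}$. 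Since item~\ref{basicinvs:4} (and the strengthened version covering $I_{\leq\ell}$ machines that you correctly see is needed to make your ``a path cannot enter and then leave'' argument run) is a claim of the form $\Gamma(j)\subseteq L_{\leq\ell}\cup I_{\leq\ell}$, shrinking the right-hand side can break it, and no ``net load is zero on interior vertices'' argument saves you here because~\ref{basicinvs:4} is about job \emph{identities} and their static $\Gamma(j)$, not about loads. You single out items~\ref{basicinvs:1} and~\ref{basicinvs:3} as the main obstacle in the collapse block, but item~\ref{basicinvs:4} is at least as delicate, precisely because of this shrinkage: one must justify why the freshly recomputed $B_{r-1}$ in Step~\ref{step:updateB}, against the post-\texttt{SmallUpdate} $\sigma$, re-establishes $R^s_\sigma(A_{r-1})\subseteq L_{\leq r-1}\cup I_{\leq r-2}$ for the \emph{new, smaller} $I$-sets, and why the jobs sitting on $L_{\leq r-2}$ and on the surviving $I$-machines still have their $\Gamma$'s contained there. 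Calling the needed strengthening ``mild'' and deferring it is where your plan breaks down rather than where it is merely incomplete.
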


\begin{proof}
  \begin{enumerate}[label=(\alph*)]
  \item For a newly constructed layer $L_{\ell+1}$ in
    Step~\ref{step:finishconstruction} of
    Algorithm~\ref{alg:localsearch}, the sets $A_{\ell+1}$ and
    $B_{\ell+1}$ satisfy the properties by the construction of the set
    $S$ and the setting in Step~\ref{step:makeB}
    Algorithm~\ref{alg:buildlayer}. Further, $I_{\ell+1}$ is initialized
    to the empty set in Step~\ref{step:initI} of Algorithm~\ref{alg:localsearch}. In future updates of $B_{\ell+1}$ (if any) in
    Step~\ref{step:updateB} of Algorithm~\ref{alg:localsearch}, let
    $\mathcal{S}$ and $\mathcal{S}'$ be the states before
    Step~\ref{step:smallupdate2} and after Step~\ref{step:updateB} of
    Algorithm~\ref{alg:localsearch} respectively. From the description
    of Algorithm~\ref{alg:smallupdate}, we see that
    $\sigma^{-1}(A_{r-1}) \supseteq \sigma'^{-1}(A_{r-1}) = \sigma'^{-1}(A'_{r-1})$, which then
    implies that $B_{r-1} \supseteq B'_{r-1}$ from the assignment in
    Step~\ref{step:updateB}.
  \item This follows directly from the description of the algorithm.
  \item When a new layer $L_{\ell+1}$ is constructed during a call to
    $\Call{BuildLayer}{\ldots}$, at the end of the \textbf{while} loop
    in Step~\ref{step:makeS} of Algorithm~\ref{alg:buildlayer}, we
    show in Claim~\ref{claim:inv1}, that a maximum flow $X$ in
    $H^s_{\sigma}(S, T)$ is computed where
    $T = M \setminus (L_{\leq \ell} \cup I_{\leq \ell} \cup S)$. So we
    can apply Proposition~\ref{prop:smallupdate}\ref{smallupdate:4}
    and conclude that after the assignment in Step~\ref{step:makeB},
    there is no machine $i \in B_{\ell+1}$ such that
    $p(\sigma^{-1}(i)) \leq \tau^*+R-\epsilon$. We can argue in
    exactly the same way in Steps~\ref{step:smallupdate2}
    and~\ref{step:updateB} of Algorithm~\ref{alg:localsearch}.
  \item This follows from Step~\ref{step:makeB} of
    Algorithm~\ref{alg:buildlayer} and Step~\ref{step:updateB} of
    Algorithm~\ref{alg:localsearch}.
  \end{enumerate}
\end{proof}

\begin{definition}[Collapsibility of a layer]\label{def:collapsibility}
  Layer $L_0$ is \emph{collapsible} if there is an $i \in B_0$ such
  that $p(\sigma^{-1}(i)) \leq \tau^* + R - \epsilon$.  For $\ell \geq 1$,
  $L_\ell$ is \emph{collapsible} if $A_\ell$ contains at least
  $\mu_2|A_\ell|$ machines $i$ such that $p(\sigma^{-1}(i)) \leq
  \tau^* + R - 1$.
\end{definition}

Note the correspondence between Definition~\ref{def:collapsibility}
and the conditions in Steps~\ref{step:mainloop}
and~\ref{step:collapseq} of Algorithm~\ref{alg:localsearch}.

\begin{lemma}\label{lem:noncollapsibility}
  At the beginning of each iteration of the main loop of
  Algorithm~\ref{alg:localsearch}, none of the layers
  $L_0,\dots, L_\ell$ are collapsible. In particular, for all
  $1 \leq i \leq \ell$,
  \[|\{ i' \in A_i \; | \; p(\sigma^{-1}(i')) \leq \tau^* + R - 1\}| <
    \mu_2|A_i|.\]
\end{lemma}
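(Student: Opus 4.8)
The plan is to prove the statement by induction on the number of iterations of the main loop, tracking the collapsibility status of each layer $L_i$ across all the operations the algorithm performs between consecutive visits to the top of the main loop. The base case is the first iteration: at that point only $L_0$ exists, and the main loop's guard in Step~\ref{step:mainloop} is precisely the negation of ``$L_0$ collapsible'' (Definition~\ref{def:collapsibility}), so if we are at the beginning of an iteration then $L_0$ is non-collapsible by assumption. For the inductive step, I would assume that at the beginning of some iteration of the main loop none of $L_0,\dots,L_\ell$ is collapsible, and show the same holds at the beginning of the next iteration. Within one iteration of the main loop the state changes only through: (i) the call to \texttt{BuildLayer} in Step~\ref{step:finishconstruction}, which appends a new layer $L_{\ell+1}$ and may modify $\sigma$; and (ii) the inner \textbf{while} loop of Step~\ref{step:collapseq}, which can collapse layers back down to some index $r-1$ via \texttt{BigUpdate}, \texttt{SmallUpdate}, and the resetting of $B_{r-1}$, $\ell$, and the discarding of higher layers.

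The two things to check are therefore: (1) that the newly built layer $L_{\ell+1}$ is non-collapsible right after Step~\ref{step:finishconstruction}, and that building it does not make any older layer $L_i$ ($i \le \ell$) collapsible; and (2) that after an execution of the body of the inner \textbf{while} loop (which ends with $\ell' = r-1$ and the surviving layers $L_0,\dots,L_{r-1}$), none of $L_0,\dots,L_{r-1}$ is collapsible. For (1), the key observation is that the only layers whose $A_i$ gets touched are— in fact none; by Proposition~\ref{prop:basicinvs}\ref{basicinvs:2} the sets $A_i$ for $i \le \ell$ are never modified, and the \texttt{SmallUpdate} calls inside \texttt{BuildLayer} (Step~\ref{step:smallupdate}) act on $S \cup \{i\}$ with sinks in $T = M \setminus (L_{\le \ell} \cup I_{\le \ell} \cup S)$, which is disjoint from $L_{\le \ell}$; by Proposition~\ref{prop:smallupdate}\ref{smallupdate:1} and the fact that loads on machines outside the source/sink sets of a \texttt{SmallUpdate} are unchanged, the loads $p(\sigma^{-1}(i'))$ for $i' \in A_i$, $i \le \ell$, are unaffected, so collapsibility of older layers is preserved. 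For the new layer $L_{\ell+1} = (S, B_{\ell+1})$, non-collapsibility must come from the termination condition of the \textbf{while} loop in Step~\ref{step:makeS} of \texttt{BuildLayer} together with Proposition~\ref{prop:smallupdate}: after the loop ends, the final \texttt{SmallUpdate}'s guarantee \ref{smallupdate:3}/\ref{smallupdate:4} ensures the machines reachable into $B_{\ell+1}$ are all heavily loaded, and the \texttt{IsAddableQ} test having failed for every remaining machine should translate, via the flow bounds, into $|\{i \in A_{\ell+1} : p(\sigma^{-1}(i)) \le \tau^*+R-1\}| < \mu_2 |A_{\ell+1}|$ — this is exactly where the choice of the threshold $\tau^* - 1 + R - \delta$ in \texttt{IsAddableQ} and the definition $\mu_2 = \min\{\delta,\zeta\}/4$ enter.

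For (2), the critical point is that the inner \textbf{while} loop's own guard is the negation of ``$L_\ell$ collapsible'': it only executes its body when $L_\ell$ \emph{is} collapsible, and it exits when $\ell = 0$ or the current top layer is non-collapsible. So I must argue that after one execution of the body — which performs the \texttt{BigUpdate} along flow paths hitting $I_r$, the \texttt{SmallUpdate} on $(A_{r-1}, B_{r-1})$, the reset of $B_{r-1}$, and sets $\ell \gets r-1$ — none of the surviving layers $L_0,\dots,L_{r-1}$ becomes collapsible. The layers $L_0,\dots,L_{r-2}$ are untouched in their $A$-parts (Proposition~\ref{prop:basicinvs}\ref{basicinvs:2}) and the \texttt{SmallUpdate} in Step~\ref{step:smallupdate2} has sources $A_{r-1}$ and sinks in $B_{r-1}$, both disjoint from $L_{\le r-2}$ by Proposition~\ref{prop:basicinvs}\ref{basicinvs:1}, so by Proposition~\ref{prop:smallupdate}\ref{smallupdate:1} their machine loads are unchanged and they stay non-collapsible (they were non-collapsible before the inner loop started, and the inner loop only ever \emph{decreases} $\ell$, so we can invoke the induction / loop-invariant hypothesis on them). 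For $L_{r-1}$ itself, the loads on machines in $A_{r-1}$ can change because $A_{r-1}$ is the source set of the \texttt{SmallUpdate}; here I would use Proposition~\ref{prop:smallupdate}\ref{smallupdate:2}, which says the ``net load'' $p(\sigma'^{-1}(i)) - \epsilon|\{f \in X' : f^{\text{source}}=i\}|$ is invariant at each source $i \in A_{r-1}$, to control how many machines of $A_{r-1}$ can have load dropping to $\le \tau^*+R-1$; combined with the fact that $L_{r-1}$ was non-collapsible at the start of this iteration and the bound on how much flow $X$ can carry, this should give $|\{i \in A_{r-1} : p(\sigma'^{-1}(i)) \le \tau^*+R-1\}| < \mu_2|A_{r-1}|$. (When $r=1$ there is no \texttt{SmallUpdate} and $L_0$'s non-collapsibility is just the main-loop guard, which still holds since the inner loop does not modify $B_0$ when $r=1$.)

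The main obstacle I anticipate is step (1) — showing the freshly built layer is non-collapsible — and more precisely, converting the \emph{local} failure of the \texttt{IsAddableQ} test for every candidate machine into the \emph{global} counting statement $|\{i \in A_{\ell+1} : p(\sigma^{-1}(i)) \le \tau^*+R-1\}| < \mu_2|A_{\ell+1}|$. This requires carefully relating the flow-value increments $|H^s_\sigma(S\cup\{i\}, T\setminus\{i\})| - |H^s_\sigma(S,T)|$ accumulated as $S$ grows to the total excess capacity available at the sinks, and then to the number of lightly-loaded machines in $A_{\ell+1}$; it is the place where the quantitative choices of $\delta$ and $\mu_2$ are forced, and where a submodularity-type or telescoping argument over the growth of $S$ will be needed. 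The ``In particular'' clause is then immediate from Definition~\ref{def:collapsibility} restricted to $\ell \ge 1$.
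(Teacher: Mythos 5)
Your induction scaffold and your analysis of the \texttt{SmallUpdate}/\texttt{BigUpdate} calls leaving the loads on $A_i$ for the surviving lower layers untouched are both correct, but your plan contains a real gap in what you call step (1), and this gap then causes you to overshoot in step (2) as well.

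Step (1) asserts that ``the newly built layer $L_{\ell+1}$ is non-collapsible right after Step~\ref{step:finishconstruction},'' and you flag converting the failure of \texttt{IsAddableQ} into the counting bound as the main obstacle. The obstacle is real: the statement you are trying to prove there is simply false in general. The predicate \texttt{IsAddableQ} never looks at whether $p(\sigma^{-1}(i)) \leq \tau^*+R-1$; a machine with very light load can pass the test and be added to $S$, so the freshly built $A_{\ell+1}$ can perfectly well contain more than a $\mu_2$-fraction of light machines. The algorithm is designed around this: the guard of the inner \textbf{while} loop at Step~\ref{step:collapseq} is precisely a collapsibility test on the fresh top layer, and its whole purpose is to detect and handle the case where the fresh layer \emph{is} collapsible. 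Lemma~\ref{lem:noncollapsibility} is a statement about the state ``at the beginning of each iteration of the main loop,'' not about the state immediately after \texttt{BuildLayer}, and no quantitative $\delta$/$\mu_2$ argument is needed here (those quantitative facts are used later, in Lemma~\ref{lem:bi} and Theorem~\ref{thm:newlayer}, not in the proof of this lemma).

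Once you drop the false step (1), step (2) also simplifies and your planned use of Proposition~\ref{prop:smallupdate}\ref{smallupdate:2} to bound how many machines of $A_{r-1}$ become light after a single body execution becomes unnecessary (and would not give you what you want anyway, since the load on a source in $A_{r-1}$ does drop by $\epsilon$ for each flow path removed from $X$, so $L_{r-1}$ may stay collapsible after one body execution — the loop then just iterates again). The actual argument is purely structural: let $r$ be the choice made in the \emph{last} execution of Step~\ref{step:chooser}. After that execution $\ell$ is set to $r-1$, and the loop terminates, which forces either $r-1=0$ (whereupon the main-loop guard re-checks $L_0$ directly before any ``beginning of an iteration'' is reached) or the collapsibility test on $L_{r-1}$ to fail. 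The lower layers $L_0,\dots,L_{r-2}$ are untouched in their $A_i$-sets by Proposition~\ref{prop:basicinvs}\ref{basicinvs:2}, and since the $r$-values chosen across inner-loop iterations strictly decrease, no \texttt{SmallUpdate} call in this main-loop iteration ever had a source set intersecting $A_{\leq r-2}$; combined with the disjointness of Proposition~\ref{prop:basicinvs}\ref{basicinvs:1}, their loads are unchanged, and the induction hypothesis carries over. That closes the induction without any flow-value counting.
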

\begin{proof}
  In the first iteration of the main loop of the algorithm, $\ell=0$,
  and the statement follows from the condition of the main loop of the
  algorithm. Assume the statement to hold at the beginning of some
  iteration of the main loop of the algorithm with the state
  $(\sigma, \ell,\{L_i\}_{i=0}^\ell, \{I_i\}_{i=0}^\ell)$. Observe
  that a new layer $L_{\ell+1}$ is created in
  Step~\ref{step:finishconstruction} and it is tested for
  collapsibility in Step~\ref{step:collapseq}. Suppose that the
  \textbf{while} loop in the latter step executes at least once (if it
  happens infinitely many times, we are done). Let $r$ denote the
  choice made in Step~\ref{step:chooser} in the last execution of the
  step. The layers $L_0,\dots,L_{r-2}$ continue to be non-collapsible
  by induction and Proposition~\ref{prop:basicinvs}\ref{basicinvs:2},
  and layer $L_{r-1}$ is not collapsible because execution exits the
  \textbf{while} loop.
\end{proof}

\begin{lemma}\label{lem:smallI}
  At the beginning of each iteration of the main loop of the
  algorithm, for every $0 \leq i \leq \ell-1$,
  \[|I_{i+1}| < \mu_1\mu_2 |B_i \cap M_\sigma^b|.\]
\end{lemma}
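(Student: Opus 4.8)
The plan is to establish the stated bound as an \emph{invariant} that holds at the top of every iteration of the main loop, proved by induction on the number of such iterations. Between two consecutive visits to the top of the main loop the algorithm's state is modified only by (i) the single call to \texttt{BuildLayer} in Step~\ref{step:finishconstruction} together with the index update of Step~\ref{step:initI}, and (ii) some number of iterations of the inner \textbf{while} loop of Steps~\ref{step:collapseq}--\ref{step:resetell}; I would check that each of these preserves the invariant. The base case is the first iteration, where $\ell=0$ and the statement is vacuous.

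Step (i) is the easy part. \texttt{BuildLayer} relocates only small jobs (all its updates go through \texttt{SmallUpdate}), so by Proposition~\ref{prop:smallupdate}\ref{smallupdate:1} it leaves the partition $M = M_\sigma^s \cup M_\sigma^b$ unchanged and touches no $A_i,B_i,I_i$ with $i\le\ell$; it only appends the new layer $L_{\ell+1}$, whose companion set is then set to $\emptyset$ in Step~\ref{step:initI}. Hence every inequality inherited from the previous state still holds, and the only new one, for the index $i=\ell$, reads $0 < \mu_1\mu_2\,|B_\ell\cap M_\sigma^b|$; this reduces to the observation that a freshly built layer always sits on top of a layer $L_\ell$ whose set $B_\ell$ contains a big machine, which follows from the big-flow increment test inside \texttt{IsAddableQ} in Algorithm~\ref{alg:buildlayer} together with Proposition~\ref{prop:basicinvs}. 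This bookkeeping is routine.

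Step (ii) is the crux. Fix an iteration of the inner \textbf{while} loop. After Steps~\ref{step:newcollapse}--\ref{step:assigncanon} we have $I_i=I'_i$ for $1\le i\le\ell$, while $\sigma$, the $B_i$, and $M_\sigma^b$ are exactly as at the start of the iteration, since \texttt{CanonicalDecomposition} only computes flows and has no side effects. If the test of Step~\ref{step:ifr} fails, its negation \emph{is} precisely the family of inequalities $|I_r| < \mu_1\mu_2\,|B_{r-1}\cap M_\sigma^b|$ for all $1\le r\le\ell$, and nothing else changes. If the test succeeds and $r$ is the (smallest) index chosen in Step~\ref{step:chooser}, then minimality of $r$ already gives $|I'_s| < \mu_1\mu_2\,|B_{s-1}\cap M_\sigma^b|$ for every $1\le s<r$ at this point; after the collapse the surviving layers are $L_0,\dots,L_{r-1}$ with companion sets $\emptyset,I'_1,\dots,I'_{r-1}$ and $\ell$ reset to $r-1$, so the invariant to re-establish is exactly $|I'_{i+1}| < \mu_1\mu_2\,|B_i\cap M_\sigma^b|$ for $0\le i\le r-2$, now against the machine partition produced by Steps~\ref{step:bigupdate}--\ref{step:updateB}. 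I would finish by showing those steps do not decrease $|B_i\cap M_\sigma^b|$ for $i\le r-2$: the paths fed to \texttt{BigUpdate} are exactly those of $X$ ending in $I'_r$ (each sink is used by a unique path, by the unit sink capacities in $H^b_\sigma$), and by definition of $I'_r$ these are the paths with sources in $B_{r-1}$; along each such path the source loses its big job (Proposition~\ref{prop:bigupdate}\ref{bigupdate:2}) while every interior machine keeps exactly one big job, so \texttt{BigUpdate} changes big/small status only for some machines of $B_{r-1}$ and for the sinks in $I'_r$, and both sets are disjoint from $B_{\le r-2}$ by Proposition~\ref{prop:basicinvs}\ref{basicinvs:1}. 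The \texttt{SmallUpdate} of Step~\ref{step:smallupdate2} moves only small jobs, and the reassignment in Step~\ref{step:updateB} touches only $B_{r-1}$ (the new last layer, carrying no constraint), so $|B_i\cap M_\sigma^b|$ is unchanged for $i\le r-2$ and the inequalities transfer verbatim.

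The step I expect to be the main obstacle is this last point in the collapse case: pinning down exactly which machines change big/small status under \texttt{BigUpdate} — which needs the structural fact that an augmenting path in $G^b_\sigma$ has the form (big machine) $\to$ (big job) $\to$ (big machine) $\to\cdots\to$ sink, so only the two endpoints can flip status — and being careful that the inequalities supplied by the failure of the Step~\ref{step:ifr} test at smaller indices are measured against the same $B$-sets and machine partition that occur in the post-collapse state. (One should also note that if the Step~\ref{step:ifr} test fails while $L_\ell$ is still collapsible the inner loop does not terminate; this is harmless for the present invariant, as there is then no subsequent main-loop iteration, and it is handled separately in the running-time analysis, cf.\ Lemma~\ref{lem:noncollapsibility}.)
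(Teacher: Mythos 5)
Your proof follows the same inductive plan as the paper's---the $I_i$ start empty, are reset only via the canonical decomposition in Step~\ref{step:assigncanon}, and minimality of $r$ in Step~\ref{step:chooser} delivers the bound for the surviving indices---but it is considerably more careful than the paper's one-paragraph argument, which states the conclusion without verifying that the minimality test remains meaningful once Steps~\ref{step:bigupdate}--\ref{step:updateB} change $\sigma$ and hence $M_\sigma^b$. Your treatment of that point is the real content and is correct: the BigUpdate paths have sources in $B_{r-1}$ and sinks in $I'_r$ (flow paths in $G_\sigma^b$ pass only through big machines in their interior, so a path meets the small-machine set $I'_r$ only at its sink), interior machines trade one big job for another and so keep their status, and both endpoint sets are disjoint from $B_{\le r-2}$ by Proposition~\ref{prop:basicinvs}\ref{basicinvs:1}, while the SmallUpdate of Step~\ref{step:smallupdate2} cannot touch $M_\sigma^b$ at all. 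This supplies precisely the step the paper leaves implicit.

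One small misattribution in your step~(i): the inequality $|B_\ell\cap M_\sigma^b|>0$ for the freshly created layer (so that the trivial $|I_{\ell+1}|=0$ satisfies the bound) does not follow from the big-flow increment test inside \texttt{IsAddableQ}. That test only certifies $|H^b_\sigma(B_{\le\ell}\cap M_\sigma^b,S)|=|S|$, i.e.\ that $B_{\le\ell}$, not $B_\ell$ specifically, contains big machines. The correct source is the small-flow argument of Claim~\ref{claim:inv1} and Lemma~\ref{lem:bi} (or Lemma~\ref{lem:bi0} when $\ell=0$), which show that the flow built inside \texttt{BuildLayer} has positive value with all of its sinks landing in $B_\ell\cap M_\sigma^b$. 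The paper's own proof of Lemma~\ref{lem:smallI} is silent on this as well, so your write-up is at least as rigorous as the source; I would simply swap the \texttt{IsAddableQ} citation for a pointer to Lemmas~\ref{lem:bi} and~\ref{lem:bi0}, noting that their conclusions are what is actually being used.
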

\begin{proof}
  The sets $I_0,\dots, I_\ell$ maintained by the algorithm start out
  initialized to $\emptyset$ in Step~\ref{step:initI} of
  Algorithm~\ref{alg:localsearch} when the corresponding layer is
  created. They are modified only within the \textbf{while} loop of
  Step~\ref{step:collapseq} of Algorithm~\ref{alg:localsearch} through the
  computation of the canonical decomposition and assignment in
  Step~\ref{step:assigncanon}. Within this loop, in
  Step~\ref{step:chooser}, the smallest $1 \leq r \leq \ell$ such that
  $|I_r| \geq \mu_1\mu_2 |B_{r-1} \cap M^b_\sigma|$ is chosen; layers
  with indices greater than $r-1$ are discarded in
  Step~\ref{step:discard}; and $\ell$ is set to $r-1$ at the end of
  the loop in Step~\ref{step:resetell}. Therefore, the claim follows.
\end{proof}

\subsubsection{Maximum Flows and Canonical Decompositions}

We now recall some properties about network flows that will be of use
later on in the proof our main theorem. For basic concepts related to
flows, such as residual flow networks and augmenting paths, we refer
the reader to the textbook by Cormen, Leiserson, Rivest and
Stein~\cite{DBLP:books/daglib/0023376}.

\begin{proposition}\label{prop:flows}
  Let $S \subseteq M^b_\sigma$ and $T \subseteq M^s_\sigma$. Let $S'
  \subseteq M^s_\sigma$ and $T' \subseteq M$ such that $S' \cap T' = \emptyset$.
  \begin{enumerate}[label=(\alph*)]
  \item\label{flows:1} Let $X$ be the maximum flow in $H^b_\sigma(S, T)$ and let
    $C_X$ denote the minimum capacity cut corresponding to $X$ i.e.,
    the set of vertices reachable from $S$ in the residual flow
    network of $X$. Then,
    $|H^b_\sigma(S, T \cup \{i\})| > |H^b_\sigma(S, T)|$ for all
    $i \in C_X \setminus S$.
  \item\label{flows:2} Let $Y$ be the maximum flow in
    $H^s_\sigma(S', T')$ and let $C_Y$ denote the minimum capacity cut
    corresponding to $Y$. For any $i \in M \setminus C_Y$ such that
    $i$ is not used as a sink by a flow path in $Y$, let the
    corresponding maximum flow in $H^s_\sigma(S' \cup \{i\}, T'
    \setminus \{i\})$ be $Y'$ and minimum capacity cut be
    $C_{Y'}$. Then, $C_Y \subset C_{Y'}$ and this inclusion is strict.
  \end{enumerate}
\end{proposition}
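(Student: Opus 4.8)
The plan is to handle the two parts separately, in each case using the description of $C_X$ (resp.\ $C_Y$) as the set of vertices reachable from the source set in the residual network of the maximum flow, together with the basic max-flow/min-cut dichotomy that arcs leaving this set are saturated and arcs entering it carry no flow.

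For part~\ref{flows:1} the argument is routine. First observe that for the statement to be meaningful $i$ must be a small machine with $i \notin T$ (otherwise $T\cup\{i\}$ is not a valid sink set for $H^b_\sigma$), and for such an $i$ the point is that $i$ has no outgoing arc in $G^b_\sigma$ since it carries no big job; hence $X$ routes no flow through $i$, and $X$ is still a feasible flow in $H^b_\sigma(S, T\cup\{i\})$ once $i$ is given the sink vertex-capacity $1$. Since $i \in C_X$, there is a residual $S$--$i$ path in the residual network of $X$; appending the now-fresh, unsaturated sink arc out of $i$ yields an augmenting path in $H^b_\sigma(S, T\cup\{i\})$, so $|H^b_\sigma(S, T\cup\{i\})| \ge |X|+1 = |H^b_\sigma(S,T)|+1$.

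The substantive part is~\ref{flows:2}, and I would split it into strictness and containment. Strictness is immediate: $i$ is a source of $N':=H^s_\sigma(S'\cup\{i\},\,T'\setminus\{i\})$, hence $i \in C_{Y'}$, while $i \in M\setminus C_Y$ by hypothesis, so $i \in C_{Y'}\setminus C_Y$. For $C_Y \subseteq C_{Y'}$ I would first reduce to the statement that $Y'$ is obtained from $Y$ by a sequence of augmentations, which is consistent with the ``augment to a maximum flow'' semantics used elsewhere in the paper (e.g.\ in \texttt{SmallUpdate}); this should be stated explicitly at the start. Then the chain of observations is: (a) $Y$ remains a feasible flow in $N'$ of the same value, since relaxing conservation at $i$ (now a source) costs nothing as $Y$ obeyed conservation there, and deleting the sink-capacity of $i$ (if $i \in T'$ at all) costs nothing as $i$ is not used as a sink by $Y$; so the residual network of $Y$ is unchanged on the real vertices, $C_Y$ remains exactly the set reachable from $S'$, and by definition $C_Y$ has no residual arc of $Y$ leaving it. (b) Because $Y$ is already maximum in the old network there is no augmenting path from $S'$ to $T'$, hence none from $S'$ to $T'\setminus\{i\}$, so every augmenting path used to pass from $Y$ to $Y'$ must start at $i$. (c) A residual path from $S'\cup\{i\}$ cannot enter $C_Y$ and then leave it, since there are no residual arcs out of $C_Y$; and no sink lying in $C_Y$ has residual sink-capacity, since it is reachable from $S'$ and would otherwise give an augmenting path contradicting maximality of $Y$. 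Maintaining (c) as an invariant, each augmenting path therefore lies entirely outside $C_Y$, so it modifies only arcs strictly outside $C_Y$, changing neither the residual arcs internal to $C_Y$ nor the property that $C_Y$ has no residual arc leaving it. Hence every vertex of $C_Y$ is still reachable from $S'$ in the residual network of $Y'$, i.e.\ $C_Y \subseteq C_{Y'}$.

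I expect the main obstacle to be making step~(c) airtight as an induction over the sequence of augmenting paths: one must simultaneously carry that (i) no augmenting path ever enters $C_Y$, (ii) the sinks inside $C_Y$ stay saturated, and (iii) the absence of residual arcs leaving $C_Y$ persists, with each of these feeding the next at the following augmentation. A secondary but necessary point is to pin down the meaning of ``the corresponding maximum flow $Y'$'' as the flow obtained from $Y$ by augmentation, since the argument above genuinely uses that relationship rather than an arbitrary maximum flow of $N'$.
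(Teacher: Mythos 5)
The paper states Proposition~\ref{prop:flows} without supplying a proof---it is introduced merely as a recollection of standard properties of maximum flows---so there is no paper argument to compare against; I can only assess correctness.

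Your argument is correct. Part~\ref{flows:1} is indeed routine and your treatment is right: a small machine $i$ carries no big job, so it has no outgoing arc in $G^b_\sigma$, the flow $X$ does not pass through it, and the residual $S$--$i$ path (which exists because $i\in C_X$) together with the fresh sink arc of capacity $1$ at $i$ yields an augmenting path in $H^b_\sigma(S,T\cup\{i\})$. Your observation that $i$ must implicitly lie in $M^s_\sigma\setminus T$ is a fair reading of how the proposition is later invoked (in the proof of Claim~\ref{claim:heart}, with $i\in(C_X\cap M^s_\sigma)\setminus(M^{(1)}\cup C_Y)$ and $S$ in place of $T$, so $i\notin T$ is automatic there).

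For part~\ref{flows:2} your decomposition is sound, and the three-way invariant you single out in step~(c)---that augmenting paths never enter $C_Y$, that sinks in $C_Y$ stay saturated, and that no residual arc leaves $C_Y$---is exactly what carries the induction: since each augmenting path is vertex-disjoint from $C_Y$, it modifies only arcs with both endpoints outside $C_Y$, leaving both the residual structure internal to $C_Y$ and the boundary of $C_Y$ untouched, so $C_Y$ remains reachable from $S'$. One simplification worth noting: the set of vertices reachable from the source in the residual network of a maximum flow is the unique source-side minimum cut and does not depend on which maximum flow is chosen, so $C_{Y'}$ is well-defined independently of how $Y'$ is obtained. Your reduction to ``$Y'$ is $Y$ plus augmenting paths'' is therefore a without-loss-of-generality device for the proof rather than a required reading of the statement; the conclusion $C_Y\subsetneq C_{Y'}$ holds for any max flow $Y'$.

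Two minor points you should make explicit if you write this out in full. First, every augmenting path from $S'\cup\{i\}$ must have $i$ as its first real vertex at \emph{every} step, not only the first: the subsequent steps follow not from maximality of $Y$ in $N$ but from the invariant itself (no residual arc leaves $C_Y$ and sinks in $C_Y$ are saturated, so no path from $S'$ can reach a sink), which is cleaner than re-invoking maximality. Second, the fact that the flow into any fixed sink never decreases along augmentations (no simple augmenting path can traverse the super-sink in its interior) is what guarantees invariant~(ii) persists; it is worth stating since it is the one place the argument silently uses the structure of augmenting paths rather than of arbitrary flows.
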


We state some consequences of the description of the procedure
$\Call{CanonicalDecomposition}{\ldots}$ in
Algorithm~\ref{alg:localsearch}.

\begin{proposition}\label{prop:canonicald}
  For a given state $\mathcal{S}$ and $I \subseteq M^s_\sigma$, let
  $(I'_0, \dots, I'_{\ell}, X)$ be the tuple returned by
  $\Call{CanonicalDecomposition}{\sigma, \{L_i\}_{i=0}^{\ell},
    \{I_i\}_{i=0}^{\ell}, I}$. Then, $X$ is a maximum flow in the
  network
  \[H^b_\sigma(B_{\leq \ell-1} \cap M^b_\sigma, I_{\leq \ell} \cup
    I),\] such that
  \begin{enumerate}[label=(\alph*)]
  \item\label{canonicald:1} $I'_i$ is the collection of sinks used by
    flow paths from $X$ with sources in $B_{i-1}$ for all
    $i=1,\dots,\ell$, and $I'_0=\emptyset$,
  \item\label{canonicald:2}  $|H^b_\sigma(B_{\leq i} \cap M^b_\sigma, I'_{\leq i+1})| =
    |H^b_\sigma(B_{\leq i} \cap M^b_\sigma, I_{\leq \ell}\cup
    I)|,$ for all $i =0,\dots,\ell-1$, and
  \item\label{canonicald:3}
    $|H^b_\sigma(B_{\leq i} \cap M^b_\sigma, I'_{\leq i+1})| =
    |H^b_\sigma(B_{\leq i} \cap M^b_\sigma, I'_{\leq \ell})|,$ for all
    $i =0,\dots,\ell-1$.
  \end{enumerate}
\end{proposition}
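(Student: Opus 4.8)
The plan is to unwind the definition of $\Call{CanonicalDecomposition}{\ldots}$ in Algorithm~\ref{alg:localsearch} step by step and match each claimed property to an invariant that the nested augmentations maintain. First I would observe that the procedure builds $X$ by starting with a maximum flow in $H^b_\sigma(B_0 \cap M^b_\sigma, I_{\leq\ell}\cup I)$ and then, for $i = 1,\dots,\ell-1$, augmenting it to a maximum flow in $H^b_\sigma(B_{\leq i}\cap M^b_\sigma, I_{\leq\ell}\cup I)$ without ever changing the sink set. Since augmenting a maximum flow in a network with a strictly larger source set (and the same sinks) yields a maximum flow in the larger network, after the last iteration $X$ is a maximum flow in $H^b_\sigma(B_{\leq\ell-1}\cap M^b_\sigma, I_{\leq\ell}\cup I)$, which is the first assertion. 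Property~\ref{canonicald:1} is then immediate: $I'_i$ is \emph{defined} in the procedure to be the set of sinks used by flow paths of $X$ originating in $B_{i-1}$, and $I'_0 = \emptyset$ because $B_{-1}$ (equivalently $B_{\leq -1}$) is empty — here I would also note, using Proposition~\ref{prop:vertexdisjointpaths} and the vertex capacity $1$ on sinks, that the flow decomposes into vertex-disjoint paths so that the $I'_i$ are well defined and pairwise disjoint.

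For property~\ref{canonicald:2} I would argue that, because $X$ restricted to the paths with sources in $B_{\leq i}$ is itself a valid flow in $H^b_\sigma(B_{\leq i}\cap M^b_\sigma, I'_{\leq i+1})$ of value equal to the number of those paths, we have $|H^b_\sigma(B_{\leq i}\cap M^b_\sigma, I'_{\leq i+1})| \geq$ that number. Conversely, the key point is that at the moment the procedure finished augmenting for source set $B_{\leq i}$ (the $i$-th iteration), the flow was maximum in $H^b_\sigma(B_{\leq i}\cap M^b_\sigma, I_{\leq\ell}\cup I)$, and all subsequent augmentations only add paths with sources in $B_{\geq i+1}$, so by the augmenting-path characterization of max flow (Proposition~\ref{prop:flows}\ref{flows:1}), there is no augmenting path from $B_{\leq i}$ in the residual network even with the full sink set $I_{\leq\ell}\cup I$ available; hence the number of $X$-paths with sources in $B_{\leq i}$ already equals $|H^b_\sigma(B_{\leq i}\cap M^b_\sigma, I_{\leq\ell}\cup I)|$. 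Since these paths land precisely in $I'_{\leq i+1}$, and $I'_{\leq i+1} \subseteq I_{\leq\ell}\cup I$, monotonicity of max flow in the sink set forces $|H^b_\sigma(B_{\leq i}\cap M^b_\sigma, I'_{\leq i+1})| = |H^b_\sigma(B_{\leq i}\cap M^b_\sigma, I_{\leq\ell}\cup I)|$, which is~\ref{canonicald:2}.

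Finally, property~\ref{canonicald:3} follows by sandwiching: $I'_{\leq i+1}\subseteq I'_{\leq\ell}\subseteq I_{\leq\ell}\cup I$, so monotonicity gives
\[
|H^b_\sigma(B_{\leq i}\cap M^b_\sigma, I'_{\leq i+1})| \;\leq\; |H^b_\sigma(B_{\leq i}\cap M^b_\sigma, I'_{\leq\ell})| \;\leq\; |H^b_\sigma(B_{\leq i}\cap M^b_\sigma, I_{\leq\ell}\cup I)|,
\]
and the outer two quantities are equal by~\ref{canonicald:2}, so the middle one equals them as well. The step I expect to be the main obstacle is~\ref{canonicald:2}: one must be careful that the ``freeze'' of the flow value on the prefix source set $B_{\leq i}$ genuinely survives all later augmentations — i.e. that augmenting with larger source sets never reroutes a path so as to reduce the count landing in $I'_{\leq i+1}$ — which is exactly why the proof leans on the residual/augmenting-path formulation rather than on tracking individual paths, and why it is essential that the sink set is held fixed throughout the inner \textbf{for} loop.
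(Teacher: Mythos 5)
Your proposal is essentially correct, and it supplies details that the paper itself omits: the paper states Proposition~\ref{prop:canonicald} as an immediate ``consequence of the description'' of $\Call{CanonicalDecomposition}{\ldots}$ with no accompanying proof, so there is no written argument of the authors' to compare against. Your unwinding of the procedure is the natural one and reaches the right conclusions.

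One place in your argument for~\ref{canonicald:2} deserves a small correction, though the conclusion is still right. You write that because all subsequent augmentations start from $B_{\geq i+1}$, ``there is no augmenting path from $B_{\leq i}$ in the residual network even with the full sink set available; hence the number of $X$-paths with sources in $B_{\leq i}$ already equals $|H^b_\sigma(B_{\leq i}\cap M^b_\sigma, I_{\leq\ell}\cup I)|$.'' The inference in that last ``hence'' does not go through as stated: the absence of an augmenting path from $B_{\leq i}$ in the residual of the \emph{full} final flow $X$ does not by itself pin down how many of $X$'s decomposed paths originate in $B_{\leq i}$, because removing the paths that start in $B_{\geq i+1}$ changes the residual network. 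The clean version of the argument is already contained in the first half of your sentence: since, at each step $j\geq i$, the current flow is maximum for source set $B_{\leq j}$, every augmenting path found when enlarging the source set to $B_{\leq j+1}$ must use a super-source arc into $B_{j+1}$; such a path never touches the super-source arcs into $B_{\leq i}$, so the flow emanating from $B_{\leq i}$ is preserved by each augmentation. By induction the number of paths of the final $X$ with sources in $B_{\leq i}$ is exactly $|H^b_\sigma(B_{\leq i}\cap M^b_\sigma, I_{\leq\ell}\cup I)|$, and then your monotonicity-plus-restriction sandwich gives~\ref{canonicald:2}. (Incidentally, Proposition~\ref{prop:flows}\ref{flows:1} is really a statement about enlarging the \emph{sink} set by a vertex in the mincut, not a restatement of the max-flow/augmenting-path equivalence, so the citation there is slightly off even though the underlying fact you need is standard.) Your proofs of the opening assertion, of~\ref{canonicald:1}, and of~\ref{canonicald:3} are all fine as written.
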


\subsubsection{Relating Set Sizes within Layers}

\begin{lemma}\label{lem:bi0}
  Suppose that $1 \leq \tau^* < 2$. At the beginning of each iteration
  of the main loop of the algorithm, $|B_0 \cap M_\sigma^b| \geq 1$.
\end{lemma}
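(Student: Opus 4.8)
The plan is to argue by contradiction: assuming that at the start of some iteration of the main loop we have $B_0 \cap M_\sigma^b = \emptyset$, I will exhibit an infeasibility of $\clp{\tau^*}$. Under this assumption every machine of $B_0$ is small, so $\sigma^{-1}(B_0) \subseteq J_s$. Put $\hat T \stackrel{\Delta}{=} \{j_0\} \cup \sigma^{-1}(B_0)$, which consists only of small jobs (and $j_0 \notin \sigma^{-1}(M)$, so this union is disjoint). The point is that the jobs of $\hat T$ are ``trapped'' in $B_0$ — no configuration of a makespan-$\tau^*$ assignment can place any of them outside $B_0$ — while the machines of $B_0$ are already so heavily loaded that $B_0$ cannot absorb this much small-job volume at makespan $\tau^*$.

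For the trapping claim I want $\Gamma(j) \subseteq B_0$ for every $j \in \hat T$. For $j_0$ this is immediate, since $\Gamma(j_0) \subseteq R^s_\sigma(\Gamma(j_0)) = B_0$ (recall $B_0$ has been frozen since Step~\ref{step:init0} by Proposition~\ref{prop:basicinvs}\ref{basicinvs:2}). For $j \in \sigma^{-1}(B_0)$: such a $j$ is small, so the arc $(\sigma(j),j)$ and all arcs $(j,i')$ with $i' \in \Gamma(j)$ are present in $G^s_\sigma$; since $\sigma(j) \in B_0$ and $B_0 = R^s_\sigma(\Gamma(j_0))$ contains every machine reachable in $G^s_\sigma$ from a machine already in $B_0$, this forces $\Gamma(j) \subseteq B_0$. (This is the layer-$0$ case of Proposition~\ref{prop:basicinvs}\ref{basicinvs:4}, which I would spell out here.) In particular $B_0 \supseteq \Gamma(j_0) \neq \emptyset$, since $\tau^* < \infty$ means the instance is feasible; so $|B_0| \geq 1$.

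Now I combine a volume bound with the LP. Since we are executing an iteration of the main loop, its guard at Step~\ref{step:mainloop} holds, i.e. $p(\sigma^{-1}(i)) > \tau^* + R - \epsilon$ for all $i \in B_0$; as $R > \epsilon$ (indeed $R \geq \epsilon + \zeta$, because $\sqrt{3-2\epsilon}\geq\epsilon$ for $\epsilon \in (0,1]$), this gives $p(\sigma^{-1}(i)) > \tau^*$ for all $i \in B_0$, hence $p(\hat T) = p_{j_0} + \sum_{i \in B_0} p(\sigma^{-1}(i)) > \epsilon + |B_0|\,\tau^*$. On the other hand, fix a feasible point $x$ of \eqref{eq:clp} with $\tau = \tau^*$. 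For each $j \in \hat T$ every configuration containing $j$ sits on a machine of $\Gamma(j) \subseteq B_0$, so $\sum_{i \in B_0}\sum_{C \in \conf{i,\tau^*}\,:\,j \in C} x_{iC} \geq 1$; summing these covering constraints weighted by $p_j$ and interchanging the order of summation yields
\[ p(\hat T) \;\le\; \sum_{i \in B_0}\sum_{C \in \conf{i,\tau^*}} x_{iC}\sum_{j \in C \cap \hat T} p_j \;\le\; \sum_{i \in B_0}\sum_{C \in \conf{i,\tau^*}} x_{iC}\, p_i(C) \;\le\; \tau^* \sum_{i\in B_0}\sum_{C} x_{iC} \;\le\; |B_0|\,\tau^*, \]
where the middle steps use $\sum_{j \in C \cap \hat T} p_j = \sum_{j \in C \cap \hat T} p_{ij} \leq p_i(C) \leq \tau^*$ and $\sum_{C} x_{iC} \leq 1$. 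This contradicts $p(\hat T) > \epsilon + |B_0|\,\tau^* \ge |B_0|\,\tau^*$, so no such iteration exists and $|B_0 \cap M_\sigma^b| \geq 1$ at the start of every iteration of the main loop.

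The step needing the most care is the trapping claim, which is where the design of $B_0$ is used essentially: it relies on $B_0$ being exactly the $G^s_\sigma$-reachable closure of $\Gamma(j_0)$ (so it absorbs every out-arc), on $B_0$ not having been altered since Step~\ref{step:init0}, and — crucially — on the contradiction hypothesis $B_0 \cap M_\sigma^b = \emptyset$, which is what guarantees that every job in $\sigma^{-1}(B_0)$ is small and hence that its incident arcs actually live in $G^s_\sigma$ rather than having been deleted together with $J_b$. Everything after that is a one-line volume estimate against the Configuration LP.
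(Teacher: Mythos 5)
Your proof takes a genuinely different route from the paper. The paper establishes $|B_0 \cap M^b_\sigma| \geq 1$ only at the moment $B_0$ is first set (where $B_0 = R^s_\sigma(\Gamma(j_0))$ holds tautologically, so the trapping/volume argument against $\clp{\tau^*}$ is immediate), and then shows that the first time big jobs on $B_0$ are disturbed by \texttt{BigUpdate} with $r=1$, the main loop necessarily exits in the same iteration. You instead attempt a uniform LP-infeasibility argument applicable at the start of every iteration. This is a reasonable plan, and your volume-versus-LP computation is correct as written.

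The gap is in the trapping claim. You assert that, at the current iteration, $B_0$ ``contains every machine reachable in $G^s_\sigma$ from a machine already in $B_0$,'' citing Proposition~\ref{prop:basicinvs}\ref{basicinvs:4}. But that proposition only gives $\Gamma(j) \subseteq L_{\leq \ell} \cup I_{\leq \ell}$ for $j \in \sigma^{-1}(L_{\leq \ell}) \cap J_s$, which is strictly weaker than the $\Gamma(j) \subseteq B_0$ you need (the LP sum would have to range over $L_{\leq \ell} \cup I_{\leq \ell}$, which is too large to force a contradiction). Moreover, the equality $B_0 = R^s_\sigma(\Gamma(j_0))$ was set under the $\sigma$ at Step~\ref{step:init0}; by Proposition~\ref{prop:basicinvs}\ref{basicinvs:2} the \emph{set} $B_0$ hasn't changed, but $\sigma$ and hence $G^s_\sigma$ have, so it is not automatic that $B_0$ is still reachable-closed. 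You would need a separate induction showing that every \texttt{SmallUpdate} call preserves ``$\Gamma(j) \subseteq B_0$ for all small $j \in \sigma^{-1}(B_0)$'': the argument is that a flow path from $S$ to $T$ (both disjoint from $B_0 \subseteq L_{\leq \ell}$) that ever enters $B_0$ could, by the inductive hypothesis, never leave it and therefore never reach $T$, so no such path touches $B_0$ and the small jobs on $B_0$ are untouched. That step is true and not long, but it is not Proposition~\ref{prop:basicinvs}\ref{basicinvs:4}, and without it the proof as written is incomplete.
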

\begin{proof}
  After the execution of Step~\ref{step:init0} of
  Algorithm~\ref{alg:localsearch}, $|A_0| = 1$ and
  $|B_0 \cap M^b_\sigma| \geq 1$. The latter inequality follows from
  the feasibility of $\clp{\tau^*}$ and the fact that
  $\tau^*+R-\epsilon \geq \tau^*$. We omit its proof here since it
  is similar to Lemma~\ref{lem:partialschedulesexist}. In the main
  loop of the algorithm, consider the first time (if at all) the
  schedule of big jobs on machines in $B_0$ is altered in
  Step~\ref{step:bigupdate}. Then it must be that
  $|I_1| \geq \mu_1\mu_2|B_0 \cap M^b_\sigma| > 0$ from
  Step~\ref{step:ifr}. Using
  Proposition~\ref{prop:canonicald}\ref{canonicald:2} in
  Step~\ref{step:canonicald}, $X$ contains a set of flow paths
  connecting sources in $B_0 \cap M^b_\sigma$ to $I_1$. Then,
  Proposition~\ref{prop:bigupdate}\ref{bigupdate:2} implies that
  $|B'_0 \cap M^b_{\sigma'}| < |B_0 \cap M^b_\sigma|$ after
  Step~\ref{step:bigupdate}. Then, the condition of the main loop of
  the algorithm is no longer satisfied since
  $p(\sigma'^{-1}(i)) \leq \tau^* + R - 1 \leq \tau^* + R - \epsilon$
  for some $i \in B'_0$. The condition of the \textbf{while} loop in
  Step~\ref{step:collapseq} is also not satisfied because $\ell = 0$
  after Step~\ref{step:resetell}. Therefore, the main loop is exited
  in this case.
\end{proof}

\begin{lemma}\label{lem:bi}
  At the beginning of each iteration of the main loop of the
  algorithm, for every $1 \leq i \leq \ell$,
  \[|B_i \cap M_\sigma^b| > \left(\delta(1-\mu_2) - 2\mu_2\right)\cdot |A_i|.\]
\end{lemma}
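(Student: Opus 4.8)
\emph{Proof plan.} We may assume $\delta(1-\mu_2)-2\mu_2>0$, as otherwise the bound is trivial; in particular $\delta>0$ (which holds whenever $\epsilon<1$). As elsewhere in this section we work in the regime $1\le\tau^*<2$; since $\sqrt{3-2\epsilon}<2-\epsilon$ we have $R<1$, hence $\tau^*+R-1<2$. Finally, since every job relocation moves a job of size $\epsilon$, we treat the flows in the networks $H^s_\sigma(\cdot,\cdot)$ as integral in units of $\epsilon$ (equivalently, round each sink capacity down to a multiple of $\epsilon$); this is harmless for all of the preceding statements. Fix $i$ and set $U\stackrel{\Delta}{=}M\setminus(L_{\leq i-1}\cup I_{\leq i-1}\cup A_i)$, so that $B_i=R^s_\sigma(A_i)\cap U$, and let $Y$ be the maximum flow in $H^s_\sigma(A_i,U)$. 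We aim to prove the chain
\[
 |B_i\cap M^b_\sigma|\;>\;|Y|\;\ge\;p(\sigma^{-1}(A_i))-|A_i|(\tau^*+R-1-\delta)\;>\;(1-\mu_2)(\tau^*+R-1)|A_i|-(\tau^*+R-1-\delta)|A_i|,
\]
whose right-hand side simplifies to $\bigl(\delta-\mu_2(\tau^*+R-1)\bigr)|A_i|>\bigl(\delta-\mu_2(2+\delta)\bigr)|A_i|=\bigl(\delta(1-\mu_2)-2\mu_2\bigr)|A_i|$, using $\tau^*+R-1<2<2+\delta$. The last inequality of the chain is Lemma~\ref{lem:noncollapsibility}: at the beginning of the main loop $L_i$ is not collapsible, so more than $(1-\mu_2)|A_i|$ machines $v\in A_i$ have $p(\sigma^{-1}(v))>\tau^*+R-1$. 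For the first inequality, observe that every flow path of $Y$ goes from $A_i$ to a sink of $U$ that is reachable from $A_i$ in $G^s_\sigma$, hence to a machine of $R^s_\sigma(A_i)\cap U=B_i$; a sink $w$ absorbs at most its capacity $c(w)$. By Proposition~\ref{prop:basicinvs}\ref{basicinvs:3} every $w\in B_i$ has $p(\sigma^{-1}(w))>\tau^*+R-\epsilon$, and by Definition~\ref{def:partialschedule}\ref{partialschedule:1} also $p(\sigma^{-1}(w))\le\tau^*+R$; substituting into the two cases of Definition~\ref{def:smallflownetwork} gives $c(w)<1$ when $w\in M^b_\sigma$ and $c(w)<\epsilon$ when $w\in M^s_\sigma$. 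Since $Y$ is integral in units of $\epsilon$, no flow path of $Y$ can terminate at a small machine of $B_i$, so $|Y|\le\sum_{w\in B_i\cap M^b_\sigma}c(w)<|B_i\cap M^b_\sigma|$ (and, in particular, $B_i\cap M^b_\sigma\neq\emptyset$, since the middle term of the chain is positive).

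It remains to prove the second inequality, $|Y|\ge p(\sigma^{-1}(A_i))-|A_i|(\tau^*+R-1-\delta)$. Consider the execution of \texttt{BuildLayer} that set up $L_i$, with $\ell=i-1$ the index of the last layer at that time. Let $i_1,\dots,i_T$ be the machines appended to $S$ (so $A_i=\{i_1,\dots,i_T\}$, $T=|A_i|$), put $S_t=\{i_1,\dots,i_t\}$ and $U_t=M\setminus(L_{\leq i-1}\cup I_{\leq i-1}\cup S_t)$, and let $\sigma_t$ be the schedule immediately after the $t$-th call to \texttt{SmallUpdate}. That call is $\texttt{SmallUpdate}(\sigma_{t-1},S_t,U_t)$ and, by Proposition~\ref{prop:smallupdate}, it produces $\sigma_t$ together with a maximum flow $Y_t$ of $H^s_{\sigma_t}(S_t,U_t)$ obtained from the maximum flow $\hat Y_t$ of $H^s_{\sigma_{t-1}}(S_t,U_t)$ by deleting the $m_t$ paths used for relocations; each such relocation carries one $\epsilon$-job from a source in $S_t$ to a machine of $U_t$, so $|Y_t|=|\hat Y_t|-\epsilon m_t$ and $p(\sigma_t^{-1}(S_t))=p(\sigma_{t-1}^{-1}(S_{t-1}))+p(\sigma_{t-1}^{-1}(i_t))-\epsilon m_t$ (using $S_t=S_{t-1}\cup\{i_t\}$). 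The test \texttt{IsAddableQ} passed by $i_t$ relative to $S_{t-1}$ and $\sigma_{t-1}$ gives $|\hat Y_t|\ge|H^s_{\sigma_{t-1}}(S_{t-1},U_{t-1})|+p(\sigma_{t-1}^{-1}(i_t))-(\tau^*+R-1-\delta)$, and $|H^s_{\sigma_{t-1}}(S_{t-1},U_{t-1})|=|Y_{t-1}|$ by Proposition~\ref{prop:smallupdate} applied to step $t-1$ (for $t=1$ this reads $|H^s_{\sigma_0}(\emptyset,U_0)|=0$; set $|Y_0|=0$). Subtracting the displayed relations yields $|Y_t|-p(\sigma_t^{-1}(S_t))\ge|Y_{t-1}|-p(\sigma_{t-1}^{-1}(S_{t-1}))-(\tau^*+R-1-\delta)$, and telescoping over $t=1,\dots,T$ gives $|Y_T|\ge p(\sigma_T^{-1}(A_i))-|A_i|(\tau^*+R-1-\delta)$, where $Y_T$ is exactly the maximum flow in $H^s_{\sigma_T}(A_i,U_T)$ supplied by the last \texttt{SmallUpdate} of the call (this is the flow invoked in the proof of Proposition~\ref{prop:basicinvs}\ref{basicinvs:3}). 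Thus the desired inequality holds for the schedule $\sigma_T$ at the end of the \texttt{BuildLayer} call.

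To finish, one must check that this lower bound on $|H^s_\sigma(A_i,U)|$, once established when $L_i$ is set up, persists until the beginning of the main-loop iteration in question (the upper bound and Lemma~\ref{lem:noncollapsibility} already refer to that moment). The operations performed in between are calls to \texttt{BuildLayer} creating layers of higher index, and iterations of the inner loop of Step~\ref{step:collapseq} whose choice $r$ in Step~\ref{step:chooser} satisfies $r-1\ge i$ (if $r-1<i$ then $L_i$ is discarded in Step~\ref{step:discard}). When $r-1=i$, the set $B_i$ is rebuilt in Step~\ref{step:updateB} from $R^s_\sigma(A_i)$ after the \texttt{SmallUpdate} of Step~\ref{step:smallupdate2}, and the inequality is re-derived exactly as in the previous paragraph. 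In the remaining cases one argues, using Proposition~\ref{prop:basicinvs}\ref{basicinvs:1},\ref{basicinvs:4}, Proposition~\ref{prop:bigupdate}, and the vertex-disjointness of the update flows, that every such operation leaves the two sides of the inequality in the same relation: any $\epsilon$-job relocated off (or onto) a machine of $A_i$ is matched by a corresponding change in the value of $H^s_\sigma(A_i,U)$, so the slack is preserved (and \texttt{BigUpdate} moves no small jobs at all). This bookkeeping---tracking how the small-job flow out of $A_i$ evolves across the \texttt{SmallUpdate} and \texttt{BigUpdate} calls, and certifying that updates to higher-indexed layers and to the sets $I_j$ do not erode it---is the main obstacle; it rests on the flow identities developed around Proposition~\ref{prop:smallupdate} and on the structural invariants established in Section~\ref{sec:maintainingmulti}.
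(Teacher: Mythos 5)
Your chain of inequalities
\[
|B_i\cap M^b_\sigma|\;>\;|Y|\;\ge\;p(\sigma^{-1}(A_i))-|A_i|(\tau^*+R-1-\delta)\;>\;\bigl(\delta(1-\mu_2)-2\mu_2\bigr)|A_i|
\]
is the right skeleton, and each of your three local arguments is sound and closely tracks the paper's: the sink-capacity bound with the ``integral in units of $\epsilon$'' observation is a mild rephrasing of what the paper derives from Claim~\ref{claim:inv1} (sinks lie in $M^b_\sigma$) and Proposition~\ref{prop:smallupdate}\ref{smallupdate:3}; your telescoping through the \texttt{BuildLayer} call is exactly Claim~\ref{claim:inv1}; and your use of Lemma~\ref{lem:noncollapsibility} is what the paper does. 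So the blueprint is essentially identical to the paper's.

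The genuine gap is the persistence paragraph, and it is not peripheral --- it is where the actual work of Lemma~\ref{lem:bi} lies. Two specific problems. First, for $r-1=i$ you claim the inequality is ``re-derived exactly as in the previous paragraph,'' but the mechanism there was the incremental \texttt{IsAddableQ}-driven telescope over the growing set $S$; after Step~\ref{step:bigupdate} and Step~\ref{step:smallupdate2}, $A_i$ is already fixed, so that argument does not apply. What the paper actually uses is that \texttt{BigUpdate} strips big jobs off some machines of $B_i$, which by Definition~\ref{def:smallflownetwork} raises those sink capacities by exactly $\epsilon$, so the maximum flow can only increase; then Proposition~\ref{prop:smallupdate}\ref{smallupdate:2} propagates the inequality across the subsequent \texttt{SmallUpdate}. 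That sink-capacity-increase observation is a distinct idea you did not state, and without it the $r-1=i$ case is open. Second, for $r-1>i$ you offer a plausible conservation heuristic (``any $\epsilon$-job relocated off (or onto) $A_i$ is matched by a change in $|H^s_\sigma(A_i,U)|$''), but you expressly concede you have not done the bookkeeping. In fact this is the delicate part: \texttt{SmallUpdate} on higher-indexed layers never touches $L_{\le i}$ (by Proposition~\ref{prop:basicinvs}\ref{basicinvs:4}), but it \emph{can} reroute small jobs on $I_{\le i-1}$, which alters $G^s_\sigma$ on vertices the $A_i\to B_i$ flow may traverse. One needs an argument --- the paper's choice is to reduce everything to the conserved quantity $|B_i\cap M^b_\sigma|$ itself, which is invariant under such operations, and only re-open the flow argument when $r-1=i$ --- rather than tracking the flow value itself. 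Your formulation (conserving $|Y|-p(\sigma^{-1}(A_i))$ rather than the cardinality) makes this harder than it has to be.

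A small secondary point: your claim $R<1$ follows from $\sqrt{3-2\epsilon}<2-\epsilon$ alone is not correct, since $R=(\sqrt{3-2\epsilon}+\epsilon)/2+\zeta<1+\zeta$. You also need $\zeta$ sufficiently small, and the inequality you actually use, $\tau^*+R-1<2+\delta$, likewise requires $\zeta$ not too large (the paper's $\tau^*-1+R-\delta\le 2$ has the same implicit restriction). This is not fatal, but state it.
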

\begin{proof}
  We first prove a general claim which will then ease the proof the lemma.
  \begin{claim}\label{claim:inv1}
    The \textbf{while} loop in Step~\ref{step:makeS} of
    Algorithm~\ref{alg:buildlayer} that iteratively builds $S$ and
    modifies $\sigma$ satisfies the following invariant, where,
    $T \stackrel{\Delta}{=} M \setminus (L_{\leq \ell} \cup I_{\leq
      \ell} \cup S)$, as defined in Step~\ref{step:smallupdate}, and
    $X$ denotes the maximum flow in the network $H^s_\sigma(S, T)$.
    \begin{center}
      \fbox{\parbox{0.8\textwidth}{
          \begin{align*}
            \epsilon |X| \geq \sum_{i' \in S}  &\left(p(\sigma^{-1}(i')) - (\tau^*- 1 + R-\delta)\right),\\
            \forall f \in X,\quad & f^{\text{sink}} \in M^b_\sigma.
          \end{align*}}}
    \end{center}

  \end{claim}
  \begin{proof}
    Before the first iteration of the \textbf{while} loop,
    $S = \emptyset$ and the statement is vacuously true. Suppose it is
    true before some iteration for a set $S$. Let $T$ and $X$ be as in
    the claim. If $i \in M$ is chosen in Step~\ref{step:makeS} then,
    from the description of the procedure $\Call{IsAddableQ}{\ldots}$
    in Algorithm~\ref{alg:buildlayer}, we can conclude that
    \begin{enumerate}
    \item $i \in  M_\sigma^s \setminus (L_{\leq \ell} \cup I_{\leq \ell} \cup S)$,
    \item
      $|H^b_\sigma(B_{\leq \ell} \cap M_\sigma^b, S \cup \{i\})| =
      |H^b_\sigma(B_{\leq \ell} \cap M_\sigma^b, S)| + 1,$ and
    \item
      $|H^s_\sigma(S \cup \{i\}, T \setminus \{i\})| \; \geq \;
      |H^s_\sigma(S, T)| \; + \; \left(p(\sigma^{-1}(i)) - (\tau^* -1 +
        R - \delta)\right)$.
    \end{enumerate}
    By the induction hypothesis and the first property, $i$ cannot be
    a sink of some flow path in $X$. So, $X$ is a valid flow in
    $H^s_\sigma(S \cup \{i\}, T \setminus \{i\})$. Using the third
    property we therefore conclude that $X$ can be augmented to a
    maximum flow $X'$ in $H^s_\sigma(S \cup \{i\}, T \setminus \{i\})$
    such that
    \[\epsilon |X'| \geq \epsilon |X| +
      \left(p(\sigma^{-1}(i))-(\tau^*-1+R-\delta)\right) \geq \sum_{i'
        \in S \cup \{i\}}
      \left(p(\sigma^{-1}(i'))-(\tau^*-1+R-\delta)\right),\] where the
    second inequality uses the induction hypothesis. In
    Step~\ref{step:smallupdate}, a call to
    $\Call{SmallUpdate}{\ldots}$ is made. In this call, a maximum flow
    in $H^s_\sigma(S \cup \{i\}, T \setminus \{i\})$, say $\bar{X}$,
    is computed at the beginning of the \textbf{while} loop in
    Algorithm~\ref{alg:smallupdate}. By using
    Proposition~\ref{prop:smallupdate}\ref{smallupdate:2}, we can
    conclude that a partial schedule $\sigma'$ and a maximum flow
    $\bar{X}'$ in $H^s_{\sigma'}(S \cup \{i\}, T \setminus \{i\})$ are
    computed at the of the \textbf{while} loop which satisfy the property
    \[\epsilon |\bar{X}'| \geq \sum_{i' \in S \cup \{i\}}
      \left(p({\sigma'}^{-1}(i'))-(\tau^*-1+R-\delta)\right).\]

    Furthermore, Proposition~\ref{prop:smallupdate}\ref{smallupdate:3}
    implies that $f^{\text{sink}} \in M^b_\sigma$ for all
    $f\in \bar{X}'$. This is because the vertex capacities of small
    machine sinks $i'$ is defined to be
    $\tau^*+R-p(\sigma'^{-1}(i'))$ in
    Definition~\ref{def:smallflownetwork},
    $p(\sigma'^{-1}(i')) > \tau^* + R - \epsilon$, and flow paths
    carry flows of value $\epsilon$.
  \end{proof}

  Let $L_0,\dots,L_\ell$ denote the set of layers at the beginning of
  the current iteration. Fix some $1 \leq i \leq \ell$. By
  Lemma~\ref{lem:noncollapsibility},
  \[|\{ i' \in A_i \; | \; p(\sigma^{-1}(i')) \leq \tau^* + R - 1\}| <
    \mu_2|A_i|.\]

  Now, consider the iteration (some previous one) in which $L_i$ was
  constructed and let $\sigma^{(b)}$ be the partial schedule at the
  end of Step~\ref{step:assignS} in Algorithm~\ref{alg:buildlayer}
  during the corresponding call to $\Call{BuildLayer}{\ldots}$. Using
  Claim~\ref{claim:inv1}, after the assignments in
  Steps~\ref{step:assignS} and~\ref{step:makeB}, $X$ is a maximum flow
  in $H^s_{\sigma^{(b)}}(A_i, B_i)$ such that
  \[\epsilon |X| \geq \sum_{i' \in A_i}
    \left(p({\sigma^{(b)}}^{-1}(i))-(\tau^*-1+R-\delta)\right) >
    \delta (1-\mu_2)|A_i| - 2\mu_2|A_i|,\] where we use
  Lemma~\ref{lem:noncollapsibility} in the final step along with the
  bound $(\tau^*-1+R-\delta) \leq 2$. Now consider a sink
  $f^{\text{sink}} \in M^b_{\sigma^{(b)}}$ used by some flow path
  $f \in X$. By Proposition~\ref{prop:smallupdate}\ref{smallupdate:3},
  $p({\sigma^{(b)}}^{-1}(f \cap T)) > \tau^* + R
  -\epsilon$. Definition~\ref{def:smallflownetwork} states that the
  vertex capacity
  $c(f^{\text{sink}}) = 1 + \tau^* + R - p(\sigma'^{-1}(f \cap
  T))-\epsilon$ since $f^{\text{sink}} \in M^b_{\sigma^{(b)}}$ from
  Claim~\ref{claim:inv1}. Thus, $c(f^{\text{sink}}) < 1$. This proves
  that at the iteration in which $L_i$ was constructed, by flow conservation,
  $|B_i \cap M^b_{\sigma^{(b)}}| > (\delta(1-\mu_2) - 2\mu_2)|A_i|$.

  In the intervening iterations, the variable $r$ in
  Step~\ref{step:chooser} of Algorithm~\ref{alg:localsearch} might
  have been chosen to be $i+1$, and, therefore,
  $|B_i \cap M^b_\sigma|$ may have reduced in
  Step~\ref{step:bigupdate} and
  $|\{ i' \in A_i \; | \; p(\sigma^{-1}(i')) > \tau^* + R - 1\}|$ may
  have reduced in Step~\ref{step:smallupdate2}. In such an event, all
  the layers following layer $L_i$ would have been discarded in
  Step~\ref{step:discard}. But in our current iteration, the set of
  layers is $L_0,\dots,L_\ell$. So, it only remains to prove that
  $|B_\ell \cap M^b_\sigma| >
  \left(\delta(1-\mu_2)-2\mu_2\right)|A_\ell|$ in the current
  iteration after one or more events where $r$ was chosen to be
  $\ell+1$ in the intervening iterations. The claim is true in this
  case too by arguing as follows. In each intervening iteration, where
  $r$ was chosen to be $\ell+1$, after Step~\ref{step:bigupdate}, the
  partial schedule changes from $\sigma^{(b)}$ to $\sigma^{(a)}$. Due
  to the way sink capacities were set in
  Definition~\ref{def:smallflownetwork}, the new flow network
  $H^s_{\sigma^{(a)}}(A_\ell,B_\ell)$ can be obtained from the old
  flow network $H^s_{\sigma^{(b)}}(A_\ell,B_\ell)$ by increasing the
  capacities of the machines
  $M^b_{\sigma^{(b)}} \setminus M^b_{\sigma^{(a)}}$ (those machines in
  $B_\ell$ from which big jobs were moved in
  Step~\ref{step:bigupdate}) by $\epsilon$. Using the same arguments
  as before after applying Lemma~\ref{lem:noncollapsibility} proves
  the lemma.
\end{proof}



\subsubsection{Maintaining Multiple Sets of Disjoint Paths}\label{sec:maintainingmulti}

We now prove an invariant of Algorithm~\ref{alg:localsearch}
concerning the updates performed in Step~\ref{step:bigupdate} through
the statement
$\sigma \gets \Call{BigUpdate}{\sigma, \{f \in X \; | \; f \cap I_r
  \neq \emptyset\}}$.

\begin{theorem}\label{thm:bigupdate}
  At the beginning of each iteration of the main loop of the
  algorithm, for every $0 \leq i \leq \ell-1$,
  \[|H^b_\sigma(B_{\leq i} \cap M^b_\sigma, A_{i+1} \cup I_{\leq i+1})| \geq
    |A_{i+1}|.\]

  Furthermore, at the beginning of each execution of the
  \textbf{while} loop in Step~\ref{step:collapseq} of
  Algorithm~\ref{alg:localsearch}, for all $0 \leq i \leq \ell-1$,
  $|H^b_\sigma(B_{\leq i} \cap M^b_\sigma, A_{i+1} \cup I_{\leq i+1})|
  \geq |A_{i+1}|.$
\end{theorem}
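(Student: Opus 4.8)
I would prove both assertions together as a single invariant, by induction on the run of Algorithm~\ref{alg:localsearch}, requiring it to hold whenever the algorithm is about to execute Step~\ref{step:finishconstruction} (top of the main loop) or about to test the condition of Step~\ref{step:collapseq} (top of the inner loop). Between two consecutive such instants the state changes in exactly one of three ways: (i) a call $\Call{BuildLayer}{\ldots}$ followed by the increment in Step~\ref{step:initI}; (ii) a pass through the inner loop in which the test of Step~\ref{step:ifr} fails, so that only the reassignment $I_i\gets I_i'$ of Step~\ref{step:assigncanon} takes effect; or (iii) a pass through the inner loop in which that test succeeds, so Steps~\ref{step:bigupdate}--\ref{step:resetell} run. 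The base case is immediate: $\ell=0$ right after Step~\ref{step:init0}, so the range $0\le i\le\ell-1$ is empty. Two facts are used throughout: $\Call{SmallUpdate}{\ldots}$ relocates only small jobs and hence fixes $G^b_\sigma$, $M^b_\sigma$ and every value $|H^b_\sigma(\cdot,\cdot)|$, so only the layer creation inside $\Call{BuildLayer}{\ldots}$ and the call $\Call{BigUpdate}{\ldots}$ of Step~\ref{step:bigupdate} matter; and, by the unit capacities in the networks $H^b_\sigma$ together with $|\sigma^{-1}(i)\cap J_b|\le 1$, every machine and every big job carries at most one unit of flow, so flow decompositions are into vertex-disjoint paths (Proposition~\ref{prop:vertexdisjointpaths}).

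Event (i). For $0\le i\le\ell-1$ the asserted bound is exactly the old one, and none of $B_{\le i},A_{i+1},I_{\le i+1}$ nor $G^b_\sigma$ changed (only small jobs moved), so it holds by induction. For the new index $i=\ell$ I would establish, in the style of Claim~\ref{claim:inv1}, the loop invariant $|H^b_\sigma(B_{\le\ell}\cap M^b_\sigma,S)|=|S|$ for the set $S$ grown in Step~\ref{step:makeS}: $\Call{IsAddableQ}{\ldots}$ certifies a machine $i$ for addition only when $|H^b_\sigma(B_{\le\ell}\cap M^b_\sigma,S\cup\{i\})|=|H^b_\sigma(B_{\le\ell}\cap M^b_\sigma,S)|+1$, and the interleaved $\Call{SmallUpdate}{\ldots}$ calls leave the network alone. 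At the end $A_{\ell+1}=S$, $I_{\ell+1}=\emptyset$, and adjoining the sinks $I_{\le\ell}$ only raises the flow, so $|H^b_\sigma(B_{\le\ell}\cap M^b_\sigma,A_{\ell+1}\cup I_{\le\ell+1})|\ge|A_{\ell+1}|$.

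Events (ii)/(iii): the canonical swap. The crux is that replacing $I_{\le i+1}$ by $I'_{\le i+1}$ keeps the bound. Write $X_{\le i}$ for the sub-flow of the flow $X$ computed in Step~\ref{step:canonicald} made of the paths with sources in $B_{\le i}$; by Proposition~\ref{prop:canonicald}\ref{canonicald:1} its sinks form exactly $I'_{\le i+1}$, and by Proposition~\ref{prop:canonicald}\ref{canonicald:2} its value is $|H^b_\sigma(B_{\le i}\cap M^b_\sigma,I_{\le\ell}\cup I)|$, so $X_{\le i}$ is a maximum flow from $B_{\le i}\cap M^b_\sigma$ into the whole sink set $I_{\le\ell}\cup I$ and it saturates $I'_{\le i+1}$. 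Now adjoin $A_{i+1}$ as extra sinks: since small machines have no outgoing arcs in $G^b_\sigma$, adding $A_{i+1}$ does not change reachability among the other vertices, so in the residual network of $X_{\le i}$ inside $H^b_\sigma(B_{\le i}\cap M^b_\sigma,A_{i+1}\cup I_{\le\ell}\cup I)$ the only unsaturated sinks reachable from the sources lie in $A_{i+1}$ (a reachable unsaturated sink of $I_{\le\ell}\cup I$ would contradict maximality of $X_{\le i}$). Augmenting $X_{\le i}$ to a maximum flow there therefore only adds flow into $A_{i+1}$, keeps all of $I'_{\le i+1}$ saturated, and uses no other sink; the result is a flow into $A_{i+1}\cup I'_{\le i+1}$ of value $|H^b_\sigma(B_{\le i}\cap M^b_\sigma,A_{i+1}\cup I_{\le\ell}\cup I)|\ge|H^b_\sigma(B_{\le i}\cap M^b_\sigma,A_{i+1}\cup I_{\le i+1})|\ge|A_{i+1}|$, using $I_{\le i+1}\subseteq I_{\le\ell}\cup I$ (as $i+1\le\ell$) and then the inductive hypothesis. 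This settles event (ii) and re-establishes the invariant just before $\Call{BigUpdate}{\ldots}$ in event (iii).

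Event (iii): the $\Call{BigUpdate}{\ldots}$ step. Since $I_r=I'_r$ is precisely the sink set of the $X$-paths with sources in $B_{r-1}$ (Proposition~\ref{prop:canonicald}\ref{canonicald:1}), the call in Step~\ref{step:bigupdate} reverses exactly those paths, and this is the only change from $G^b_\sigma$ to $G^b_{\sigma'}$; moreover, by Proposition~\ref{prop:bigupdate}\ref{bigupdate:2} and the structure of $\Call{BigUpdate}{\ldots}$, the machines whose big/small status changes lie in $B_{r-1}\cup I_r$, which is disjoint from $B_{\le i}$ and from $A_{i+1}\cup I'_{\le i+1}$ for every surviving index $0\le i\le r-2$ (Proposition~\ref{prop:basicinvs}\ref{basicinvs:1}); after the subsequent discarding of layers, that is exactly the range we must re-verify, with the relevant $A$- and $I$-sets unchanged (the $I$'s having been set to the canonical values $I'$ in Step~\ref{step:assigncanon}). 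Since the reversed paths are a sub-collection of the vertex-disjoint decomposition of $X$ they are vertex-disjoint from $X_{\le i}$, so $X_{\le i}$ stays a valid — and, via Proposition~\ref{prop:canonicald}\ref{canonicald:3} and the residual-reachability facts of Proposition~\ref{prop:flows}, still a maximum — flow from $B_{\le i}\cap M^b_{\sigma'}$ to $I'_{\le i+1}$ in $G^b_{\sigma'}$, and one would like to re-run the augmenting argument of the previous paragraph inside $G^b_{\sigma'}$. The point that genuinely needs care — and the main obstacle of the whole proof — is that the augmentation of $X_{\le i}$ toward $A_{i+1}$ must not be disturbed by the reversal: a maximum flow into $A_{i+1}\cup I'_{\le i+1}$ that merely \emph{exists} in $G^b_\sigma$ need not avoid the reversed paths. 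I would therefore carry a strengthened inductive hypothesis in which the witnessing flows are chosen coherently with the canonical flow $X$ — for instance so that for each $i$ the witness consists of $X_{\le i}$ together with vertex-disjoint paths into $A_{i+1}$ that avoid all of $X$ — which makes disjointness from the higher-layer $X$-paths, and hence survival of the witness under the reversal, automatic. Everything outside this coherence condition is the bookkeeping carried out above.
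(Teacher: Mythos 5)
Your proposal has the right overall shape (induction on the algorithm's loop iterations, breaking into the three events), and your argument for event (ii) — augmenting the sub-flow $X_{\le i}$ by paths that can only terminate in $A_{i+1}$ because $X_{\le i}$ is already maximum into $I_{\le\ell}\cup I$ — is a clean and correct observation. But the proof stops precisely where the real work begins. You correctly identify that event (iii) is the crux (the augmenting paths into $A_{i+1}$ might pass through vertices of $X_r$ and so die when \texttt{BigUpdate} reverses those paths), and then write ``I would therefore carry a strengthened inductive hypothesis in which the witnessing flows are chosen coherently with the canonical flow $X$.'' You never establish that such a coherent witness exists, nor that it can be maintained: in particular the canonical flow $X$ is recomputed at every iteration of the inner \textbf{while} loop, so a witness chosen coherently with the previous $X$ is \emph{not} automatically coherent with the next one, and your own augmentation argument offers no control over which vertices the augmenting paths touch (a forward arc of $G^b_\sigma$ unused by $X_{\le i}$ is available in the residual network even when it is used by $X_{\ge i+1}$). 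Asserting the strengthened hypothesis is therefore not a reduction to bookkeeping; it is the entire theorem restated.

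The paper resolves exactly this difficulty, and does so by a different route than the one you gesture at. Rather than strengthening the inductive hypothesis, it keeps the hypothesis as a bare cardinality bound, takes an \emph{arbitrary} maximum flow $Y$ witnessing it, and proves a Source Alignment Lemma (Claim~\ref{claim:disjoint}) via an explicit augmentation procedure: starting from $D\gets X$, it repeatedly grafts prefixes of $Y$-paths onto $D$ so that eventually $S_Y\subseteq S_D$ while the $D$ Invariant (Claim~\ref{claim:thed}) holds, namely that $D$ decomposes as $D_1\cup D_2\cup D_3$ with $D_3=X_{\ge i+2}$ untouched. The termination argument uses a carefully chosen potential (the number of non-contiguous intersections $\mathcal{N}(x,y)$), and the invariant that the grafting never touches $D_3$ is itself derived from Proposition~\ref{prop:canonicald}\ref{canonicald:3} — this is precisely the nontrivial content absent from your proposal. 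Two smaller remarks: the paper's claim is stated for ``the end of each iteration of the inner loop,'' but its proof as written only treats the branch where the \textbf{if} of Step~\ref{step:ifr} is taken (the other branch is vacuous because Corollary~\ref{corr:ifr} is proved jointly with this theorem by a mutual induction the paper leaves implicit); your event (ii) argument, run unchanged in $G^b_\sigma$, would cover that vacuous branch directly and more transparently. And ``avoid all of $X$'' is stronger than what is needed or true — the right target is to leave $X_{\ge i+2}$ (in particular $X_r$) untouched, which is exactly what the $D$ Invariant guarantees; $X_{i+1}$ is allowed to, and generally does, get reshuffled.
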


\begin{proof}
  Consider the first statement. Before the first iteration of the main
  loop, there is nothing to prove. Assume the statement to be true at
  the beginning of iteration $u$ of the main loop for some $u \geq
  1$. We will now show that the statement holds at the end of
  iteration $u$ as well.

  Following Step~\ref{step:finishconstruction} of iteration $u$, the
  newly created layer $L_{\ell+1}$ has the property
  $I_{\ell+1}=\emptyset$ and
  $|H^b_\sigma(B_{\leq \ell} \cap M^b_\sigma, A_{\ell+1})| =
  |A_{\ell+1}|,$ by the construction of set $S$ in
  Step~\ref{step:makeS} of Algorithm~\ref{alg:buildlayer}. For
  $ 0 \leq i \leq \ell - 1$, the statement holds by the induction
  hypothesis and the fact that $\sigma$ was not changed in
  Step~\ref{step:finishconstruction} of
  Algorithm~\ref{alg:localsearch} in a way that affects the graph
  $G^b_\sigma$ (indeed, only small jobs are moved). As in
  Step~\ref{step:initI} of the algorithm, we also update $\ell$ to
  $\ell+1$ in this proof, and so we have at the end of
  Step~\ref{step:initI}, for all $0 \leq i\leq \ell-1$,
  \begin{align}
    |H^b_\sigma(B_{\leq i} \cap M^b_\sigma, A_{i+1} \cup I_{\leq
    i+1})| &\geq |A_{i+1}|.\label{eq:startofinnerwhile}
  \end{align}
  Now iteration $u$ of the main loop could potentially involve one or
  more iterations of the \textbf{while} loop in
  Step~\ref{step:collapseq}. If there are none, we are done using
  \eqref{eq:startofinnerwhile}. The rest of the proof follows from
  Claim~\ref{claim:innerwhile} which completes the induction on $u$,
  and also proves the second statement in Theorem~\ref{thm:bigupdate}.
\end{proof}

\begin{claim}\label{claim:innerwhile}
  At the end of the execution of some iteration of the \textbf{while}
  loop in Step~\ref{step:collapseq} of
  Algorithm~\ref{alg:localsearch}, for all $0 \leq i \leq \ell-1$,
  $|H^b_\sigma(B_{\leq i} \cap M^b_\sigma, A_{i+1} \cup I_{\leq i+1})|
  \geq |A_{i+1}|,$ assuming it holds at the beginning
  of the same iteration of the \textbf{while} loop.
\end{claim}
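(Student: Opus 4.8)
The plan is to track, through a single iteration of the inner \textbf{while} loop, what happens to the quantities $|H^b_\sigma(B_{\leq i} \cap M^b_\sigma, A_{i+1} \cup I_{\leq i+1})|$. The iteration does three things that affect the big-job assignment graph: it recomputes the $I_i$'s via \Call{CanonicalDecomposition}{\ldots} (Step~\ref{step:canonicald}--\ref{step:assigncanon}), and — only if the condition in Step~\ref{step:ifr} fires — it performs a \Call{BigUpdate}{\ldots} (Step~\ref{step:bigupdate}), a \Call{SmallUpdate}{\ldots} (Step~\ref{step:smallupdate2}), recomputes $B_{r-1}$ (Step~\ref{step:updateB}), and truncates to $\ell = r-1$ (Step~\ref{step:discard}--\ref{step:resetell}). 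If Step~\ref{step:ifr} does not fire, only the $I_i$'s change; here I would invoke Proposition~\ref{prop:canonicald}\ref{canonicald:2}, which says that the new flow $X$ already certifies $|H^b_\sigma(B_{\leq i} \cap M^b_\sigma, I'_{\leq i+1})| = |H^b_\sigma(B_{\leq i} \cap M^b_\sigma, I_{\leq \ell} \cup I)| \geq |H^b_\sigma(B_{\leq i} \cap M^b_\sigma, I_{\leq i+1})| \geq |A_{i+1}|$ by monotonicity of max-flow in the sink set and the hypothesis at the beginning of the iteration. (One subtlety: the old $I_{\leq i+1}$ need not be contained in the new $I_{\leq \ell} \cup I$; I would instead argue via the old flow directly, or note that $X$ restricted to sources in $B_{\leq i}$ uses exactly $I'_{\leq i+1}$ sinks by the canonical structure, and that the old disjoint paths for $A_{i+1} \cup I_{\leq i+1}$ can be re-routed. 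This re-routing is the technical heart — see below.)

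Next I would handle the case where Step~\ref{step:ifr} fires with chosen index $r$. After truncation the surviving layers are $L_0, \dots, L_{r-1}$, so I only need the claim for $0 \le i \le r-2$. By the choice of $r$ as the \emph{smallest} index with $|I_t| \geq \mu_1\mu_2|B_{t-1} \cap M^b_\sigma|$, for all $t \le r-1$ we have $|I_t| < \mu_1\mu_2 |B_{t-1}\cap M^b_\sigma|$ — but actually the relevant point is that the \Call{BigUpdate}{\ldots} only reroutes flow paths that meet $I_r$. I would argue: the flow $X$ from the canonical decomposition, after \Call{BigUpdate}{\ldots}, becomes a flow in $G^b_{\sigma'}$ where the paths through $I_r$ have been reversed; for indices $i \le r-2$, Proposition~\ref{prop:canonicald}\ref{canonicald:3} gives that the sinks $I'_{\leq i+1}$ already suffice, i.e. $|H^b_\sigma(B_{\leq i}\cap M^b_\sigma, I'_{\leq i+1})| = |H^b_\sigma(B_{\leq i}\cap M^b_\sigma, I'_{\leq \ell})|$, so no path with a source in $B_{\leq i}$ uses a sink in $I'_{\geq i+2} \supseteq I_r$ (since $r-1 \geq i+1$). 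Hence \Call{BigUpdate}{\ldots} does not disturb the sub-flow certifying the bound for layers $\le i$, and $G^b_\sigma$ restricted to those vertices is unchanged, giving $|H^b_{\sigma'}(B_{\leq i}\cap M^b_{\sigma'}, A_{i+1} \cup I'_{\leq i+1})| \geq |A_{i+1}|$. Then \Call{SmallUpdate}{\ldots} and the recomputation of $B_{r-1}$ touch only small jobs and only layer $r-1$, so they do not affect $G^b_\sigma$ on layers $\le r-2$ at all, preserving everything already established.

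I expect the main obstacle to be the bookkeeping around \emph{which} flow paths are disturbed by \Call{BigUpdate}{\ldots} and the precise sense in which the max-flow values are preserved: one must combine (i) Proposition~\ref{prop:bigupdate}\ref{bigupdate:1} (job set on machines unchanged, so $G^b$ changes only by path reversals), (ii) the canonical-decomposition identities of Proposition~\ref{prop:canonicald} to localize the effect of reversing $I_r$-paths to layers of index $\geq r-1$, and (iii) a clean argument — most likely via the residual network / augmenting path viewpoint, exactly as in the proof of Proposition~\ref{prop:smallupdate}\ref{smallupdate:4} — that reversing a set of vertex-disjoint source-to-sink paths in $G^b_\sigma$ cannot decrease the number of vertex-disjoint paths between a source set $B_{\leq i} \cap M^b_\sigma$ and a sink set that is disjoint from those paths. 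Once these three pieces are in place the claim for both the fired and non-fired cases follows; assembling them carefully, while keeping straight the distinction between the pre-iteration $I_i$'s and the post-decomposition $I'_i$'s, is where the real work lies.
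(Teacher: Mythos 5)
The overall framing is right — decompose the iteration into canonical decomposition, conditional \textsc{BigUpdate}, \textsc{SmallUpdate}, and truncation to $r-1$, and note that only the first two touch $G^b_\sigma$ — but the argument you give for the central step (the effect of \textsc{BigUpdate} on layers $i \le r-2$) does not work, and this is exactly where the paper has to prove something nontrivial.

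The gap is a conflation of two different flows. Proposition~\ref{prop:canonicald}\ref{canonicald:3} tells you that the \emph{canonical flow} $X$ routes everything from $B_{\leq i}$ into $I'_{\leq i+1}$, so $X_{\leq i+1}$ is vertex-disjoint from $X_r$. That much is correct, and it means the paths $X_{\leq i+1}$ survive \textsc{BigUpdate}. But the invariant you need to propagate is $|H^b_\sigma(B_{\leq i}\cap M^b_\sigma, A_{i+1}\cup I_{\leq i+1})| \geq |A_{i+1}|$, which is certified by a \emph{different} flow $Y$: its sinks live in $A_{i+1}\cup I_{\leq i+1}$ (with $A_{i+1}$ a potentially large set that has nothing to do with the $I$ sets, and $I_{\leq i+1}$ the \emph{old} $I$ sets, not $I'_{\leq i+1}$), and $|Y|\geq |A_{i+1}|$ may far exceed $|X_{\leq i+1}|$. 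Nothing in the canonical decomposition says $Y$'s paths avoid the internal vertices of $X_r$. In fact in $G^b_\sigma$ any two paths passing through the same machine $m$ must share the unique outgoing arc $(m,\sigma^{-1}(m)\cap J_b)$, so if a path of $Y$ and a path of $X_r$ meet at an interior machine they share an edge, and reversing $X_r$ destroys that $Y$-path. So the claim ``$G^b_\sigma$ restricted to those vertices is unchanged, hence \textsc{BigUpdate} does not disturb the sub-flow certifying the bound'' is simply not established. Likewise the heuristic you state in item (iii) --- that reversing a set of vertex-disjoint source-to-sink paths cannot decrease the number of vertex-disjoint paths between a source set and a sink set disjoint from them --- is false in $G^b_\sigma$ for the same reason; disjointness of \emph{sinks} does not give disjointness of \emph{edges}.

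This is precisely what the paper's Source Alignment Lemma (Claim~\ref{claim:disjoint}) and its $D$-invariant (Claim~\ref{claim:thed}) are for. Starting from $D=X$, the paper iteratively picks a source $s\in S_Y\setminus S_D$, walks forward along $Y$ and, whenever it collides with a $D$-path, backward along that $D$-path, producing an augmenting walk that rewrites $D$ so that $s$ becomes a new source while every path still ends in $A_{i+1}\cup I'_{\leq i+1}$ and $X_{\geq i+2}$ (hence $X_r$) is left untouched. Termination is argued via a potential counting non-contiguous intersections $\mathcal{N}(x,y)$. The conclusion --- a disjoint-path family $D$ with $X_r\subseteq D$, $S_Y\subseteq S_{D\setminus X_r}$, and all of $S_Y$ connected to $A_{i+1}\cup I'_{\leq i+1}$ --- is exactly the re-routing you gesture at in your first paragraph but then replace with the incorrect direct argument. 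The easy subcase $S_Y\subseteq S_{X_{\leq i+1}}$ is the only one that goes through without this machinery; the lemma is what handles the general case, and it, together with the observation that $\sigma' $ differs from $\sigma$ in $G^b$ only along the reversed $X_r$ edges, is the proof.
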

\begin{proof}
  Assume the statement to be true at the beginning of some iteration
  of the \textbf{while} loop in question as stated in the
  hypothesis. We will now show that the statement holds at the end of
  that iteration as well.

  As in Step~\ref{step:newcollapse}, let
  $I \stackrel{\Delta}{=} \{i \in A_{\ell}\; | \;p(\sigma^{-1}(i))
  \leq \tau^* + R - 1\}$. In Step~\ref{step:canonicald}, we have the
  statement
  \[(I'_0, \dots, I'_{\ell}, X) \gets
    \Call{CanonicalDecomposition}{\sigma, \{L_i\}_{i=0}^{\ell},
      \{I_i\}_{i=0}^{\ell}, I},\] which computes a \emph{specific}
  maximum flow $X$ in the network
  $H^b_\sigma(B_{\leq \ell-1}, I'_{\leq \ell})$ which has the
  properties guaranteed by Proposition~\ref{prop:canonicald}. Recall
  that the sets $I'_i$ are precisely the sinks used by flow paths from
  $X$ with sources in $B_{i-1}$. Additionally, for the purposes of
  this proof, define $X_i$ to be the set of flow paths from $X$ that
  end in sinks $I'_i$, for each $1 \leq i \leq \ell$. Observe that
  $X = X_1 \cup \dots \cup X_\ell$. Let $r$ be the choice made in
  Step~\ref{step:chooser}.

  In the algorithm, in Step~\ref{step:bigupdate}, observe that only
  the flow paths $X_r$ are used to modify the partial schedule from
  $\sigma$ to $\sigma'$ and therefore also change the graph
  $G^b_\sigma$ to $G^b_{\sigma'}$. Step~\ref{step:smallupdate2} does
  not alter the set $A_{r-1}$, even though it alters the partial
  schedule $\sigma'$ by moving small jobs. Since it will not be
  important to consider these updates for the proof of
  Claim~\ref{claim:innerwhile}, which is a statement about the flow
  network for big jobs, we simply denote by $\sigma'$ the partial
  schedule at the end of this step. After this update, all layers
  following layer $L_{r-1}$ are discarded in Step~\ref{step:discard}
  and $\ell$ is updated to $r-1$ in Step~\ref{step:resetell}. So, to
  prove Claim~\ref{claim:innerwhile}, we only need to verify that for
  all $0 \leq i \leq r - 2$,
  \begin{align}
    |H^b_{\sigma'}(B_{\leq i}\cap M^b_{\sigma'}, A_{i+1} \cup I'_{\leq
    i+1})| \geq |A_{i+1}|.\label{eq:toprove}
  \end{align}
  Fix $i$ for the rest of the proof. Since
  $0 \leq i \leq r - 2 \leq \ell-2$, we have, by hypothesis,
  $|H^b_{\sigma}(B_{\leq i}\cap M^b_{\sigma}, A_{i+1} \cup I_{\leq
    i+1})| \geq |A_{i+1}|.$ So let $Y$ be the maximum flow in
  $H^b_{\sigma}(B_{\leq i} \cap M^b_{\sigma}, A_{i+1} \cup I_{\leq
    i+1})$, with at least $|A_{i+1}|$ flow paths. We now interpret the
  flow paths $X$ and $Y$ as vertex disjoint paths in $G^b_\sigma$
  using Proposition~\ref{prop:vertexdisjointpaths}.

  For a collection of vertex disjoint paths $P$ in $G^b_\sigma$ let
  $S_P \subseteq M^b_\sigma$ and $T_P \subseteq M^s_\sigma$ denote the
  set of source and sink vertices respectively used by paths in
  $P$. Now, if it happens that $S_Y \subseteq S_{X_{\leq i+1}}$ then
  we are done. This is because there must be some
  $X' \subseteq X_{\leq i+1}$ such that $X'$ has cardinality at least
  $|S_Y| = |Y| \geq |A_{i+1}|$. Also, $i+1 \leq r-1$ and therefore the
  paths $X_r$ used to update the partial schedule $\sigma$ are
  disjoint from the paths $X'$ and hence, $X'$ continues to be present
  in the new graph $G^b_{\sigma'}$ following the update.

  Our goal is to show this generally in
  Claim~\ref{claim:disjoint}. Note that it immediately implies that
  even after updating the partial schedule to $\sigma'$ we will have a
  collection of at least $|Y| \geq |A_{i+1}|$ many vertex disjoint
  paths connecting the sources of $Y$ to sinks in
  $A_{i+1}\cup I'_{\leq i+1}$ in $G^b_{\sigma'}$. This proves
  \eqref{eq:toprove} and completes the proof of the claim.
\end{proof}

\begin{claim}[Source Alignment Lemma]\label{claim:disjoint}
  There is a collection of vertex disjoint paths $D$ in $G^b_\sigma$
  such that $X_r \subseteq D$, $S_Y \subseteq S_{D \setminus X_r}$,
  and each source in $S_Y$ is connected by $D$ to a sink in
  $A_{i+1} \cup I'_{\leq i+1}$.
\end{claim}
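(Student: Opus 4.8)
The plan is to produce $D$ by freezing the sub-collection $X_r$ and grafting onto the remainder of the canonical flow $X$ a family of vertex-disjoint paths that carry the sources of $Y$ into $A_{i+1}\cup I'_{\le i+1}$. I would first deal with the bookkeeping that makes this possible. Since $r\ge i+2$, Proposition~\ref{prop:basicinvs}\ref{basicinvs:1}, together with the facts that $I\subseteq A_\ell$ and that the $I'_m$ are the pairwise-disjoint sink-sets of $X$, shows that the source set ($\subseteq B_{r-1}$) and the sink set ($=I'_r$) of $X_r$ are disjoint from $B_{\le i}\supseteq S_Y$ and from $A_{i+1}\cup I'_{\le i+1}$. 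The one point needing an argument is that no \emph{interior} big machine of an $X_r$-path lies in $S_Y$: such a machine is not a source of $X$ (which decomposes into vertex-disjoint paths, cf.\ Proposition~\ref{prop:vertexdisjointpaths}), and I would exclude it from $B_{\le i}\cap M^b_\sigma$ by using the incremental, layer-by-layer construction of $X$ --- after the augmentation stage with sources $B_{\le i}\cap M^b_\sigma$ the flow is already maximum and, by Proposition~\ref{prop:canonicald}\ref{canonicald:3}, gains nothing from the later sinks $I'_{\ge i+2}$, so in the corresponding residual network the vertices of $B_{\le i}$ are dead ends that no subsequent augmenting path (in particular no path of $X_r$, which is routed from $B_{r-1}$) passes through. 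Granting that $S_Y$ is disjoint from all vertices on paths of $X_r$, it then suffices to exhibit $|S_Y|=|Y|$ vertex-disjoint paths in $G^b_\sigma$, one starting at each source of $S_Y$ and all ending in $A_{i+1}\cup I'_{\le i+1}$, and moreover vertex-disjoint from every path of $X_r$; adjoining $X_r$ to these then yields $D$, and the sink condition of the claim is automatic from the disjointness established above.

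To obtain those paths I would combine $Y$ with $X_{\le i+1}:=X_1\cup\dots\cup X_{i+1}$, the paths of $X$ with sources in $B_{\le i}\cap M^b_\sigma$ and sinks in $I'_{\le i+1}$. The flow $Y$ already supplies $|Y|$ vertex-disjoint paths from $S_Y$ into $A_{i+1}\cup I_{\le i+1}$ inside $G^b_\sigma$, and two defects must be repaired. First, some of those paths terminate at the \emph{old} sinks $I_{\le i+1}$ rather than at $I'_{\le i+1}$; here Proposition~\ref{prop:canonicald}\ref{canonicald:2} is exactly what is required, since it asserts that the maximum flow out of $B_{\le i}\cap M^b_\sigma$ does not shrink when the sink set is restricted from $I_{\le\ell}\cup I$ (which contains $I_{\le i+1}$) down to $I'_{\le i+1}$, while the $Y$-paths already ending in $A_{i+1}$ may be kept as they are. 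Second, the $Y$-paths may run through $V(X_r)$; here I would again use Proposition~\ref{prop:canonicald}\ref{canonicald:3} together with the min-cut description of Proposition~\ref{prop:flows}\ref{flows:1}: there is a minimum cut separating $B_{\le i}$ from $I'_{\le i+1}$ that also separates $B_{\le i}$ from the later sinks $I'_{\ge i+2}$ --- which contain the sinks $I'_r$ of $X_r$ --- and this cut confines any rerouting of the $B_{\le i}$-flow to a region clear of $X_r$.

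The step I expect to be the main obstacle is exactly this simultaneous rerouting: the old and new sink sets $I_{\le i+1}$ and $I'_{\le i+1}$ need not be nested, and $X_{\le i+1}$, $X_r$ and $Y$ are three \emph{different} flows whose interior vertices may coincide, so the argument is not a one-step exchange but a residual/augmenting-path induction. I would process the sources of $S_Y$ one at a time on top of $X_{\le i+1}$, at each stage producing an augmenting path that respects the minimum cut above (so that it neither enters $V(X_r)$ nor loses the progress already made) and ends in $A_{i+1}\cup I'_{\le i+1}$, the existence of such a path at every stage following by counting the current flow against $Y$ and invoking Proposition~\ref{prop:canonicald}\ref{canonicald:2}. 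The truly delicate case is when an augmenting path would reuse a sink that is already claimed, forcing a genuine reroute of a previously inserted path rather than a simple extension; maintaining vertex-disjointness of the whole family, and its disjointness from $X_r$, through such reroutes is where the bulk of the work lies, and it is there that the precise guarantees of Proposition~\ref{prop:canonicald} --- rather than just the maximality of $X$ --- get used.
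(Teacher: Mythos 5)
Your proposal correctly identifies the key tools (the augmenting-path style construction, the role of Proposition~\ref{prop:canonicald}\ref{canonicald:2} in replacing the sink set $I_{\leq i+1}$ by $I'_{\leq i+1}$, and the role of Proposition~\ref{prop:canonicald}\ref{canonicald:3} in keeping the paths clear of $X_{\geq i+2} \supseteq X_r$), which is the same machinery the paper uses. But what you have written is an outline that stops precisely where the proof begins in earnest, and that is a genuine gap, not a formality. You write that the ``truly delicate case'' is ``when an augmenting path would reuse a sink that is already claimed, forcing a genuine reroute of a previously inserted path'' and that ``maintaining vertex-disjointness of the whole family, and its disjointness from $X_r$, through such reroutes is where the bulk of the work lies.'' That is exactly the content of the claim: the paper resolves it with an explicit procedure (initialize $D \gets X$; repeatedly pick $s \in S_Y \setminus S_D$ and trace forward along $Y$ and backward along $D$ until hitting $S_D$ or $T_Y$, then augment), an accompanying potential-function argument for termination (counting non-contiguous intersections $\mathcal{Q}$), and a structural invariant (Claim~\ref{claim:thed}) that pins down, at every iteration, the partition of $D$ into pieces with sources in $B_{\leq i}$ and the untouched piece $D_3 = X_{\geq i+2}$. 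None of these three is present in your sketch, and ``process the sources one at a time, invoking Proposition~\ref{prop:canonicald}\ref{canonicald:2} at every stage'' does not by itself specify what the augmentation is, why it terminates, or why it never touches $X_r$ when an augmenting walk loops back onto an existing path rather than claiming a fresh sink.

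There is also a structural difference worth noting. You propose, as a preliminary lemma, that $S_Y$ is disjoint from the interior vertices of $X_r$, arguing via the residual network at the intermediate stage of \textsc{CanonicalDecomposition} where the flow from $B_{\leq i}$ is maxed out. This line of reasoning is plausible (one can argue that $B_{\leq i}$ lies inside the min-cut at that stage and that subsequent augmenting paths for later source layers cannot usefully enter or exit that cut), but you do not carry it out, and more importantly the paper does not need this fact as a standalone precondition: its procedure starts with all of $X$ in $D$, so $X_r$ is inside $D$ from the outset, and the invariant $D_3 = X_{\geq i+2}$ is proved \emph{during} the procedure by deriving a contradiction from Proposition~\ref{prop:canonicald}\ref{canonicald:3} if any augmenting walk were to enter $D_3$. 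That formulation is cleaner because it handles, in one stroke, both the possibility that $S_Y$ intersects interior vertices of $X_r$-paths and the possibility that later augmentations get redirected into $X_r$. In short: right ingredients, right intuition, but the construction, its well-definedness, its termination, and the invariant that makes it all work are left unwritten, and those are what the lemma is about.
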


\begin{proof}
  Recall the two sets of disjoint paths
  $X = X_1 \cup \dots \cup X_\ell$ and $Y$ we have defined in the
  graph $G^b_\sigma$. Paths in $X_i$ connect sources in $B_{ i-1} \cap M^b_\sigma$ to
  sinks in $I'_i$, whereas paths in $Y$ connect sources from $B_{\leq
    i} \cap M^b_\sigma$ to sinks in $A_{i+1} \cup I_{\leq i+1}$. We
  now describe a finite procedure which demonstrates the existence of
  paths $D$ as in the claim.
  \begin{center}
    \fbox{\parbox{0.8\textwidth}{
        \begin{algorithmic}
          \State $D \leftarrow X$.
          \While{$\exists s \in S_Y \setminus S_D$}
          \State $v \gets s$.
          \While{$\neg (v \in S_D \vee v \in T_Y) $}
          \If{there is an incoming arc of some path $p \in D$ at $v$}
          \State Set $v$ to be the previous vertex along $p$.
          \ElsIf{there is an outgoing arc of some path $q \in Y$ at $v$}
          \State Set $v$ to be the next vertex along $q$.
          \EndIf
          \EndWhile
          \State Augment $D$ using the set of edges traversed in the
          above loop.
          \EndWhile
          \State \Return $D$.
        \end{algorithmic}}}
  \end{center}

  Before we analyze this procedure, we wish to point out that the
  augmenting step is well-defined. Suppose $D$ and $Y$ are sets of
  disjoint paths before some iteration of the outer \textbf{while}
  loop; this is clearly true before the first iteration. Let
  $s \in S_Y \setminus S_D$ be chosen as the starting vertex for $v$.
  If at no point during the execution of the inner \textbf{while}
  loop, the \textbf{if} condition was satisfied, then it is trivial to
  see that the augmentation step is well-defined since the set of
  edges that are traversed in this case simply corresponds to some
  path $y \in Y$, whose source is $s$, and whose edges do not
  intersect with the edges of paths from $D$. On the other hand,
  consider the first time the \textbf{if} condition is satisfied. From
  that point onwards it is easy to see that the inner loop simply
  traverses the edges of some particular path $x \in D$ in reverse
  order until it reaches the source of $x$, and the inner loop
  terminates. Here, we made use of the induction hypothesis that $D$
  is a collection of vertex disjoint paths. Therefore, we can conclude
  that the total set of edges traversed in this case are composed of
  two disjoint parts: i) edges from the prefix of the path $y \in Y$,
  whose source is $s$, followed by ii) edges from the prefix of the
  path $x \in D$ in reverse order. Furthermore, the unique vertex, say
  $v^*$, at which the two parts have an incoming arc must be a machine
  vertex (since job vertices in $G^b_\sigma$ have at most one incoming
  arc and the two parts are disjoint). Also, $v^* \in M^b_\sigma$
  since $v^* \not \in T_Y$, and the paths $y$ and $x$ must intersect
  at the unique edge $e^*=(v^*, j^*)$ where
  $j^* \in J_b : \sigma(j^*) = v^*$. Thus, deleting the set of edges
  traversed in the second part, and adding the set of edges traversed
  in the first part corresponds to a valid augmentation of $D$.

\begin{figure}
\centering
\includegraphics[width=.75\linewidth]{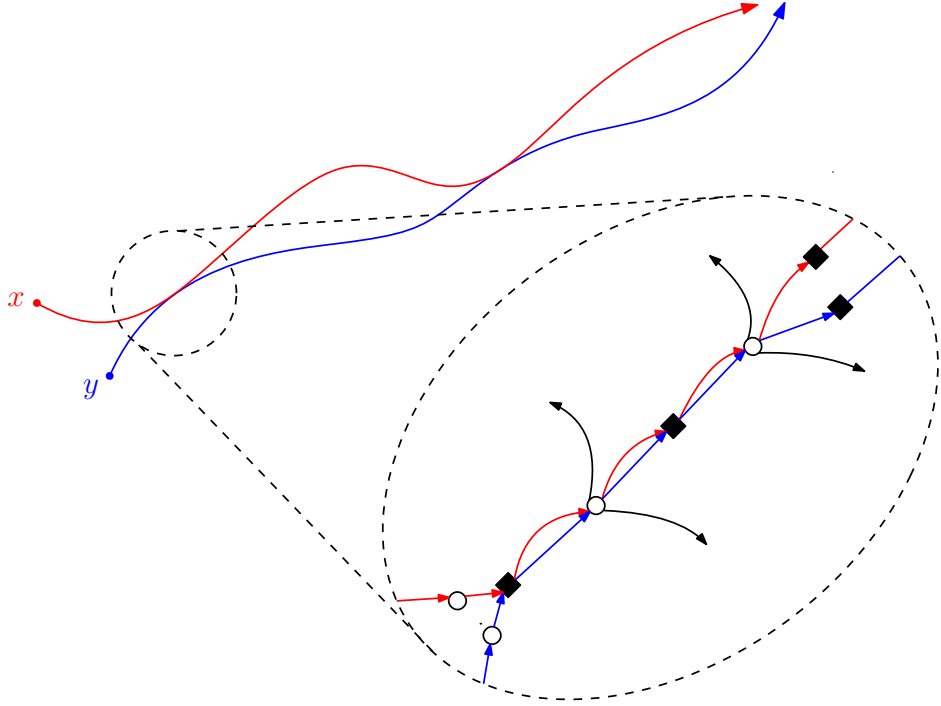}
\caption{The number $\mathcal{N}(x,y)$ of non-contiguous intersections
  between the pair of paths $x \in D$ and $y \in Y$ depicted here is
  $2$. Arcs from $G^b_\sigma$ that are neither present in $D$ nor in
  $Y$ are shown in black.}
\label{fig:sal-1}
\end{figure}

  We now prove that this procedure terminates after finite iterations
  of the outer \textbf{while} loop. We claim that, in each iteration,
  either $|S_Y \setminus S_D|$ decreases, or $|S_Y \setminus S_D|$
  stays the same and the quantity
  \[\mathcal{Q} \stackrel{\Delta}{=} \sum_{y \in Y} \sum_{x \in D}
    \mathcal{N}(x,y)\] decreases, where $\mathcal{N}(x,y)$ is defined
  as the total number of non-contiguous intersections between a pair
  of paths $x$ and $y$ in $G^b_\sigma$ (see
  Figure~\ref{fig:sal-1}). Consider a particular iteration of the
  outer loop. If the \textbf{if} condition is never satisfied during
  the execution of the inner loop, then, by the arguments above, the
  number of disjoint paths in $D$ increases after the augmentation,
  and further the vertex $s$ chosen in the outer loop becomes a new
  source of $D$ after the augmentation. On the other hand, suppose
  that the path chosen for augmentation is composed of two parts
  arising from two paths $y \in Y$ and $x \in D$ as argued
  before. Further, let $s_x$ be the source of $x$, and suppose that
  $s_x \in S_Y$, as otherwise, once again $|S_Y \setminus S_D|$
  decreases after augmenting $D$. Let $y' \in Y$ be the path with
  source $s_x$. After augmenting $D$ it is seen that
  $\sum_{x \in D'} \mathcal{N}(x,y') < \sum_{x \in D}
  \mathcal{N}(x,y'),$ and for all other paths
  $y'' \in Y \setminus \{y'\}$,
  $\sum_{x \in D'} \mathcal{N}(x,y'') \leq \sum_{x \in D}
  \mathcal{N}(x,y'')$, thereby proving that the procedure eventually
  terminates. For an example execution of a single iteration of the
  outer \textbf{while} loop of the procedure, see
  Figure~\ref{fig:sal-2}.

\begin{figure}
  \centering
    \includegraphics[height=.30\paperheight]{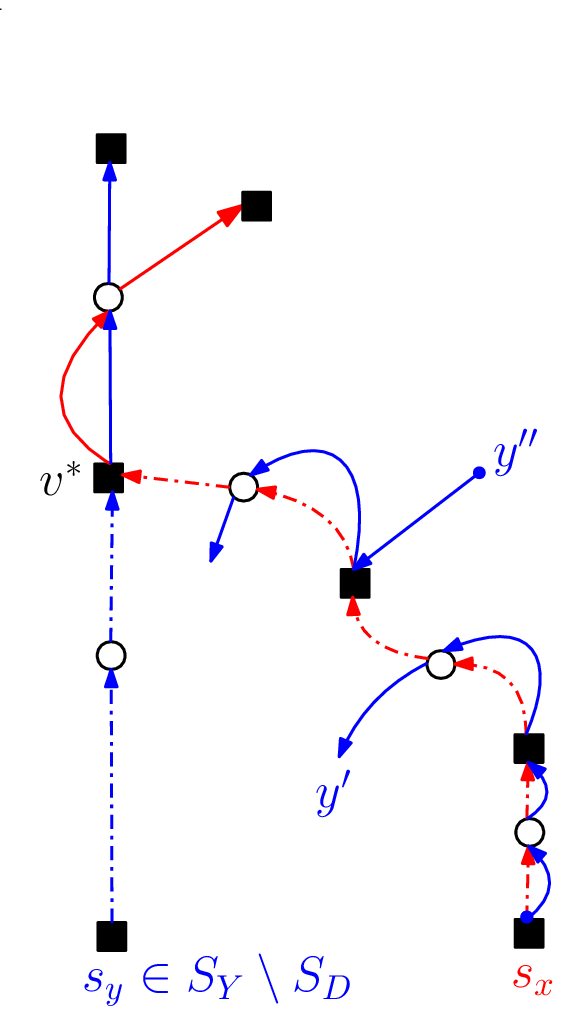}
    \caption{There are three blue paths $y,y',y'' \in Y$ and one red
      path $x \in D$ shown here. Starting from the vertex
      $s_y \in S_Y \setminus S_D$, the outer \textbf{while} loop of
      the procedure defined in Claim~\ref{claim:disjoint} traverses
      precisely the arcs which are dot-dashed and ends in $s_x$, the
      source vertex of $x$, which in this case is the same as the source vertex of $y'$ and therefore $s_x \in S_Y$.}
    \label{fig:sal-2}
\end{figure}

  At the end of the procedure, we have $S_Y \subseteq S_D$. As
  $S_Y \subseteq B_{\leq i}$, by the invariant of the procedure that
  we prove below in Claim~\ref{claim:thed}, the paths in $D$ with
  sources $S_Y$ end in sinks from $A_{i+1} \cup I'_{\leq i+1}$. Also,
  by Claim~\ref{claim:thed}, $X_r \subseteq D$ (because
  $D_3 = X_{\geq i+2} \supseteq X_r$) and
  $S_Y \subseteq S_{D \setminus X_r}$ (because
  $B_{\geq i+1} \supseteq S_{D_3} \supseteq S_{X_r}$), which proves
  the claim.

\begin{claim}[The $D$ Invariant\label{claim:thed}] $D$ is a
  collection of vertex disjoint paths in $G^b_\sigma$ which can be
  partitioned into $D_1 \cup D_2 \cup D_3$ such that:
  \begin{enumerate}[label=(\alph*)]
  \item $S_{D_1} \subseteq B_{\leq i}, \; T_{D_1} = I'_{\leq i+1}$,
  \item $S_{D_2} \subseteq B_{\leq i}, \; T_{D_2} \subseteq A_{i+1}$, and
  \item $D_3 = X_{\geq i+2}$.
  \end{enumerate}
\end{claim}

\begin{proof}
  Before the first iteration of the algorithm, $D$ is initialized to
  $X_{\leq \ell}$ and therefore admits the decomposition into $D_1 =
  X_{\leq i+1}, D_2 = \emptyset, D_3 = X_{\geq i+2}$ which satisfy all
  of the above invariants. Notice here that $i \leq r-2 \leq \ell-2$,
  so that this decomposition is well-defined.

  Assume the $D$ invariant to hold for a collection of disjoint paths
  $D$ at the beginning of some iteration of outer \textbf{while}
  loop. Following the augmentation let $D'$ be the resulting
  collection of disjoint paths. Let
  \begin{align*}
    D'_1 &\stackrel{\Delta}{=} \{p \in D' \; | \; T_{\{p\}} \subseteq I'_{\leq i+1}\},\\
    D'_2 &\stackrel{\Delta}{=} \{p \in D' \; | \; T_{\{p\}} \subseteq A_{i+1}\},\\
    D'_3 &\stackrel{\Delta}{=} \{p \in D' \; | \; T_{\{p\}} \subseteq I'_{\geq i+2}\}.
  \end{align*}

  First we prove that $D'_1, D'_2, D'_3$ defines a valid partition of
  $D'$. As it is clear that the sets are disjoint, we only need to
  prove that $T_{D'} \subseteq A_{i+1} \cup I'_{\leq \ell}$. Recall
  that the augmenting path either ends in a source vertex of $D$ or a
  sink vertex of $Y$. In the first case, no new sinks are introduced,
  i.e., $T_{D'} = T_D \subseteq A_{i+1} \cup I'_{\leq \ell}$. In the
  second case, the augmenting path ends in a sink from
  $S_Y \subseteq A_{i+1} \cup I_{\leq i+1}$. Potentially this could
  introduce a sink from the set
  $I_{\leq i+1} \setminus I'_{\leq i+1}$. But in this case the
  computed canonical decomposition $\{I'_i\}_{i=0}^\ell$ would not be
  maximal since we now have a collection of $|I'_{\leq i+1}| + 1$
  vertex disjoint paths in $D'$ with sources from $B_{\leq i}$ and
  sinks in
  $I'_{\leq i+1} \cup I_{\leq i+1} \subseteq I_{\leq \ell} \cup
  I$. That is, this contradicts the fact that
  $|H^b_\sigma(B_{\leq i}, I'_{\leq i+1})| = |H^b_\sigma(B_{\leq i},
  I_{\leq \ell} \cup I)| $ guaranteed by
  Proposition~\ref{prop:canonicald}\ref{canonicald:2}.

  In the remainder of the proof we show that the defined partition of
  $D'$ satisfies the invariants. Since we do not lose any sinks in the
  augmentation of $D$ to $D'$, by a basic property of flow
  augmentation, it is true that $|D'_1| = |I'_{\leq i+1}|$, and
  therefore $T_{D'_1} = I'_{\leq i+1}$. Next,
  $T_{D'_2} \subseteq A_{i+1}$ and $T_{D'_3} \subseteq I'_{\geq i+2}$ follow
  by definition.

  Since the path used to augment $D$ started from a vertex in
  $S_Y \setminus S_D$ and $S_Y \subseteq B_{\leq i}$ it is clear that
  $S_{D'_1}, S_{D'_2} \subseteq B_{\leq i}$ unless we encountered a
  path from $D_3$ during the augmentation process. However, that would
  lead to a contradiction to the property of the canonical
  decomposition $I'_1 \cup \dots \cup I'_\ell$ that
  $|H^b_\sigma(B_{\leq i}, I'_{\leq i+1})| = |H^b_\sigma(B_{\leq i},
  I'_{\leq \ell})|$ (note here again that $i+1 \leq r-1 \leq \ell-1$)
  by Proposition~\ref{prop:canonicald}\ref{canonicald:3}. Therefore,
  we also have that $S_{D'_1}, S_{D'_2} \subseteq B_{\leq
    i}$. Finally, since we did not encounter any edges of $D_3$ during
  the augmentation process, we not only have that
  $S_{D'_3} \subseteq B_{\geq i+1}$ but that $D_3 = D'_3$.
\end{proof}

\end{proof}

\subsubsection{Proportionally Many Options for Big Jobs}\label{sec:proportionally}

\begin{theorem}\label{thm:newlayer}
  Suppose that $1 \leq \tau^* < 2$. At the beginning of each iteration
  of the main loop of the algorithm, for every
  $0 \leq i \leq \ell-1$,
  \[|A_{i+1}| \geq \mu_1|B_{\leq i}|.\]
  The statement also holds at the beginning of each iteration of the
  \textbf{while} loop of Step~\ref{step:collapseq} of Algorithm~\ref{alg:localsearch}.
\end{theorem}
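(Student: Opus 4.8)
The plan is to prove the inequality at the instant a layer is created and then argue that it can never subsequently be violated, with the creation‑time bound itself obtained by contradicting feasibility of $\clp{\tau^*}$ through an explicit dual‑unboundedness certificate.

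\textbf{Reduction and persistence.} First I would observe it is enough to show $|A_{\ell+1}| \ge \mu_1|B_{\le\ell}|$ at the moment \texttt{BuildLayer} returns in Step~\ref{step:finishconstruction} (with $\ell$ as in Algorithm~\ref{alg:buildlayer}, so the returned set becomes the new last layer and the displayed bound is exactly the $i=\ell$ case of the theorem after the increment). Creating $L_{\ell+1}$ leaves every older $A_j,B_j$ untouched; afterwards the $A_j$ are frozen (Proposition~\ref{prop:basicinvs}\ref{basicinvs:2}), and the only later change to a $B_j$ is the reassignment in Step~\ref{step:updateB}, which the proof of Proposition~\ref{prop:basicinvs}\ref{basicinvs:1} shows can only shrink it, while Step~\ref{step:discard} throws away all layers of index $>r-1$. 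Hence for every surviving pair $(A_{i+1},B_{\le i})$ the left side never decreases and the right side never increases, so the bound, once true at build time, holds at the beginning of every main‑loop iteration and of every iteration of the \textbf{while} loop of Step~\ref{step:collapseq} (its first iteration being immediately after \texttt{BuildLayer}). So fix $\sigma$ and $L_0,\dots,L_\ell$ as \texttt{BuildLayer} returns, write $S=A_{\ell+1}$, and suppose for contradiction that $|S| < \mu_1|B_{\le\ell}|$.

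\textbf{Contradiction via the dual, with induction on $\ell$.} The contradiction will be with feasibility of $\clp{\tau^*}$ — available because $1 \le \tau^* < 2$ — by exhibiting a feasible point $(y,z)$ of the dual \eqref{eq:clpdual} with $\sum_j z_j - \sum_i y_i > 0$, which is impossible since for any primal‑feasible $x$ one has $\sum_j z_j \le \sum_j z_j\sum_{i,C\ni j}x_{iC} \le \sum_i y_i\sum_C x_{iC} \le \sum_i y_i$. This is the ``unboundedness certificate'' / ``second algorithm'' alluded to in the introduction: a procedure that grows the set of jobs which will carry positive $z$ and records the machines obliged to host them. I would run the argument by induction on $\ell$, so that $|A_{j+1}| \ge \mu_1|B_{\le j}|$ is in hand for all $j<\ell$ (via persistence, since it was established at the earlier build time of $L_{j+1}$). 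Feeding these into Lemma~\ref{lem:bi} yields geometric growth, $|B_{\le j+1}| \ge (1+c)|B_{\le j}|$ for a constant $c>0$ depending only on $\mu_1,\mu_2,\delta$; together with the free bound $|A_{j+1}| \le |B_{\le j}\cap M_\sigma^b|$ (each big source emits at most one unit of flow) this gives $|A_{\le\ell}| = O(|B_{\le\ell}|)$ and keeps the total size of the layer structure within a constant factor of $|B_{\le\ell}|$ — which is what lets all error terms in the count below stay proportional to $|B_{\le\ell}|$.

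\textbf{Trapping and counting.} The substance is three ``trapping'' facts about the current $\sigma$. (i) Every machine of $B_{\le\ell}$ is overloaded, $p(\sigma^{-1}(b)) > \tau^*+R-\epsilon$ by Proposition~\ref{prop:basicinvs}\ref{basicinvs:3}, and by Lemma~\ref{lem:noncollapsibility} all but a $\mu_2$‑fraction of each $A_j$ is overloaded as well, so the small jobs lying on $L_{\le\ell}$ carry large total volume; by Proposition~\ref{prop:basicinvs}\ref{basicinvs:4} — or, in the sharper form that falls out of $B_j=R^s_\sigma(A_j)\setminus(\cdots)$ and the same identity for $A_j$, namely $\Gamma(\cdot)\subseteq L_{\le j}\cup I_{\le j-1}$ for a small job on a machine of $L_j$ — the configuration LP is forced to pack all of these small jobs inside $L_{\le\ell}\cup I_{\le\ell}$. (ii) The big jobs on $B_{\le\ell}\cap M_\sigma^b$ are confined by the big‑job flow network: each machine admitted to $S$ strictly increased the big flow, so $|H^b_\sigma(B_{\le\ell}\cap M_\sigma^b,S)|=|S|$; a max‑flow/min‑cut reading of this, combined with $|I_{\le\ell}| < \mu_1\mu_2|B_{\le\ell}|$ from Lemma~\ref{lem:smallI} and the vertex‑disjoint‑path guarantee of Theorem~\ref{thm:bigupdate}, shows that all but $O(\mu_1|B_{\le\ell}|)$ of these big jobs can be re‑routed only inside the residual‑reachable set of the flow, which ``closes'' on its own big jobs except through a leak of size $O(\mu_1|B_{\le\ell}|)$. (iii) \texttt{BuildLayer} has terminated, so nothing is \texttt{IsAddableQ}; in particular a small machine outside the layers lying in that residual‑reachable set would increase the big flow (Proposition~\ref{prop:flows}\ref{flows:1}), hence failed the small‑flow test, hence already carries small‑job load exceeding $\tau^*-1+R-\delta$ and offers no spare capacity. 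Assembling (i)--(iii): let $\mathcal{M}$ be $L_{\le\ell}\cup I_{\le\ell}$ together with the residual‑reachable set, and $\mathcal{J}$ the small jobs on $L_{\le\ell}$ plus the non‑leaking big jobs on that reachable set; then $\clp{\tau^*}$ must pack $\mathcal{J}$ into $\mathcal{M}$, while the overload from (i) and the smallness of the leak force $p(\mathcal{J}) > \tau^*\,|\mathcal{M}|$. This is the tight step, and balancing the per‑machine overload of $B_{\le\ell}$ against the per‑machine deficit of the auxiliary machines is what pins down $R=\tfrac12(\sqrt{3-2\epsilon}+\epsilon)+\zeta$ and $\delta=(\sqrt{3-2\epsilon}-1)/2$. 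Finally read off $z_j=p_j$ on $\mathcal{J}$ and $0$ elsewhere, $y_i=\tau^*$ on $\mathcal{M}$ and $0$ elsewhere (for $i\notin\mathcal{M}$ every job of $\conf{i,\tau^*}$ has $z_j=0$ since $\mathcal{J}$ is trapped in $\mathcal{M}$), which is dual‑feasible with objective $p(\mathcal{J})-\tau^*|\mathcal{M}|>0$ — possibly after a non‑uniform reweighting of big versus small jobs if the uniform one is not quite tight.

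\textbf{The main obstacle.} The crux is this last count: choosing $\mathcal{M}$ and $\mathcal{J}$ (equivalently, driving the job‑growing process) so that the volume surplus produced by the overload of $B_{\le\ell}$ strictly dominates the combined deficit of the auxiliary machines — the $A_j$'s ``underpay'' by roughly $1-R$ per machine and there can be $\Theta(|B_{\le\ell}|)$ of them, the big‑flow leak contributes $O(\mu_1|B_{\le\ell}|)$, and $|I_{\le\ell}|$ contributes $O(\mu_1\mu_2|B_{\le\ell}|)$ — all while keeping $y_i\ge\sum_{j\in C}z_j$ for every machine, including the overloaded layer machines themselves. It is exactly the arithmetic of fitting these contributions under the $\tau^*<2$ cap that forces the particular algebraic form of $R$; everything else — the reduction to \texttt{BuildLayer}, the persistence, the geometric growth, the flow/min‑cut trapping — is routine by comparison.
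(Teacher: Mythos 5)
Your reduction and persistence step is correct and matches the paper: it suffices to verify the bound at the moment \texttt{BuildLayer} returns, and afterwards $A_{i+1}$ is frozen (Proposition~\ref{prop:basicinvs}\ref{basicinvs:2}) while $B_{\leq i}$ can only shrink. The overall strategy---derive a contradiction with $\clp{\tau^*}$ feasibility by exhibiting an unbounded dual direction built from the algorithm's state when $|S|<\mu_1|B_{\leq\ell}|$---is also the paper's strategy. However, there is a genuine gap in your construction of the dual certificate, and it sits precisely where you claim things are ``routine.''

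The problem is your handling of the \emph{small machines} inside the residual-reachable set $C_X$ of the big-job min-cut. You propose $\mathcal{M} = L_{\leq\ell}\cup I_{\leq\ell}\cup C_X$ and charge only the small jobs on $L_{\leq\ell}$ together with the non-leaking big jobs. But the big-flow cut $C_X$ says nothing about where small jobs can go: a small machine $i\in C_X\cap M^s_\sigma$ carries load $>\tau^*-1+R-\delta$ (your observation (iii) is correct as far as it goes), yet its small jobs may well be assignable outside $\mathcal{M}$, so none of them can receive positive $z$, while the machine still costs $y_i=\tau^*$. Since $|C_X\cap M^s_\sigma|$ is not bounded by $O(\mu_1|B_{\leq\ell}|)$---only the cut \emph{capacity} equals $|S|$, not its vertex count---these machines can swamp the surplus coming from $B_{\leq\ell}$, and the objective $p(\mathcal{J})-\tau^*|\mathcal{M}|>0$ need not hold. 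Your escape hatch (``possibly after a non-uniform reweighting of big versus small jobs'') does not address this, because the issue is not the size of the weights but the absence of any trapped jobs on those machines to pay for them.

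What the paper does instead is the Mincut Growing Procedure of Part~III (Figure~\ref{fig:grmprocedure} and Claim~\ref{claim:heart}). It iteratively promotes each problematic $i\in(C_X\cap M^s_\rho)\setminus(M^{(1)}\cup C_Y)$ to a source of a small-job flow and maintains, across flow augmentations and partial-schedule updates, a bound $|H^s_\rho(U,V)|\leq(\tau^*+R)|S|+\sum_{i\in U\setminus S}(p(\rho^{-1}(i))-(\tau^*-1+R-\delta))$ (Claim~\ref{claim:heart}\ref{heart:3}). Reading that max-flow value as a min-cut capacity is what ties the volume of small jobs trapped inside $C_{Y,f}$ to the number of small machines in $C_{Y,f}$, producing the amortized inequality \eqref{eq:gluing} which shows this part of the assignment costs only $O((\tau^*+R)|S|)$. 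Part~IV then handles the leftover machines $M^{(4)}$ whose addition would grow $C_Y$ by only one vertex. Without this machinery---which is precisely the ``second algorithm'' of the introduction---the counting does not close; it is the heart of the proof, not a routine flow/min-cut observation, and its absence makes your proposed certificate incorrect as stated.
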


\begin{proof}
  Let $L_0,\dots,L_\ell$ be the set of layers maintained by the
  algorithm at the beginning of iteration of the main loop. It is
  sufficient to prove that following the construction of layer
  $L_{\ell+1}$ in Step~\ref{step:finishconstruction} of
  Algorithm~\ref{alg:localsearch},
  \[|A_{\ell+1}| \geq \mu_1|B_{\leq \ell}|.\] The rest follows by
  applying Proposition~\ref{prop:basicinvs}\ref{basicinvs:2}. Suppose
  that the set $S$ at the end of the \textbf{while} loop in
  Step~\ref{step:makeS} of Algorithm~\ref{alg:buildlayer} is smaller
  than $\mu_1|B_{\leq \ell}|$. We now describe an assignment of values
  to the variables $(y, z)$ from the dual of $\clp{\tau^*}$ (defined
  in Section~\ref{subsec:clp}) in four parts. Then, we will show that
  the final setting of dual variables $(\bar{y},\bar{z})$ obtained in
  this way satisfies
  $\sum_{j \in J} \bar{z}_j - \sum_{i \in M}\bar{y}_i > 0$, while also
  respecting the constraints of \eqref{eq:clpdual}. It then follows
  that the dual of $\clp{\tau^*}$ is unbounded because for any
  $\lambda > 0$, $(\lambda \bar{y}, \lambda \bar{z})$ is a feasible
  dual solution as well. Therefore, by weak duality, $\clp{\tau^*}$
  must be infeasible, a contradiction. We now proceed to execute this
  strategy.

  \paragraph{Part I: Layers} We set positive dual values to all
  machines that appear in the sets $L_{\leq \ell} \cup I_{\leq \ell}$,
  and the corresponding jobs, as follows:

  \[y^{(1)}_i = \begin{cases}
      \tau^*, \; &i \in L_{\leq \ell} \cup I_{\leq \ell},\\
      0, \; &\text{else}.
    \end{cases} \quad \quad z^{(1)}_j = \begin{cases}
      R-\delta, \; &\exists i \in L_{\leq \ell} \; :
      \; j \in \sigma^{-1}(i) \cap J_b,\\
      \epsilon, \; &\exists i \in L_{\leq \ell} \; :
      \; j \in \sigma^{-1}(i) \cap J_s,\\
      0, \; &\text{else}.
    \end{cases}
  \]
  The objective function of the assignment $(y^{(1)}, z^{(1)})$ can be
  lower bounded as:
  \begin{align*}
    \begin{split}
      \sum_{j \in J}z^{(1)}_j - \sum_{i \in M} y^{(1)}_i &\geq
      (2R-\delta-\epsilon-1)|B_{\leq \ell}| -\tau^*|I_{\leq \ell}|
      -(1-\mu_2)(1-R)|A_{\leq \ell}|-\mu_2\tau^*|A_{\leq \ell}|.
    \end{split}
  \end{align*}
  Let us explain the lower bound. For each machine
  $i \in B_{\leq \ell}$, $p(\sigma^{-1}(i)) > \tau^* + R - \epsilon$
  from Proposition~\ref{prop:basicinvs}\ref{basicinvs:3}. This allows
  us to derive
  $\sum_{j \in \sigma^{-1}(i)} z^{(1)}_j-y^{(1)}_i > \tau^* + R
  -\epsilon - (1-(R-\delta))-\tau^* = 2R -\delta - \epsilon - 1$. For
  the machines $i \in I_{\leq \ell}$, we have the trivial lower bound
  $\sum_{j \in \sigma^{-1}(i)} z^{(1)}_j-y^{(1)}_i \geq
  -\tau^*$. Next, for each machine $i \in A_{\leq \ell}$ such that
  $p(\sigma^{-1}(i)) > \tau^* - 1 + R$, we have
  $\sum_{j \in \sigma^{-1}(i)} z^{(1)}_j-y^{(1)}_i \geq \tau^* - 1 + R
  -\tau^* = -1 + R$, whereas for the rest of the machines in
  $A_{\leq \ell}$, we use the trivial lower bound $-\tau^*$. Thus,
  using Lemmas~\ref{lem:noncollapsibility} and~\ref{lem:smallI}, we
  have
  \begin{align}
    \sum_{j \in J}z^{(1)}_j - \sum_{i \in M} y^{(1)}_i
    &\geq \left(2R-\delta-\epsilon-1-\tau^*\mu_1\mu_2\right)|B_{\leq
      \ell}|-(1-\mu_2)(1-R)|A_{\leq \ell}|-\mu_2\tau^*|A_{\leq
      \ell}|.\label{eq:part1}
  \end{align}

  At this point we have assigned a positive $z^{(1)}_j$ value to big
  jobs $j$ assigned by $\sigma$ to machines in $L_{\leq
    \ell}$. However there could potentially be machines
  $i \in M \setminus (L_{\leq \ell} \cup I_{\leq \ell})$ such that
  $i \in \Gamma(j)$ as well. Therefore, the current assignment
  $(y^{(1)}, z^{(1)})$ does not necessarily constitute a dual feasible
  solution since it might violate the inequality
  $y_i \geq \sum_{j \in C}z_j,$ for a configuration
  $C = \{j\} \in \conf{i, \tau^*}$ consisting of a single big job. We now
  fix this in the next part. For convenience, let
  $M^{(1)} \stackrel{\Delta}{=} L_{\leq \ell} \cup I_{\leq \ell}$.

\begin{figure}
  \centering
    \includegraphics[width=.7\linewidth]{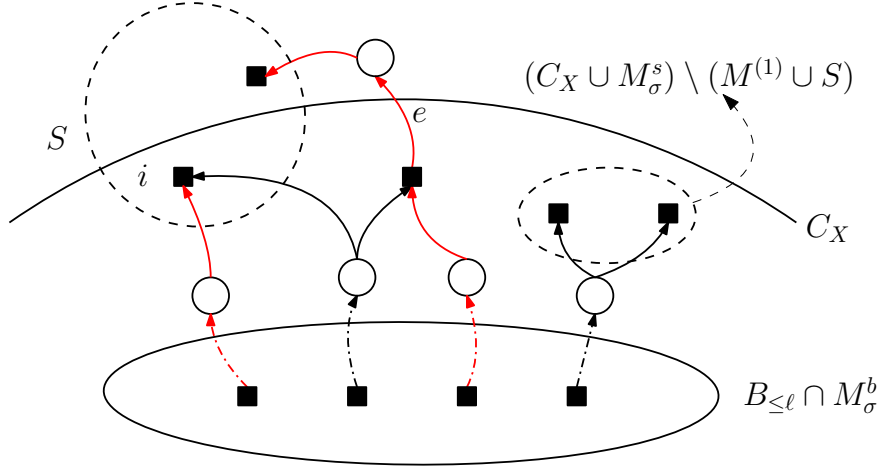}
    \caption{The cut $C_X$ arising from the flow $X$,
      whose paths are colored red, in the flow network
      $H^b_\sigma(B_{\leq \ell} \cap M^b_\sigma, S)$ is shown here. All machines
      except those inside dotted circles are big machines. Every big
      machine is matched to a distinct big job except for the machine
      with the outgoing machine-job arc $e$, which crosses $C_X$. The
      matching edges are dot-dashed and $i$ has unit capacity.}
    \label{fig:apm}
\end{figure}

  \paragraph{Part II: Approximate Matchings} Consider the flow network
  of big jobs $H^b_\sigma(B_{\leq \ell} \cap M^b_\sigma, S)$ that was
  used to construct the set $S$. By the construction of the set $S$ in
  the algorithm, there is a flow $X$ in this network of value
  $|S|$. This flow naturally defines a minimum capacity cut: the cut
  $C_X$ is defined as the set of reachable jobs and machines from
  $B_{\leq \ell} \cap M^b_\sigma$ in the \emph{residual flow network}
  corresponding to $X$ in
  $H^b_\sigma(B_{\leq \ell} \cap M^b_\sigma, S)$. Let
  $M^{(2)} \stackrel{\Delta}{=} (C_X \cap M^b_\sigma) \setminus
  M^{(1)}$. We extend the assignment $(y^{(1)}, z^{(1)})$ described in
  the first part in the following way.
  \[y^{(2)}_i = \begin{cases}
      R-\delta, \; &i \in M^{(2)},\\
      0, \; &\text{else}.
    \end{cases} \quad \quad z^{(2)}_j = \begin{cases}
      R-\delta, \; &\exists i \in M^{(2)} \; : \; j \in \sigma^{-1}(i) \cap J_b,\\
      0, \; &\text{else}.
    \end{cases}
  \]
  The capacity of the cut $C_X$ is $|S|$ by the max-flow min-cut
  theorem. This in particular implies that no job-machine arcs can
  cross this cut as such arcs have infinite capacity. In other words,
  the only arcs crossing $C_X$ are machine-job arcs and
  machine-supersink arcs where the machine arises from $S$ (recall
  that sink vertices have vertex capacity $1$ in
  $H^b_\sigma(B_{\leq \ell} \cap M^b_\sigma, S)$ according to
  Definition~\ref{def:bigflownetwork}).
  
  For every big machine $i$ that is present in $C_X$, the
  corresponding big job assigned to $i$ by $\sigma$ is also present in
  $C_X$ with the exception of at most $|S|$ big jobs as shown in Figure~\ref{fig:apm} (corresponding to
  the machine-job arcs that cross the cut $C_X$). Therefore, the total
  loss incurred in this step is at most $|S|$. In other words,
  \begin{align}
    \sum_{j \in J} z^{(2)}_j - \sum_{i \in M} y^{(2)}_i \geq -(R-\delta)|S|.\label{eq:part2}
  \end{align}

  \paragraph{Part III: Growing Mincuts} In this part and the next, we
  assign positive dual values to machines in $S$, machines in
  $\left(C_X \cap M^s_\sigma\right) \setminus M^{(1)}$, and some other
  machines, to complete the description of our dual assignment. To
  make such an assignment, we will use the algorithm described in Figure~\ref{fig:grmprocedure} in the
  analysis.

  \begin{figure}
    \centering
    \fbox{\parbox{0.9\textwidth}{
        \begin{algorithmic}
          \State $\rho \gets \sigma$.
          \State $U \gets S$.
          \State $V \gets M \setminus (M^{(1)} \cup U)$.
          \State $Y \gets$ Maximum flow in $H^s_\sigma(U, V)$.
          \State $C_Y \gets$ Mincut corresponding to $Y$ in $H^s_\sigma(U,V)$.
          \While{$\exists \; i \in (C_X \cap M^s_\rho) \setminus (M^{(1)}
            \cup C_Y)$}
          \State Augment $Y$ to a maximum flow in $H^s_\rho(U\cup\{i\}, V\setminus\{i\})$.
          \Comment This is well-defined
          \State $C_Y \gets$ Mincut corresponding to $Y$ in $H^s_\rho(U\cup\{i\}, V\setminus\{i\})$.
          \For{$f \in Y : \text{the sink of $f$ belongs to $((C_X \cap M^s_\rho) \setminus (M^{(1)}     \cup C_Y))$}$}
          \State Update $\rho$ by using the flow path $f$.
          \State $Y \gets Y \setminus \{f\}$.
          \EndFor
          \Comment $C_Y$ is still the mincut corresponding to $Y$
          \State $U \gets U \cup \{i\}$.
          \State $V \gets V \setminus \{i\}$.
          \EndWhile
          \State \Return $C_Y$
        \end{algorithmic}}}
    \caption{Mincut Growing Procedure}
    \label{fig:grmprocedure}
  \end{figure}
  The properties of the above procedure that we require in the proof
  are encapsulated in the following claim which we will prove inductively.

  \begin{claim}\label{claim:heart}
    The \textbf{while} loop of the above procedure maintains the
    following invariants.
    \begin{enumerate}[label=(\alph*)]
    \item\label{heart:1} $\rho$ is a partial schedule.
    \item\label{heart:2} $Y$ is a maximum flow in $H^s_\rho(U, V)$ and $C_Y$ is the corresponding mincut.
    \item\label{heart:3} The value of maximum flow $Y$ can be upper bounded as \[|H^s_\rho(U, V)| \leq (\tau^*+R)|S| + \sum_{i \in U \setminus S}(p(\rho^{-1}(i))-(\tau^*-1+R-\delta)).\vspace{-1em}\]
    \item\label{heart:4} There is no flow path $f \in Y$ that ends in a sink belonging to $(C_X \cap M^s_\rho) \setminus (M^{(1)} \cup C_Y)$.
    \item\label{heart:5} For each $i \in (C_X \cap M^s_\rho) \setminus (M^{(1)} \cup C_Y)$,
      \[|H^s_\rho(U \cup \{i\}, V \setminus \{i\})| < |H^s_\rho(U, V)| + (p(\rho^{-1}(i)) - (\tau^* - 1 + R - \delta)).\]
    \end{enumerate}

  \end{claim}
  \begin{proof}
    Before the first iteration of the \textbf{while} loop,
    Claim~\ref{claim:heart}\ref{heart:3} is satisfied because $U = S$
    and $\rho = \sigma$ is a partial schedule;
    Claim~\ref{claim:heart}\ref{heart:4} is satisfied because $Y$ is a
    flow that has the properties guaranteed by Claim~\ref{claim:inv1};
    Claim~\ref{claim:heart}\ref{heart:5} follows from the fact that
    the \textbf{while} loop in Step~\ref{step:makeS} of
    Algorithm~\ref{alg:buildlayer} was exited. The last claim needs
    some more explanation. Note that every
    $i \in (C_X \cap M^s_\sigma) \setminus (M^{(1)} \cup C_Y)$
    satisfies the following two properties:
    \begin{itemize}
    \item $i \in  M_\sigma^s \setminus (L_{\leq \ell} \cup I_{\leq \ell} \cup S)$, and
    \item
      $|H^b_\sigma(B_{\leq \ell} \cap M_\sigma^b, S \cup \{i\})| =
      |H^b_\sigma(B_{\leq \ell} \cap M_\sigma^b, S)| + 1.$
    \end{itemize}
    The second property follows from
    Proposition~\ref{prop:flows}\ref{flows:1} because $i \in
    C_X$. Therefore, for all such $i$, it must be the case that
    $H^s_\sigma(S \cup \{i\}, T \setminus \{i\}) < H^s_\sigma(S, T) \;
    + \; \left(p(\sigma^{-1}(i)) - (\tau^* -1 + R - \delta)\right)$.

    Suppose that the statement is true until the beginning of some
    iteration of the \textbf{while} loop. Let $Y_s$ be the maximum
    flow in $H^s_{\rho_s}(U_s, V_s)$ and $C_{Y_s}$ be the
    corresponding minimum cut maintained by this procedure. We now
    show it holds at the end of that iteration as well. Let the
    machine chosen in this iteration be
    $i \in (C_X \cap M^s_{\rho_s}) \setminus (M^{(1)} \cup
    C_{Y_s})$. The augmentation step is well defined because $Y_s$ is
    a flow in $H^s_{\rho_s}(U_s, V_s)$ that does not use any flow path
    with a sink belonging to
    $(C_X \cap M^s_{\rho_s}) \setminus (M^{(1)} \cup C_{Y_s})$ as
    guaranteed by Claim~\ref{claim:heart}\ref{heart:4}. Therefore,
    $Y_s$ is also a feasible flow in
    $H^s_{\rho_s}(U_s \cup \{i\}, V_s \setminus \{i\})$, which can be
    augmented to a maximum flow in that network, say $Y'$. Let
    $C_{Y'}$ be the corresponding mincut that is computed in the
    procedure. We remark that $C_{Y_s} \subset C_{Y'}$ where the
    inclusion is strict, because $i \in C_{Y'} \setminus C_{Y_s}$,
    which follows from Proposition~\ref{prop:flows}\ref{flows:2}. We
    use this fact later.

\begin{figure}
  \centering
  \includegraphics[width=.7\linewidth]{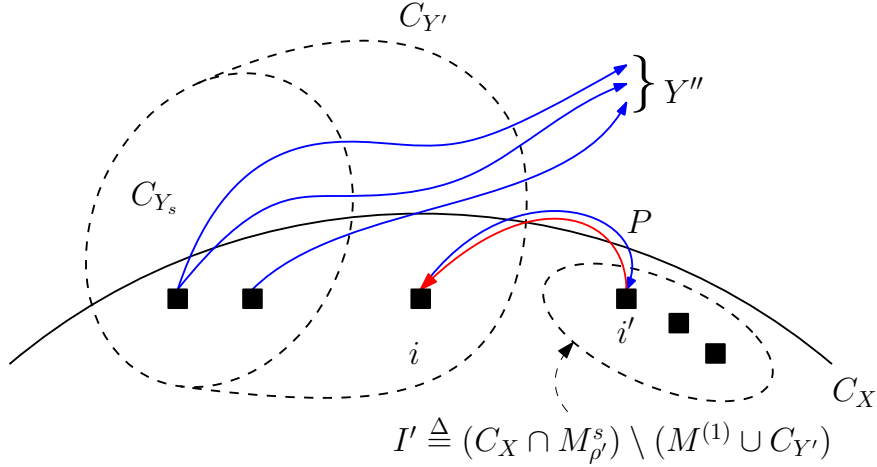}
    \caption{The induction step in the proof of
      Claim~\ref{claim:heart}. For an
      $i \in (C_X \cap M^s_{\rho^s}) \setminus (M^{(1)} \cup
      C_{Y_s})$, $Y'$ is the maximum flow in
      $H^s_{\rho_s}(U_s \cup \{i\}, V_s \setminus \{i\})$ shown in
      blue. It is partitioned as $Y'' \cup P$. The direction of the
      arcs in $P$ is reversed, as shown in red, during the
      \textbf{for} loop from the procedure defined in Part III. The
      mincuts corresponding to the flows $Y'$ and $Y_s$ obey the inclusion
      $C_{Y'} \supset C_{Y_s}$.}
    \label{fig:grm}
  \end{figure}

    The execution now enters a \textbf{for} loop that modifies the
    flow $Y$ and the partial schedule $\rho$ maintained by the
    procedure, which currently assume the values $Y'$ and $\rho_s$
    respectively. Let $Y'' \subseteq Y'$ and $\rho'$ be the state of
    $Y$ and $\rho$ at the \emph{end} of the \textbf{for} loop
    respectively, so that $P \stackrel{\Delta}{=} Y' \setminus Y''$ is
    precisely the set of flow paths used to update the partial
    schedule $\rho$ maintained by the procedure. It is seen that the
    $G^s_{\rho'}$ is obtained from the graph $G^s_{\rho_s}$ by
    reversing the directions of the arcs contained in the paths of
    $P$. For an illustration, see Figure~\ref{fig:grm}.


    As only flow paths $P \subseteq Y'$ ending in small machine sinks
    were used to update $\rho$, $\rho'$ is still a partial schedule
    (recall that $Y'$ is a flow in the network
    $H^s_{\rho_s}(U_s \cup \{i\}, V \setminus \{i\})$), which proves
    Claim~\ref{claim:heart}\ref{heart:1}. We also claim that
    Claim~\ref{claim:heart}\ref{heart:2} holds.
    \begin{claim}\label{claim:heartinternal}
      $Y''$ is a maximum flow in
      $H^s_{\rho'}(U_s \cup \{i\}, V_s \setminus \{i\})$ and $C_{Y'}$
      is the corresponding minimum cut.
    \end{claim}
    \begin{proof}
      The first part is true, because, after every single iteration of
      the \textbf{for} loop, the value of the flow $Y$ decreases by
      $\epsilon$ and so does the capacity of the arcs in
      $H^s_\rho(U \cup \{i\}, V\setminus \{i\})$ crossing the cut
      $C_Y$; since the value of the flow $Y'$ was equal to the
      capacity of the minimum cut $C_{Y'}$ \emph{before} the
      \textbf{for} loop, by the max-flow min-cut theorem, the claim
      follows (we apply the converse of the max-flow min-cut theorem
      at the end).

      The second part is true as well, because the set of vertices
      reachable from $U_s \cup \{i\}$ in the reduced flow network
      corresponding to the flow $Y'$ in
      $H^s_{\rho_s}(U_s \cup \{i\}, V_s \setminus \{i\})$ (defined to be
      $C_{Y'}$) is the \emph{same} as the set of vertices reachable
      from $U_s \cup \{i\}$ in the reduced flow network corresponding
      to the flow $Y''$ in
      $H^s_{\rho'}(U_s \cup \{i\}, V_s \setminus \{i\})$.
    \end{proof}

    Next, using Claims~\ref{claim:heart}\ref{heart:5} and~\ref{claim:heart}\ref{heart:3}, we see that
    \begin{align*}
    |H^s_{\rho_s}(U_s \cup \{i\}, V_s \setminus \{i\})|
      &<|H^s_{\rho_s}(U_s, V_s)| + \underbrace{(p(\rho_s^{-1}(i))-(\tau^*-1+R-\delta))}_{(*)}\\
      &\leq (\tau^*+R)|S| + \sum_{i' \in U_s \setminus S}(p(\rho_s^{-1}(i'))-(\tau^*-1+R-\delta)) + (*)\\
      &= (\tau^*+R)|S| + \sum_{i' \in (U_s \cup \{i\} )\setminus S}(p(\rho_s^{-1}(i'))-(\tau^*-1+R-\delta)).
    \end{align*}
    In the final equality, note that $i \not \in S$ because
    $i \not \in C_{Y'}$ and $C_{Y'} \supseteq U_s \supseteq S$ (using
    the fact mentioned in our earlier remark). In each iteration of
    the \textbf{for} loop, we saw that the value of the quantities
    $|H^s_{\rho}(U \cup \{i\}, V \setminus \{i\})|$ and
    $p(\rho^{-1}(i))$ reduces exactly by $\epsilon$ so that at the end
    of the \textbf{for} loop, we have
    \[ |H^s_{\rho'}(U_s \cup \{i\}, V_s \setminus \{i\})| \leq (\tau^*
      + R)|S| + \sum_{i' \in (U_s \cup \{i\} )\setminus
        S}(p(\rho'^{-1}(i'))-(\tau^*-1+R-\delta)),\] which proves
    Claim~\ref{claim:heart}\ref{heart:3}.

    At the end of the \textbf{for} loop,
    Claim~\ref{claim:heart}\ref{heart:4} is true as well because for
    every $f \in Y''$ the sink of $f$ does not belong to
    $I' \stackrel{\Delta}{=} (C_X \cap M^s_{\rho'}) \setminus (M^{(1)}
    \cup C_{Y'})$ by the postcondition of the loop, and using
    Claim~\ref{claim:heartinternal}. It only remains to prove
    Claim~\ref{claim:heart}\ref{heart:5}. Suppose towards
    contradiction that there is an $i' \in I'$ such that
    \[|H^s_{\rho'}(U_s \cup \{i,i'\}, V_s \setminus \{i,i'\})| \geq
      |H^s_{\rho'}(U_s \cup \{i\}, V_s \setminus \{i\})| +
      (p({\rho'}^{-1}(i')) -(\tau^*-1+R-\delta)).\] Let $F$ be some
    maximum flow in
    $H^s_{\rho'}(U_s \cup \{i,i'\}, V_s \setminus \{i,i'\})$ obtained
    by augmenting $Y''$ (which is well-defined because $Y''$ does not
    have flow paths that use vertices in $I'$ as sinks). Construct a
    new flow network $H'$ from
    $H^s_{\rho_s}(U_s \cup \{i, i'\}, V_s \setminus \{i, i'\})$ by
    adding a copy of the vertex $i'$ and call it $i'_{\text{dummy}}$
    (with identical neighborhood structure and vertex
    capacity). Interpret the flow $P$ in $H'$ so that none of the flow
    paths use $i'$ as as sink (they may use $i'_{\text{dummy}}$
    however). In the \emph{residual flow network} corresponding to
    this flow in $H'$, use the flow paths of $F$ to augment the
    flow. This is well-defined because of the way the graphs
    $G^s_{\rho'}$ and $G^s_{\rho_s}$ are related (recall that, to
    obtain the former from the latter, we just need to reverse the
    directions of the arcs of paths in $P$). It is important to note
    here that the resulting flow only contains paths and no
    cycles. Through this process we obtain a flow of value $|F| + |P|$
    in the network $H'$. By assumption,
    \begin{align*}
      |F| + |P| &\geq |Y''| + \underbrace{(p({\rho'}^{-1}(i')) -(\tau^*-1+R-\delta))}_{(*)} + |P|\\
                &= |Y'| + (*)\\
                &= |Y_s| + f_i + (*).
    \end{align*}
    The first equality follows from the definition of $P$; the second
    equality follows from the fact that $Y_s$ was augmented in the
    flow network $|H^s_{\rho_s}(U_s \cup \{i\}, V_s \setminus \{i\})|$
    to $Y'$ so that $Y'$ has exactly $f_i$ value flow paths with
    sources at $i$, and $|Y_s|$ value flow paths with sources in $U_s$
    (here we make use of the fact that $Y_s$ is a maximum
    flow). Therefore, we have a flow of value at least
    $|Y_s| + f_i + (*)$ in the network $H'$. Since the latter flow was
    constructed by augmenting maximum flows, we can deduce that the it
    is composed of $|Y_s|$ value flow paths originating at $U_s$,
    $f_i$ value flow paths originating at $i$ and the rest originating
    at $i'$. Deleting all flow paths leading to $i'_{\text{dummy}}$,
    we have a resulting flow of value at least
    $|Y_s| + f_i + (*) - l_{i'}$, where $l_{i'}$ is the value of flow
    paths in $P$ that end in $i'$. Owing to the way in which we
    updated $\rho$ in the \textbf{for} loop, we can see that
    $l_{i'} = p(\rho'^{-1}(i')) - p(\rho_s^{-1}(i'))$. Therefore there
    is a flow of value at least
    $|Y_s| + (p({\rho_s}^{-1}(i')) -(\tau^*-1+R-\delta))$ in the
    network $H^s_{\rho_s}(U_s \cup \{i'\}, V_s \setminus \{i'\})$,
    which then implies
    \[|H^s_{\rho_s}(U_s \cup \{i'\}, V_s \setminus \{i'\})| \geq
      |H^s_{\rho_s}(U_s, V_s)| + (p({\rho_s}^{-1}(i'))
      -(\tau^*-1+R-\delta)),\]
    contradicting Claim~\ref{claim:heart}\ref{heart:5}.
  \end{proof}

  Returning to our proof, we now run this procedure with one
  modification: we add $i$ to the set $U$ maintained by the procedure
  only if, in addition to the condition in the \textbf{while} loop,
  the new set $C'_Y$ would have a size at least $|C_Y| + 2$. Let
  $\rho_f$ and $C_{Y,f}$ be the reallocation policy and the cut at the
  end of the exeuction of this modified procedure. We have as a
  postcondition that executing the body of the \textbf{while} loop
  once with an
  $i \in (C_X \cap M^s_{\rho_f}) \setminus (M^{(1)} \cup C_{Y,f})$
  would result only in a set of size $|C_{Y,f}| + 1$ (note that $i$
  would be the new element in that case). Extend the dual
  assignment as follows.
  \[y^{(3)}_i = \begin{cases}
      \tau^*, \; &i \in C_{Y,f},\\
      0, \; &\text{else}.
    \end{cases} \quad \quad z^{(3)}_j = \begin{cases}
      \epsilon, \; &j \in \rho_f^{-1}(i) \cap J_s \; : \; i \in C_{Y,f},\\
      0, \; &\text{else}.
    \end{cases}
  \]

  We now need to bound the total loss incurred in this part of the
  proof. Suppose the \textbf{while} loop in the procedure executes
  $t \geq 0$ times. Let $U_f$, $V_f$, and $Y_f$ be the state of the
  (remaining) variables at the end of the procedure. For convenience
  assume that $U_f = S \cup \{i_1,\dots, i_t\}$, where the numbering
  follows the order in which the machines are added to the variable
  $U$ in the \textbf{while} loop. By
  Claim~\ref{claim:heart}\ref{heart:2}, $Y_f$ is a maximum flow in
  $H^s_{\rho_f}(U_f, V_f)$ and $C_{Y,f}$ is the corresponding
  mincut. Let
  $P \stackrel{\Delta}{=} (C_{Y,f} \setminus U_f) \cap M^s_{\rho_f}$
  and $Q \stackrel{\Delta}{=} C_{Y,f} \cap M^b_{\rho_f}$. Since the
  size of the mincut in the variable $C_Y$ increased by at least $2$
  in each iteration, we have at the end that $|P| + |Q| \geq t$.

  By the max-flow min-cut theorem, the value of the maximum flow
  equals the capacity of the minimum cut, and therefore, by
  Claim~\ref{claim:heart}\ref{heart:3},
  \begin{multline*}
    (\tau^*+R)|S| + \sum_{j=1}^t
    (p(\rho_f^{-1}(i_j))-(\tau^*-1+R-\delta)) \\ > \sum_{i
      \in S\cup\{i_1,\dots,i_t\}\cup P \cup Q} c_i + \sum_{i \in
      P}(\tau^*+R-p(\rho_f^{-1}(i))) + \sum_{i \in
      Q}(\tau^*+1+R-p(\rho_f^{-1}(i))-\epsilon),
  \end{multline*}
  where $c_i$ is the total capacity of machine-job arcs with $i$ as
  one endpoint crossing the minimum cut $C_{Y_f}$. The terms on the left
  together upper bound the value of maximum flow in the final network
  $H^s_{\rho_f}(U_f, V_f)$,
  whereas the terms on the right count the contributions to the
  minimum cut arising from machine-job arcs and machine-sink
  arcs. Splitting the first sum on the right,
  \begin{multline*}
    (\tau^*+R)|S| + \sum_{j=1}^t
    (p(\rho_f^{-1}(i_j))-(\tau^*-1+R-\delta)) \\ > \sum_{i \in S} c_i +
    \sum_{i \in \{i_1,\dots,i_t\}\cup P} c_i + \sum_{i \in Q} c_i +
    \sum_{i \in P}(\tau^*+R-p(\rho_f^{-1}(i))) + \sum_{i \in
      Q}(\tau^*+1+R-p(\rho_f^{-1}(i))-\epsilon).
  \end{multline*}
  After rearranging the terms,
  \begin{multline*}
    (\tau^*+R)|S| -t(\tau^*-1+R-\delta)
    > \sum_{i \in S} c_i -\sum_{i \in \{i_1,\dots,i_t\}\cup P} (p(\rho_f^{-1}(i))-c_i)\\
    -\sum_{i \in Q} (p(\rho_f^{-1}(i))-1-c_i) + (|P|+|Q|)(\tau^*+R)-|Q|\epsilon,
  \end{multline*}
  we derive
  \begin{align}\label{eq:gluing}
    \begin{split}
      -\sum_{i \in S} c_i +\sum_{i \in \{i_1,\dots,i_t\}\cup P}
      (p(\rho_f^{-1}(i))-c_i)&+\sum_{i \in Q} (p(\rho_f^{-1}(i))-1-c_i)
      \\
      &> -(\tau^*+R)|S| +t(\tau^*-1+R-\delta) +
      (|P|+|Q|)(\tau^*+R)-|Q|\epsilon \\
      &\geq  -(\tau^*+R)|S| +t(\tau^*-1+R-\delta) +
      (|P|+|Q|)(\tau^*+R-\epsilon).
    \end{split}
  \end{align}
  
  We demonstrate that the assignment $(y^{(3)},z^{(3)})$ amortizes
  itself locally using~\eqref{eq:gluing}.

  \begin{align*}
      \sum_{j \in J} &
      z^{(3)}_j - \sum_{i \in M} y^{(3)}_i \\
      &= \sum_{i \in S\cup\{i_1,\dots,i_t\}\cup P}
      (p(\rho_f^{-1}(i))-c_i) + \sum_{i
        \in Q} (p(\rho_f^{-1}(i))-1-c_i) - \tau^*(|S| + t + |P|+|Q|)\\
      &\geq \underbrace{-\sum_{i \in S}c_i + \sum_{i \in \{i_1,\dots,i_t\}\cup P}
      (p(\rho_f^{-1}(i))-c_i)+ \sum_{i \in Q}
      (p(\rho_f^{-1}(i))-1-c_i)}_{\eqref{eq:gluing}} -\tau^*(|S| + t + |P|+|Q|)\\
      &>-(\tau^*+R)|S|+t(\tau^*-1+R-\delta)+(|P|+|Q|)(\tau^*+R-\epsilon) -\tau^*(|S| + t + |P|+|Q|)\\
      &=-(\tau^*+R)|S|+t(\tau^*-1+R-\delta) + (|P|+|Q|)(R-\epsilon)-\tau^*(|S| + t)\\
      &=-(2\tau^*+R)|S|+t(R-\delta-1) + (|P|+|Q|)(R-\epsilon)\\
      &\geq-(2\tau^*+R)|S|+t(R-\delta-1) + t(R-\epsilon)\\
      &=-(2\tau^*+R)|S|+t\underbrace{(2R-\delta-\epsilon-1)}_{\text{$\geq 0$
        follows from Claim~\ref{claim:positivity}}}\\
      &\geq-(2\tau^*+R)|S|. \numberthis \label{eq:part3}
  \end{align*}

  \paragraph{Part IV: The Rest} As noted in Part III, we may now have
  machines
  $i \in (C_X \cap M^s_{\rho_f}) \setminus (M^{(1)} \cup C_{Y, f})$
  that increase the size of the set $C_{Y,f}$ described in the
  previous part by one. Let $M^{(4)}$ denote the set of such machines;
  note that they must necessarily be a subset of $M^s_\sigma$ (which
  is the same as $M^s_{\rho_f}$). By the postcondition of the modified
  procedure, we deduce that each machine in $M^{(4)}$ has at least
  $\tau^*-1+R-\delta$ processing time small jobs assigned to it by
  $\rho_f$ such that each of those jobs can be assigned to only
  machines in $C_{Y,f} \cup M^{(1)}$ besides itself. Let
  \[S_i \stackrel{\Delta}{=} \{j \in \rho_f^{-1}(i) \cap J_s \; | \;
  \Gamma(j) \subseteq \{i\} \cup C_{Y,f} \cup M^{(1)}\}.\] We set the
  dual values of these machines as follows.

  \[y^{(4)}_i = \begin{cases}
      \sum_{j \in \rho_f^{-1}(i)} z^{(4)}_j, \; &i \in M^{(4)},\\
      0, \; &\text{else}.
    \end{cases} \quad \quad z^{(4)}_j = \begin{cases}
      \epsilon, \; &j \in S_i \; : \; i \in M^{(4)},\\
      0, \; &\text{else}.
    \end{cases}
  \]

  \paragraph{The Dual Assignment} Before we describe our final dual
  assignment $(\bar{y}, \bar{z})$, let us note that the supports of
  $(y^{(1)},z^{(1)})$, $(y^{(3)},z^{(3)})$ and $(y^{(4)}, z^{(4)})$
  are disjoint by construction. Further, observe that the support of
  $(y^{(2)},z^{(2)})$ may only intersect with the support of
  $(y^{(3)},z^{(3)})$, and is disjoint from the other two. However, we
  can assume without loss of generality that they too are disjoint, as
  machines that receive both positive $y^{(2)}$ and $y^{(3)}$ values
  will only help us in the later arguments. The reasoning is that, for
  a machine $i$ such that $y^{(3)}_i=\tau^*$ and $y^{(2)}_i=R-\delta$,
  we will only consider the contribution of $y^{(3)}_i$ to the final
  assignment $\bar{y}$ in the feasibility whereas we will count both
  contributions towards the objective function i.e., we prove that the
  dual objective function of the final assignment is positive even
  after counting an extra contribution of $R-\delta$ for such
  machines. Note that there can be no jobs in the intersection of the
  supports of the dual assignments from the second and third parts. So
  we assume that the supports of the dual assignments from the four
  parts are disjoint. Set $(\bar{y}, \bar{z})$ to be the union of the
  four assignments in the natural way.

  \paragraph{Feasibility} Our assignment $(\bar{y}, \bar{z})$ to the
  dual variables is such that $\sum_{j \in C} \bar{z}_j \leq \tau^*$
  for every $i \in M, C \in \conf{i,\tau^*}$ because
  $\bar{z}_j \leq p_j$ for every $j \in J$. Therefore, the constraints
  of~\eqref{eq:clpdual} involving machines $i$ for which
  $\bar{y}_i = \tau^*$ are satisfied.

  This leaves us to only consider the machines whose dual values were
  set in Parts II and IV. Let $i \in M^{(2)}$ and
  $C \in \conf{i,\tau^*}$. By
  Proposition~\ref{prop:basicinvs}\ref{basicinvs:4}, the construction
  of the cut $C_{Y,f}$ (note that infinite capacity job-machine arcs
  cannot cross this cut), and the dual setting of $z^{(4)}$ (where we
  assigned positive $z^{(4)}_j$ values only to jobs in $S_i$ for some
  $i \in M^{(4)}$), there can be no $j \in C \cap J_s$ such that
  $\bar{z}_j > 0$. As $\tau^* < 2$, there is at most one big job in a
  configuration. Since it is assigned a dual value of $R-\delta$, all
  constraints involving such machines are satisifed. Now let
  $i \in M^{(4)}$ and $C \in \conf{i,\tau^*}$. Recall that
  $\bar{y}_i > \tau^*-1+R-\delta$. If $C$ contains a big job then
  $\sum_{j \in C} \bar{z}_j \leq R - \delta + \tau^* - 1$. If $C$ does
  not contain big jobs, then,
  \[\sum_{j \in C} \bar{z}_j = \sum_{j \in C : \bar{z}_j > 0}
    \bar{z}_j \leq \sum_{j \in \sigma^{-1}(i)} \bar{z}_j =
    \bar{y}_i.\] The inequality in the middle deserves explanation. This
  follows from the assertion that any job
  $j \in C \cap J_s$ such that $\bar{z}_j > 0$ must be part of $\sigma^{-1}(i)$
  by Proposition~\ref{prop:basicinvs}\ref{basicinvs:4}, the
  construction of the cut $C_{Y,f}$, and the dual setting of
  $z^{(4)}$.

  \paragraph{Positivity} Now that we have described our dual
  assignment, we show
  $\sum_{j \in J} \bar{z}_j - \sum_{i \in M} \bar{y}_i > 0$ by
  counting the contributions to the objective function from the dual
  variable settings in each of the four previous parts.
  
  From \eqref{eq:part1} and \eqref{eq:part2}, the total gain in the
  first and second part is at least
  \[\left(2R-\delta-\epsilon-1-\tau^*\mu_1\mu_2\right)|B_{\leq
      \ell}|-(1-\mu_2)(1-R)|A_{\leq \ell}|-\mu_2\tau^*|A_{\leq \ell}|
    - |S|.\] In the third part, using \eqref{eq:part3}, the total loss
  is at most $(2\tau^*+R)|S|$. In the fourth part there is no net loss
  or gain. So, we can lower bound the objective function value of our
  dual assignment $(\bar{y}, \bar{z})$ as follows, making use of
  Lemma~\ref{lem:bi0} and Lemma~\ref{lem:bi} in the second
  inequality.
  \begin{align*}
    \sum_{j \in J} & \bar{z}_j - \sum_{i \in M} \bar{y}_i \\
    &\geq \left(2R-\delta-\epsilon-1-\tau^*\mu_1\mu_2\right)|B_{\leq
      \ell}|-(1-\mu_2)(1-R)|A_{\leq \ell}|-\mu_2\tau^*|A_{\leq \ell}|
    - |S| - (2\tau^* + R)|S|\\
    &\geq \left(2R-\delta-\epsilon-1-\tau^*\mu_1\mu_2-(1+2\tau^*+R)\mu_1\right)|B_{\leq
      \ell}|-(1-\mu_2)(1-R)|A_{\leq \ell}|-\mu_2\tau^*|A_{\leq \ell}|\\
    &\geq
    \left(2R-\delta-\epsilon-1-\tau^*\mu_1\mu_2-(1+2\tau^*+R)\mu_1\right)\left(\delta(1-\mu_2)-2\mu_2\right)\cdot |A_{\leq \ell}|-\left((1-\mu_2)(1-R)-\mu_2\tau^*\right)|A_{\leq \ell}|\\
    &\geq
    \underbrace{\Big(\left(2R-\delta-\epsilon-1-\tau^*\mu_1\mu_2-(1+2\tau^*+R)\mu_1\right)\left(\delta(1-\mu_2)-2\mu_2\right) -(1-\mu_2)(1-R)-\mu_2\tau^*\Big)}_{(*)}|A_{\leq
      \ell}|.
  \end{align*}

  Subtituting the values of $R, \mu_1, \mu_2, \delta$ as defined in
  the statement of Theorem~\ref{thm:one-epsilon} and
  \eqref{eq:mudelta}, one can verify (see
  Claim~\ref{claim:positivity}) that the bracketed expression $(*)$ is
  strictly positive for every $1 \leq \tau^* < 2$, $0 < \epsilon < 1$
  and $\zeta > 0$.






\end{proof}

\subsubsection{Polynomial Bound on Loop Iterations}\label{sec:polybound}

\begin{corollary}\label{corr:ifr}
  Suppose $1 \leq \tau^* < 2$. In each execution of the \textbf{while} loop in
  Step~\ref{step:collapseq} of Algorithm~\ref{alg:localsearch}, the
  \textbf{if} condition in Step~\ref{step:ifr} is satisfied.
\end{corollary}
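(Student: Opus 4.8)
We argue by contradiction. Fix an iteration of the \textbf{while} loop of Step~\ref{step:collapseq} (so $\ell\ge 1$ there), let $\sigma,\{L_i\}_{i=0}^{\ell},\{I_i\}_{i=0}^{\ell}$ be the state at its beginning, let $I=\{i\in A_\ell : p(\sigma^{-1}(i))\le\tau^*+R-1\}$ as in Step~\ref{step:newcollapse}, and let $(I'_0,\dots,I'_\ell,X)$ be the output of the call in Step~\ref{step:canonicald}. Suppose the \textbf{if} test of Step~\ref{step:ifr} fails, i.e.\ after Step~\ref{step:assigncanon} we have $|I'_r|<\mu_1\mu_2\,|B_{r-1}\cap M^b_\sigma|$ for every $1\le r\le\ell$. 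Summing over $r$ and using that $B_0\cap M^b_\sigma,\dots,B_{\ell-1}\cap M^b_\sigma$ are pairwise disjoint (Proposition~\ref{prop:basicinvs}\ref{basicinvs:1}),
\[
  \sum_{r=1}^{\ell}|I'_r|\;<\;\mu_1\mu_2\,|B_{\le\ell-1}\cap M^b_\sigma|.
\]
By Proposition~\ref{prop:canonicald}, $X$ is a maximum flow in $H^b_\sigma(B_{\le\ell-1}\cap M^b_\sigma,\,I_{\le\ell}\cup I)$ and (using Proposition~\ref{prop:vertexdisjointpaths} and the unit vertex capacities of the sinks) its paths end in distinct sinks partitioned into $I'_1,\dots,I'_\ell$ with $I'_0=\emptyset$, so $|X|=\sum_{r=1}^{\ell}|I'_r|$. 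Hence it suffices to prove the opposite inequality $|H^b_\sigma(B_{\le\ell-1}\cap M^b_\sigma,\,I_{\le\ell}\cup I)|\ge\mu_1\mu_2\,|B_{\le\ell-1}\cap M^b_\sigma|$.

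To lower bound this flow value I enlarge the sink set to $I_{\le\ell}\cup A_\ell$. Since $I\subseteq A_\ell$ and $A_\ell$ is disjoint from $I_{\le\ell}$ (Proposition~\ref{prop:basicinvs}\ref{basicinvs:1}), the two sink sets differ in exactly the $|A_\ell|-|I|$ unit-capacity vertices of $A_\ell\setminus I$; a maximum flow to $I_{\le\ell}\cup A_\ell$ is a family of vertex-disjoint paths ending in distinct sinks, at most $|A_\ell|-|I|$ of which end in $A_\ell\setminus I$, and discarding those leaves a valid flow to $I_{\le\ell}\cup I$. Therefore
\[
  |H^b_\sigma(B_{\le\ell-1}\cap M^b_\sigma,\,I_{\le\ell}\cup I)|\;\ge\;|H^b_\sigma(B_{\le\ell-1}\cap M^b_\sigma,\,A_\ell\cup I_{\le\ell})|-(|A_\ell|-|I|).
\]
Now invoke the second statement of Theorem~\ref{thm:bigupdate} with $i=\ell-1$ (it holds at the beginning of the present \textbf{while}-loop iteration): $|H^b_\sigma(B_{\le\ell-1}\cap M^b_\sigma,\,A_\ell\cup I_{\le\ell})|\ge|A_\ell|$. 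Plugging this in, $|X|=|H^b_\sigma(B_{\le\ell-1}\cap M^b_\sigma,\,I_{\le\ell}\cup I)|\ge|I|$.

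Finally, the \textbf{while}-loop condition of Step~\ref{step:collapseq} being satisfied is precisely the statement $|I|\ge\mu_2|A_\ell|$, and the second statement of Theorem~\ref{thm:newlayer} (this is where the hypothesis $1\le\tau^*<2$ is used) gives $|A_\ell|\ge\mu_1|B_{\le\ell-1}|\ge\mu_1|B_{\le\ell-1}\cap M^b_\sigma|$. Chaining, $|X|\ge|I|\ge\mu_2|A_\ell|\ge\mu_1\mu_2\,|B_{\le\ell-1}\cap M^b_\sigma|$, which contradicts the strict inequality displayed above. Hence the \textbf{if} condition of Step~\ref{step:ifr} must hold.

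The argument itself is short; the one place to be careful is the bookkeeping around \emph{when} each invariant applies. The invariants of Theorems~\ref{thm:bigupdate} and~\ref{thm:newlayer} must be read at the \emph{beginning} of the current iteration of the Step~\ref{step:collapseq} loop, before Step~\ref{step:assigncanon} overwrites the $I_i$'s, and matched to the fact that \texttt{CanonicalDecomposition} is fed exactly those $\sigma$ and $\{I_i\}$; one must also check that $I\subseteq A_\ell\subseteq M^s_\sigma$ and that all the source/sink sets used (e.g.\ $B_{\le\ell-1}\cap M^b_\sigma$ versus $I_{\le\ell}\cup A_\ell\subseteq M^s_\sigma$) are well-formed, so that the unit-capacity-sink counting is legitimate. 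There is a mild apparent circularity in that the proof of Claim~\ref{claim:innerwhile} underlying Theorem~\ref{thm:bigupdate} refers to the index $r$ chosen in Step~\ref{step:chooser} (which presupposes the \textbf{if} fired); this is to be understood as part of a single simultaneous induction over the execution of Algorithm~\ref{alg:localsearch}, of which this corollary is one component.
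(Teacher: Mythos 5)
Your proof is correct and follows essentially the same route as the paper's: lower bound $|X|$ via Theorem~\ref{thm:bigupdate} applied to $A_\ell \cup I_{\leq\ell}$, restrict sinks to $I \cup I_{\leq\ell}$ losing at most $|A_\ell|-|I|$, chain with the Step~\ref{step:collapseq} condition and Theorem~\ref{thm:newlayer}, and conclude by averaging over the partition from Proposition~\ref{prop:canonicald}\ref{canonicald:1}. The only cosmetic difference is that you argue by contradiction where the paper uses the equivalent direct pigeonhole; your closing remark about the apparent circularity with Claim~\ref{claim:innerwhile} is correctly resolved as you say (the invariant persists trivially when the \textbf{if} does not fire, so there is no genuine dependency loop).
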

\begin{proof}
  Consider the beginning of some iteration of the \textbf{while} loop
  in Step~\ref{step:collapseq} of Algorithm~\ref{alg:localsearch} with
  a state $\mathcal{S}$. By the condition in
  Step~\ref{step:collapseq}, $\ell \geq 1$ and
  $|I| \geq \mu_2|A_{\ell}|$, where
  $I \stackrel{\Delta}{=} \{i \in A_{\ell} \; | \; p(\sigma^{-1}(i))
  \leq \tau^*+R-1\}$. Applying Theorem~\ref{thm:bigupdate},
  $|H^b_\sigma(B_{\leq \ell-1} \cap M^b_\sigma, A_{\ell} \cup I_{\leq
    \ell})| \geq |A_{\ell}|.$ Since $|I| \geq \mu_2|A_{\ell}|$ and
  $I \subseteq A_{\ell}$ this means that
  \[|H^b_\sigma(B_{\leq \ell-1} \cap M^b_\sigma, I \cup I_{\leq
      \ell})| \geq \mu_2|A_{\ell}| \geq \mu_1\mu_2|B_{\leq \ell-1}
    \cap M_\sigma^b|,\] where the second inequality follows from
  Theorem~\ref{thm:newlayer}. By
  Proposition~\ref{prop:canonicald}\ref{canonicald:1}, at least one of
  the sets $I'_i$ computed in Step~\ref{step:canonicald} of
  Algorithm~\ref{alg:localsearch} must be of size at least
  $\mu_1\mu_2|B_{i-1} \cap M_\sigma^b|$ for some $1 \leq i \leq \ell$.
\end{proof}

Given the state $\mathcal{S}$ of the algorithm at some point during
its execution, the \emph{signature of a layer} $L_i$ is defined as
\[s_i \stackrel{\Delta}{=} \Bigl\lfloor
  \log_{\frac{1}{1-\mu_1\mu_2}}\left(\left(\frac{1}{\eta}\right)^i\left|B_i
      \cap M^b_\sigma \right|\right)\Bigr\rfloor + i,\] where
$\eta \stackrel{\Delta}{=} \left(\delta(1-\mu_2)-2\mu_2\right)\mu_1 > 0$ by Claim~\ref{claim:positivity2}. The
\emph{signature vector} corresponding to the given state is then
defined as a vector in the following way:
\[s \stackrel{\Delta}{=} (s_0,\dots,s_\ell,\infty).\]

\begin{lemma}\label{lem:siginvs}
  Suppose $1 \leq \tau^* < 2$. The signature vector satisfies the
  following properties.
  \begin{enumerate}[label=(\alph*)]
  \item\label{siginvs:1} At the beginning of each iteration of the
    main loop of the algorithm, $\ell = O(\log |J_b|)$.
  \item\label{siginvs:2} The coordinates of the signature vector are
    well-defined and increasing at the beginning of each iteration of
    the main loop of the algorithm.
  \end{enumerate}
\end{lemma}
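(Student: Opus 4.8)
\emph{The plan.} Everything follows from one inequality that is essentially free given the earlier sections: for $0 \le i \le \ell-1$, chaining Lemma~\ref{lem:bi} ($|B_{i+1}\cap M^b_\sigma| > (\delta(1-\mu_2)-2\mu_2)\,|A_{i+1}|$) with Theorem~\ref{thm:newlayer} ($|A_{i+1}| \ge \mu_1|B_{\le i}|$; valid under our hypothesis $1\le\tau^*<2$) gives $|B_{i+1}\cap M^b_\sigma| > \eta\,|B_{\le i}|$, since $\eta = (\delta(1-\mu_2)-2\mu_2)\mu_1$ and $\eta > 0$ by Claim~\ref{claim:positivity2}. Abbreviating $b_i \stackrel{\Delta}{=} |B_i\cap M^b_\sigma|$ and using $|B_{\le i}| \ge |B_i| \ge b_i$, this weakens to $b_{i+1} > \eta\,b_i$; using instead the pairwise disjointness of $\{B_j\}_{j=0}^\ell$ from Proposition~\ref{prop:basicinvs}\ref{basicinvs:1}, it strengthens to $b_{i+1} > \eta\sum_{j\le i}b_j$. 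I would additionally use Lemma~\ref{lem:bi0} ($b_0\ge 1$) and the fact that $|M^b_\sigma| = |J_b|$, because every big job is scheduled and no machine carries two big jobs (Definition~\ref{def:partialschedule}); note also $|J_b|\ge 1$ since $p_{\max}=1$.

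\emph{Part \ref{siginvs:2}.} Well-definedness is immediate: $b_{i+1} > \eta\,|B_{\le i}| \ge \eta\,b_0 \ge \eta > 0$ together with $b_0\ge 1$ makes every $b_i$ a positive integer, and since $0<\mu_1\mu_2<1$ (clear from \eqref{eq:mudelta}) the base $1/(1-\mu_1\mu_2)$ exceeds $1$, so each $s_i = \lfloor\log_{1/(1-\mu_1\mu_2)}((1/\eta)^i b_i)\rfloor + i$ is a well-defined integer. For monotonicity, put $\beta \stackrel{\Delta}{=} 1/(1-\mu_1\mu_2)$; multiplying $b_{i+1} > \eta b_i$ through by $(1/\eta)^{i+1}$ gives $(1/\eta)^{i+1}b_{i+1} > (1/\eta)^i b_i$, and since $x\mapsto\lfloor\log_\beta x\rfloor$ is nondecreasing we get $\lfloor\log_\beta((1/\eta)^{i+1}b_{i+1})\rfloor \ge \lfloor\log_\beta((1/\eta)^i b_i)\rfloor$, hence $s_{i+1}-s_i \ge 1$. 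As $s_\ell < \infty$ trivially, the vector $(s_0,\dots,s_\ell,\infty)$ is strictly increasing.

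\emph{Part \ref{siginvs:1}.} Here I would run a short geometric-growth argument on the cumulative counts. From $b_{i+1} > \eta\sum_{j\le i}b_j$ we get $\sum_{j\le i+1}b_j = \sum_{j\le i}b_j + b_{i+1} > (1+\eta)\sum_{j\le i}b_j$, so by induction $\sum_{j\le\ell}b_j > (1+\eta)^\ell b_0 \ge (1+\eta)^\ell$. On the other hand $\sum_{j\le\ell}b_j = |B_{\le\ell}\cap M^b_\sigma| \le |M^b_\sigma| = |J_b|$, again by disjointness. Hence $(1+\eta)^\ell < |J_b|$, i.e.\ $\ell < \log_{1+\eta}|J_b| = O(\log|J_b|)$, with $\eta$ a fixed positive constant (and $|J_b|\ge 1$).

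\emph{The main obstacle.} There really is none of substance — the force of the statement is carried by Theorem~\ref{thm:newlayer}, Lemma~\ref{lem:bi} and Lemma~\ref{lem:bi0}, and the rest is bookkeeping. The two things to be careful about: flooring a strict inequality only yields ``$\ge$'', so the ``$+i$'' summand hard-wired into $s_i$ is precisely what turns this into strict monotonicity; and the exponent of $1/\eta$ inside the signature must stay aligned with the exponent of $1+\eta$ in the telescoped cumulative bound. The one genuinely load-bearing observation is that the per-layer growth factor $\eta$ — useless in isolation, since $\eta<1$ — upgrades to a real factor-$(1+\eta)$ growth of the cumulative big-machine counts $\sum_{j\le i}b_j$, and that cumulative count, capped by $|J_b|$, is what pins $\ell$ to be logarithmic.
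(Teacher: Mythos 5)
Your proof is correct and matches the paper's argument essentially step for step: you chain Lemma~\ref{lem:bi} with Theorem~\ref{thm:newlayer} to get $|B_{i+1}\cap M^b_\sigma|>\eta\,|B_{\le i}\cap M^b_\sigma|$, weaken this to $b_{i+1}>\eta b_i$ for the monotonicity of the $s_i$, and use disjointness of the $B_j$ together with Lemma~\ref{lem:bi0} and $|M^b_\sigma|\le|J_b|$ for the $\log_{1+\eta}|J_b|$ bound on $\ell$, exactly as the paper does. The only bit the paper spells out that you leave implicit is the remark that the layers $L_0,\dots,L_\ell$ have been stable since construction, but since Lemma~\ref{lem:bi} and Theorem~\ref{thm:newlayer} are already stated at the beginning of each main-loop iteration, nothing is lost.
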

\begin{proof}
  Consider the beginning of some iteration of the main loop of the
  local search algorithm. Let $L_0,\dots,L_\ell$ be the set of layers
  maintained by the algorithm. Let $0 \leq i \leq \ell$. Observe that
  from the moment layer $L_i$ was constructed until now, $A_i$ remains
  unmodified (even though the assignment of
  jobs by $\sigma$ to machines in $A_\ell$ may have changed). This is
  because $A_i$ can be modified only if, in some intervening iteration
  of the main loop, the variable $r$ from Step~\ref{step:chooser} of
  Algorithm~\ref{alg:localsearch} is chosen to be $i'$ for some
  $i' \leq i$. But in that case we discard all the layers
  $L_{i'},\dots,L_\ell$ in Step~\ref{step:discard} and this includes
  layer $L_i$ as well. Therefore, for $0 \leq i \leq \ell-1$,
  \[|B_{i+1} \cap M^b_\sigma|
    \stackrel{\substack{\text{Lem~\ref{lem:bi}}}}{>}
    \left(\delta(1-\mu_2)-2\mu_2\right)|A_{i+1}|
    \stackrel{\text{Thm~\ref{thm:newlayer}}}{\geq}
    \left(\delta(1-\mu_2)-2\mu_2\right) \cdot \mu_1|B_{\leq i}|
    \stackrel{(*)}{\geq} \eta |B_{\leq i} \cap M^b_\sigma|.\] The
  second inequality above uses the fact that the layers
  $L_0,\dots, L_\ell$ were not modified since construction as argued
  previously. As the final term in the chain of inequalities above is
  at least $\eta |B_i \cap M^b_\sigma|$, this
  proves~\ref{siginvs:2}. As the sets
  $B_0,\dots,B_\ell$ are disjoint by construction,
  \[|B_{\leq i + 1} \cap M^b_\sigma| = |B_{i+1} \cap M^b_\sigma| +
    |B_{\leq i} \cap M^b_\sigma| \stackrel{\text{Using $(*)$}}{\geq} 
    (1+\eta)|B_{\leq i} \cap M^b_\sigma|.\]
  As $|B_0 \cap M^b_\sigma| \geq 1$ by Lemma~\ref{lem:bi0}, $\ell$ is
  $O(\log_{1+\eta} |M^b_\sigma|) = O(\log |J_b|)$, which
  proves~\ref{siginvs:1}.
\end{proof}

\begin{lemma}\label{lem:sigcount}
  Suppose $1 \leq \tau^* < 2$. Only $\text{poly}(|J_b|)$ many
  signature vectors are encountered during the execution of
  Algorithm~\ref{alg:localsearch}.
\end{lemma}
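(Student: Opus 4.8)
The plan is to bound the number of admissible signature vectors by a direct counting argument, using the two facts already established in Lemma~\ref{lem:siginvs}: the number of layers satisfies $\ell+1 = O(\log|J_b|)$, and the coordinates $s_0,s_1,\dots,s_\ell$ are well-defined and strictly increasing. Since the trailing entry $\infty$ is common to every signature vector, it suffices to count the admissible tuples $(s_0,\dots,s_\ell)$.

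First I would pin down the range of a single coordinate $s_i$. Every big machine carries exactly one big job (at least one by the definition of $M^b_\sigma$, at most one by Definition~\ref{def:partialschedule}\ref{partialschedule:3}), so there is an injection from $M^b_\sigma$ into $J_b$ and hence $1 \le |B_i \cap M^b_\sigma| \le |M^b_\sigma| \le |J_b|$; the lower bound is exactly the well-definedness asserted in Lemma~\ref{lem:siginvs}\ref{siginvs:2} (it also follows from Lemma~\ref{lem:bi0} for $i=0$ and from Lemma~\ref{lem:bi} together with $|A_i|\ge 1$ and Claim~\ref{claim:positivity2} for $i\ge 1$). The base $\frac{1}{1-\mu_1\mu_2}$ of the logarithm is a constant $>1$ depending only on $\epsilon,\zeta$, and $1/\eta \ge 1$ is likewise a constant, bounded away from $0$ by Claim~\ref{claim:positivity2}. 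Rewriting the logarithm of the product as a sum,
\[
  s_i \;=\; \Bigl\lfloor\, i\,\log_{\frac{1}{1-\mu_1\mu_2}}(1/\eta) \;+\; \log_{\frac{1}{1-\mu_1\mu_2}}|B_i \cap M^b_\sigma| \,\Bigr\rfloor + i ,
\]
and combining $i \le \ell = O(\log|J_b|)$ from Lemma~\ref{lem:siginvs}\ref{siginvs:1} with $1 \le |B_i\cap M^b_\sigma| \le |J_b|$, we obtain $0 \le s_i \le C\log|J_b|$ for a constant $C = C(\epsilon,\zeta)$: the lower bound because the argument of the original logarithm is $\ge 1$ (as $1/\eta \ge 1$ and $|B_i\cap M^b_\sigma| \ge 1$), and the upper bound because each summand above is $O(\log|J_b|)$.

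With both ingredients in hand, a signature vector is determined by a strictly increasing sequence of integers drawn from $\{0,1,\dots,\lceil C\log|J_b|\rceil\}$, a ground set of size $O(\log|J_b|)$, so distinct signature vectors correspond to distinct subsets of this set. Consequently the number of signature vectors that can ever be encountered is at most $2^{\lceil C\log|J_b|\rceil + 1} = \text{poly}(|J_b|)$ (the statement is trivial when $|J_b|$ is bounded by a constant, so $C$ may be taken uniformly).

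The step that needs genuine care --- and where strict monotonicity of the $s_i$ is essential rather than cosmetic --- is this final count: a priori each of the $\Theta(\log|J_b|)$ coordinates ranges over $\Theta(\log|J_b|)$ values, which by itself would give only the quasi-polynomial bound $(\log|J_b|)^{\Theta(\log|J_b|)}$. It is precisely the ``$+\,i$'' shift built into the definition of $s_i$ --- which forces $s_{i+1} \ge s_i + 1$ via the inequality $|B_{i+1}\cap M^b_\sigma| \ge \eta\,|B_{\le i}\cap M^b_\sigma|$ established inside the proof of Lemma~\ref{lem:siginvs} --- that collapses the count to $2^{O(\log|J_b|)} = \text{poly}(|J_b|)$.
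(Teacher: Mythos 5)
Your proposal is correct and takes essentially the same route as the paper: bound each coordinate $s_i$ by $O(\log|J_b|)$, invoke the strict monotonicity from Lemma~\ref{lem:siginvs}\ref{siginvs:2}, and conclude that each signature vector is determined by a subset of a ground set of size $O(\log|J_b|)$, giving $2^{O(\log|J_b|)} = \text{poly}(|J_b|)$ possibilities. You supply a bit more detail than the paper on the lower bound $s_i \geq 0$ and on why $1 \leq |B_i \cap M^b_\sigma| \leq |J_b|$, but the argument is the same one.
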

\begin{proof}
  By Lemma~\ref{lem:siginvs}\ref{siginvs:1}, and the definition of
  $s_i$, each coordinate is at most $O(\log
  |J_b|)$. Lemma~\ref{lem:siginvs}\ref{siginvs:2} also implies that
  the coordinates of the signature vector are increasing at the
  beginning of the main loop. So every signature vector encountered at
  the beginning of the main loop can be unambiguously described as a
  subset of a set of size $O(\log |J_b|)$.
\end{proof}

\begin{lemma}\label{lem:sigdec}
  Suppose $1 \leq \tau^* < 2$. The signature vector decreases in
  lexicographic value across each iteration of the main loop of the
  local search algorithm.
\end{lemma}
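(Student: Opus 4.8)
The plan is to follow the anatomy of one iteration of the main loop: it consists of the layer construction of Step~\ref{step:finishconstruction} (together with the re-indexing of Step~\ref{step:initI}), followed by zero or more iterations of the inner \textbf{while} loop of Step~\ref{step:collapseq}, each of which, by Corollary~\ref{corr:ifr}, performs a ``collapse'' to the smallest index $r$ meeting the \textbf{if}-condition of Step~\ref{step:ifr}. I would track how each of these sub-steps changes the signature vector, show that none of them raises it lexicographically, and that a full iteration strictly lowers it. Let $\ell$ denote the number of layers minus one at the start of the iteration and $s=(s_0,\dots,s_\ell,\infty)$ the signature vector there; the argument uses throughout that $s_i$ depends only on the index $i$ and on $|B_i\cap M_\sigma^b|$.

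The plan rests on two structural observations that I would establish first. (a) The call to \texttt{BuildLayer} relocates only small jobs, so it alters neither $M_\sigma^b$ nor the sets $B_0,\dots,B_\ell$, hence it leaves $s_0,\dots,s_\ell$ unchanged; after Step~\ref{step:initI} the signature vector is $s$ with one new coordinate $s_{\ell+1}$ inserted just before $\infty$. (b) A collapse to index $r$ leaves $s_i$ unchanged for every $i<r-1$ and strictly lowers $s_{r-1}$. For (b), I would argue that $\{f\in X: f\cap I_r\neq\emptyset\}$ are exactly the flow paths of $X$ ending in $I_r$, whose sources are, by Proposition~\ref{prop:canonicald}\ref{canonicald:1}, distinct machines of $B_{r-1}\cap M_\sigma^b$, so the \texttt{BigUpdate} of Step~\ref{step:bigupdate} strips the big job from each of those sources (Proposition~\ref{prop:bigupdate}\ref{bigupdate:2}); the only other machines whose big/small status changes are the sinks of those paths, which lie in the current $I$-sets or in $I\subseteq A_\ell$ and are therefore disjoint from every $B_i$ by Proposition~\ref{prop:basicinvs}\ref{basicinvs:1}, while the intermediate path vertices merely swap one big job for another. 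The ensuing \texttt{SmallUpdate} and the reassignment of $B_{r-1}$ in Step~\ref{step:updateB} touch only layer $r-1$, which gives the first half of (b); for the second half, the same count shows $|B_{r-1}\cap M_\sigma^b|$ first drops by $|I_r|$ and then can only shrink further in Step~\ref{step:updateB} (as in the proof of Proposition~\ref{prop:basicinvs}\ref{basicinvs:1}), so, using $|I_r|\geq\mu_1\mu_2|B_{r-1}\cap M_\sigma^b|$ from Step~\ref{step:ifr}, it ends up at most $(1-\mu_1\mu_2)$ times its old value; since $\log_{1/(1-\mu_1\mu_2)}((1-\mu_1\mu_2)x)=\log_{1/(1-\mu_1\mu_2)}(x)-1$, substituting into the definition of $s_{r-1}$ shows it drops by at least $1$.

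Given (a) and (b), I would finish by a case split on whether the inner \textbf{while} loop runs. If it does not, the signature vector at the start of the next main-loop iteration is $(s_0,\dots,s_\ell,s_{\ell+1},\infty)$ with $s_0,\dots,s_\ell$ the old values and $s_{\ell+1}$ finite (it is well-defined there by Lemma~\ref{lem:siginvs}\ref{siginvs:2}), hence lexicographically below $s=(s_0,\dots,s_\ell,\infty)$, the first difference being at coordinate $\ell+1$. If the inner loop does run, I would first note that its collapse indices strictly decrease within the iteration: a collapse to $r$ resets $\ell$ to $r-1$, so every later collapse is to an index at most $r-1$. Let $r$ be the last, hence smallest, of them. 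Every collapse acts (by (b)) only on a coordinate of index $\geq r-1$, and the earlier ones have indices $>r$ and so act on coordinates of index $\geq r>r-1$; combined with (a) this yields that $s_0,\dots,s_{r-2}$ still hold their start-of-iteration values when the last collapse strictly lowers $s_{r-1}$ from its start-of-iteration value. Since every layer of index $\geq r$ has been discarded, the resulting vector is $(s_0,\dots,s_{r-2},\tilde s_{r-1},\infty)$ with $\tilde s_{r-1}<s_{r-1}$; as $r\leq\ell+1$, i.e.\ $r-1\leq\ell$, comparing it with $s$ the first difference is at coordinate $r-1$, where it is smaller. Either way the signature vector has strictly decreased.

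The hard part will be the bookkeeping behind (b): pinning down exactly which machines change big/small status under \texttt{BigUpdate}, so as to conclude both that the layers of index strictly below $r-1$ (and hence their signatures) are untouched by a collapse to $r$, and that $|B_{r-1}\cap M_\sigma^b|$ contracts by the claimed multiplicative factor. Once (b) is secured, the observation that the collapse indices of a single main-loop iteration strictly decrease is what forces the lexicographic comparison to bite at coordinate $r-1$, and the remaining logarithmic estimate is routine.
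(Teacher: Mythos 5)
Your proposal is correct and follows essentially the same two-case decomposition as the paper's proof: when the inner \textbf{while} loop does not run, the lexicographic decrease comes from appending a finite $s_{\ell+1}$ before $\infty$ (via Lemma~\ref{lem:siginvs}\ref{siginvs:2}); when it does run, each collapse to index $r$ leaves $s_0,\dots,s_{r-2}$ intact and multiplicatively shrinks $|B_{r-1}\cap M^b_\sigma|$ by at least $(1-\mu_1\mu_2)$, so $s_{r-1}$ drops by at least one. Where you go beyond the paper's terse write-up is in making explicit two supporting facts that the paper leaves implicit: (i) the bookkeeping behind step (b) --- that only the flow-path sources lose their big job under \texttt{BigUpdate}, with intermediate machines merely swapping big jobs and sinks lying outside the $B$-sets by Proposition~\ref{prop:basicinvs}\ref{basicinvs:1}, so that no $B_i$ with $i<r-1$ has its big-machine count perturbed; and (ii) the observation that the collapse indices within one main-loop iteration strictly decrease (since each collapse resets $\ell$ to $r-1$), so it is the \emph{last} collapse, at the minimal index $r$, that determines where the lexicographic comparison bites. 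Both additions are correct and plug genuine gaps in the paper's very brief argument without changing its route.
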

\begin{proof}
  Consider the beginning of some iteration of the main loop of the
  local search algorithm with the state $\mathcal{S}$. So,
  $L_0,\dots, L_\ell$ are the set of layers at the beginning of the
  iteration. During the iteration, a single new layer $L_{\ell+1}$ is
  created in Step~\ref{step:finishconstruction}, and zero or more
  layers from the set $\{L_1,\dots,L_\ell,L_{\ell+1}\}$ are discarded
  in Step~\ref{step:discard}. We consider two cases accordingly.

  \begin{itemize}
  \item \textbf{No layer is discarded.} Therefore, at the end of this
    iteration, we will have layers $L_0,\dots,L_{\ell+1}$ and we can
    apply Lemma~\ref{lem:siginvs}\ref{siginvs:2} at the beginning of
    the next iteration to prove this claim. Note here that we used the
    converse of Corollary~\ref{corr:ifr} to deduce that the
    \textbf{while} loop in Step~\ref{step:collapseq} of
    Algorithm~\ref{alg:localsearch} did not execute since no layer was
    discarded.


  \item \textbf{At least one layer is discarded.}  During each
    iteration of the \textbf{while} loop in Step~\ref{step:collapseq},
    for some $1 \leq r \leq \ell + 1$ as chosen in
    Step~\ref{step:chooser}, the \textbf{if} condition in
    Step~\ref{step:ifr} is satisfied by
    Corollary~\ref{corr:ifr}. Therefore, the size of
    $|B_{r-1} \cap M^b_\sigma|$ reduces to at most
    $(1-\mu_1\mu_2)|B_{r-1} \cap M^b_\sigma|$, and the $(r-1)$-th
    coordinate of the signature vector reduces by at least one,
    whereas the coordinates of the signature vector of the layers
    preceding $r-1$ are unaffected. In other words, the signature
    vector at the beginning of the next iteration of the main loop (if
    any) would be \[s' = (s_0,\dots,s_{r-2},s'_{r-1},\infty),\] where
    $r \leq \ell + 1$ and $s'_{r-1} \leq s_{r-1} - 1$.
  \end{itemize}
\end{proof}

An immediate corollary of Lemma~\ref{lem:sigcount} and
Lemma~\ref{lem:sigdec} is that the local search algorithm described in
Section~\ref{sec:lls} terminates after $\text{poly}(|J_b|)$ iterations
of the main loop under the assumption $1 \leq \tau^* < 2$. Notice,
however, that all statements proved in Section~\ref{sec:polybound}
also hold merely given the \emph{conclusions} of Lemma~\ref{lem:bi0}
and Theorem~\ref{thm:newlayer} without necessarily assuming that
$\tau^* \in [1, 2)$.

\subsection{Proof of the Main Theorem}\label{sec:mainthm}
\begin{proof}[Proof of Theorem~\ref{thm:one-epsilon}]
  Let $\mathcal{I}$ be the given instance of the $(1,\epsilon)$ case
  of \problemmacro{restricted assignment makespan minimization} and
  $\texttt{OPT}$ denote the optimum makespan. Assume for the moment
  that $\tau^*$ is known by solving the Configuration LP.  If
  $\tau^* \geq 2$, then the algorithm of Lenstra, Shmoys and
  Tardos~\cite{lenstra1990approximation} for this problem gives an
  $\texttt{OPT} + p_{\max}$ approximation guarantee, which is of
  course at most $1.5\texttt{OPT}$ in this case.

  Suppose instead that $1 \leq \tau^* < 2$. Start with a partial
  schedule $\sigma$ guaranteed by
  Lemma~\ref{lem:partialschedulesexist}. Let $\sigma$ and
  $j_0 \in J_s \setminus \sigma^{-1}(M)$ denote the input partial
  schedule and small job to Algorithm~\ref{alg:localsearch}. From its
  description it is clear that the partial schedule maintained by it
  is modified either in Step~\ref{step:termination} or within the main
  loop. In the main loop, this occurs in exactly three places:
  Step~\ref{step:smallupdate} of Algorithm~\ref{alg:buildlayer};
  Steps~\ref{step:bigupdate} and \ref{step:smallupdate2} of
  Algorithm~\ref{alg:localsearch}. From Lemma~\ref{lem:sigcount} and
  Lemma~\ref{lem:sigdec}, we deduce that the main loop is exited after
  $\text{poly}(|J_b|)$ iterations. Using
  Proposition~\ref{prop:smallupdate}\ref{smallupdate:1},
  Proposition~\ref{prop:bigupdate}\ref{bigupdate:1}, and
  Step~\ref{step:termination} of the local search algorithm, the
  output partial schedule $\sigma'$ therefore satisfies the property
  \[\sigma'^{-1}(M) = \sigma^{-1}(M) \cup \{j_0\}.\]
  Repeating this algorithm a total of $|J_s|$ times yields a schedule
  of makespan at most $\tau^*+R$ for $\mathcal{I}$ in polynomial time.

  However, it is not necessary to know $\tau^*$ in advance by solving
  the Configuration LP. Suppose that $\tau \in [1, 2)$ is a guess on
  the value of $\tau^*$. Let $\mathcal{A}(\tau)$ denote the following
  algorithm. Run the procedure outlined above after substituting
  $\tau$ in place of $\tau^*$ with two modifications to
  Algorithm~\ref{alg:localsearch}: if $|B_0 \cap M^b_\sigma| = 0$
  after Step~\ref{step:init0}, or if
  $|A_{\ell + 1}| < \mu_1 |B_{\leq \ell}|$ in any execution of
  Step~\ref{step:finishconstruction}, then terminate the procedure
  with an error claiming that the guessed value $\tau < \tau^*$.

  Suppose $A(\tau)$ returns a schedule during a binary search over the
  range $\tau \in [1, 2)$, then it is guaranteed to have makespan at
  most $\tau + R$. Note that the range of possible values for $\tau^*$
  is discrete ($1 + k\epsilon$ or $k\epsilon$ for $k \in
  \mathbb{Z}$). As the running time analysis in
  Section~\ref{sec:polybound} of Algorithm~\ref{alg:localsearch}
  depends only on conclusions of Lemma~\ref{lem:bi0} and
  Theorem~\ref{thm:newlayer}, $\mathcal{A}(\tau)$ is always guaranteed
  to terminate in polynomial time irrespective of whether a schedule
  is returned or not. If $A(\tau)$ does not return a meaningful result
  during the binary search then $\tau^* \geq 2$, and it suffices to
  return the schedule computed by the algorithm of Lenstra et
  al.~\cite{lenstra1990approximation}.
\end{proof}

\subsection{Balancing Against Bipartite Matching}\label{sec:balancing}

The approximation guarantee of the local search algorithm from
Section~\ref{sec:lls} deteriorates with increasing $\epsilon$. There
is however a simple algorithm that performs better for the case of
large $\epsilon$.

\begin{theorem}\label{thm:one-epsilonb}
  Let $0 < \epsilon < 1$. The $(1,\epsilon)$ case of
  \problemmacro{restricted assignment makespan minimization} admits a
  $2-\epsilon$ approximation algorithm.
\end{theorem}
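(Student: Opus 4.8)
The plan is to reduce, for each guessed value of $\texttt{OPT}$, to a single maximum-flow computation that exploits the fact that jobs come in only two sizes. First I would guess the value $\tau$ of $\texttt{OPT}$. Since $p_{\max}=1$ we have $\texttt{OPT}\ge 1$, and there are only polynomially many candidate values, so we may enumerate them (or binary search on $\tau$). Fix $\tau$. The structural observation is that in an optimal schedule of makespan $\tau$ every machine receives at most $\lfloor\tau\rfloor$ big jobs (each of size $1$) and, because every job has size at least $\epsilon$, at most $\lfloor\tau/\epsilon\rfloor$ jobs in total. I would therefore look for \emph{any} assignment of $J$ to $M$ in which every machine gets at most $\lfloor\tau\rfloor$ big jobs and at most $\lfloor\tau/\epsilon\rfloor$ jobs overall; such an assignment exists since the optimal schedule witnesses it.

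This assignment can be found in polynomial time. In the bipartite graph in which job $j$ is adjacent to the machines of $\Gamma(j)$, attach to each machine $i$ a two-level gadget, namely an inner node of capacity $\lfloor\tau\rfloor$ reachable only from big jobs and an uncapacitated inner node reachable from small jobs, both feeding a machine node of capacity $\lfloor\tau/\epsilon\rfloor$ that leads to the sink. One then checks whether the maximum flow saturates all $|J|$ job sources and, if so, reads off an integral assignment by integrality of maximum flow. (For the correct $\tau$ this always succeeds; for guesses that are too small it may fail, and we simply discard them.)

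To finish, I would bound the makespan of such an assignment. If machine $i$ receives $a_i$ big and $b_i$ small jobs, its load is $a_i+b_i\epsilon = a_i(1-\epsilon)+(a_i+b_i)\epsilon \le \lfloor\tau\rfloor(1-\epsilon)+\lfloor\tau/\epsilon\rfloor\epsilon \le \tau(1-\epsilon)+\tau = (2-\epsilon)\tau$, using $1-\epsilon>0$, $\lfloor\tau\rfloor\le\tau$ and $\lfloor\tau/\epsilon\rfloor\epsilon\le\tau$. Running this for the smallest $\tau$ for which the flow is feasible (which is at most $\texttt{OPT}$) therefore produces a schedule of makespan at most $(2-\epsilon)\texttt{OPT}$.

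I do not expect a serious obstacle; the two points that need care are that the two cardinality constraints are simultaneously satisfiable (immediate from the optimal schedule) and --- this is the actual source of the improvement over the trivial $1/\epsilon$ bound that one obtains by limiting only the \emph{total} number of jobs per machine --- that a big job must be charged against both the $\lfloor\tau\rfloor$ budget and the $\lfloor\tau/\epsilon\rfloor$ budget, which is precisely what the two-level flow gadget enforces. Everything else is a routine flow decomposition and the arithmetic above.
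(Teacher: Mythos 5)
Your proof is correct and takes essentially the same approach as the paper: both guess $\texttt{OPT}$, impose cardinality constraints on the number of big jobs and total jobs per machine, realize them via a matching/flow computation whose feasibility is witnessed by the optimal schedule, and derive the $(2-\epsilon)$ bound from the same arithmetic. The paper's formulation creates $\lfloor\texttt{OPT}/\epsilon - 1\rfloor$ small slots plus one big slot per machine and computes a bipartite perfect matching, whereas your two-level flow gadget encodes exactly the same pair of budgets; your version is marginally more robust in that it correctly allows up to $\lfloor\tau\rfloor$ big jobs per machine when $\tau \geq 2$, a regime the paper's single big node would not capture (though this case is ultimately handled elsewhere by the Lenstra--Shmoys--Tardos bound).
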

\begin{proof}
  Let $\texttt{OPT}$ denote the makespan of an optimum solution to the
  input instance. Guess $\texttt{OPT}$ through binary
  search. Construct a bipartite graph with
  $\lfloor \texttt{OPT}/\epsilon - 1\rfloor$ small nodes and $1$ big
  node for each machine in the input instance. Each small job is
  connected by an edge to all the nodes of all the machines it can be
  assigned to with a finite processing time. Each big job is connected
  by an edge to all the big nodes of all the machines it can be
  assigned to with a finite processing time. It is easy to see that
  there is a perfect matching in this bipartite graph which
  corresponds to a schedule of makespan at most
  \[\left(\Bigl\lfloor \frac{\texttt{OPT}}{\epsilon} - 1\Bigr\rfloor\right)\epsilon + 1
 \leq  \texttt{OPT} -\epsilon + 1   \leq \texttt{OPT} +
 (1-\epsilon)\texttt{OPT} = (2-\epsilon)\texttt{OPT}.\]
\end{proof}


\begin{proof}[Proof of Theorem~\ref{thm:one-epsilonc}]
  Run the algorithm in Theorem~\ref{thm:one-epsilon} with a parameter
  $\zeta'$ on the input instance to obtain a schedule with makespan at
  most $(1 +R(\epsilon, \zeta'))\texttt{OPT}$. Run the algorithm in
  Theorem~\ref{thm:one-epsilonb} to get a $(2-\epsilon)\texttt{OPT}$
  makespan schedule. The better of the two schedules has an
  approximation guarantee that is no worse than
  \[\min\left\{1 + \frac{1}{2} \left(\epsilon + \sqrt{3-2\epsilon}\right) + \zeta', 2 -
      \epsilon\right\}.\]

  Suppose that $2-\epsilon \geq 17/9 + \zeta$. Then,
  $\epsilon \leq 1/9 -\zeta$. So,
  \[1 + \frac{1}{2} \left(\epsilon + \sqrt{3-2\epsilon}\right) + \zeta'
    \leq 1 + \frac{1}{2} \left(\epsilon + \sqrt{3-2\epsilon}\right)\Big|_{\epsilon = 1/9} + \zeta' = 1
    + \frac{1}{2}\cdot \left(\frac19 + \frac53\right) + \zeta' = \frac{17}{9} + \zeta,\] for $\zeta' = \zeta$.
\end{proof}

\section{Conclusion}\label{sec:conclusion}

In this paper we presented a purely flow based local search algorithm
for the $(1,\epsilon$)-case of \problemmacro{restricted assignment
  makespan minimization}. The guarantees achieved by our algorithm
improve significantly over the previous best one due to Chakrabarty et
al.~\cite{chakrabarty20151}. For an illustration of the approximation
profile for the problem, see Figures~\ref{fig:profile-before}
and~\ref{fig:profile-after}.

We remark that the ideas presented in this paper do \emph{not} crucially
depend on the fact that the instances contain exactly two different
job sizes. Nevertheless, we have chosen to present our results in the
$(1,\epsilon)$-case as there are still certain obstructions which
prevent us from achieving $2-\epsilon_0$ guarantees for the
\emph{restricted} case in general. A second reason is that the
algorithm for the case with two different job sizes admits a clean
description in terms of maximum flows, as we have seen earlier in
Section~\ref{sec:lls}.

\begin{figure}
  \centering
  \includegraphics[width=.67\linewidth]{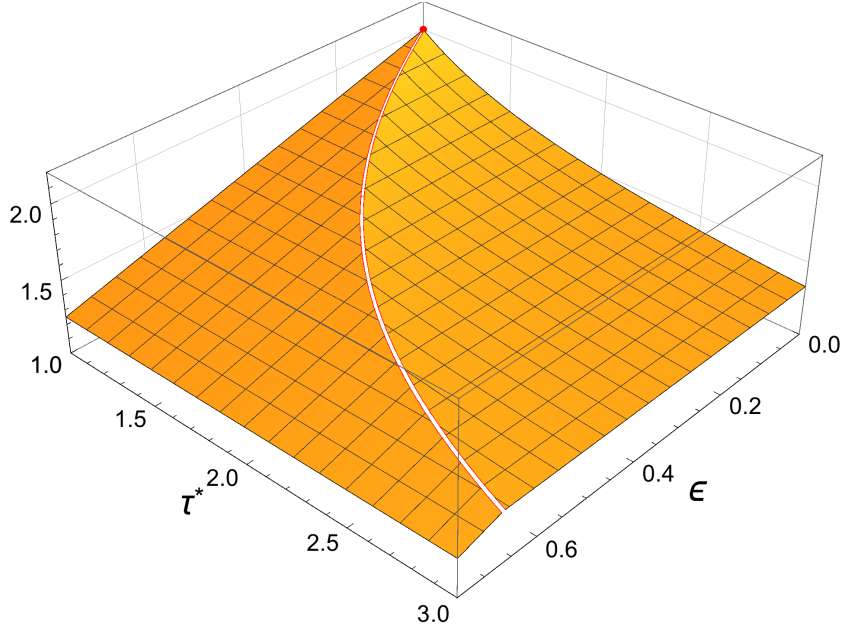}
  \caption{The previous profile of the approximation guarantee as a
    function of $1 \leq \tau^* \leq 3$ and $0 < \epsilon \leq
    3/4$. The two surfaces making up the profile correspond to the
    guarantees $2-\epsilon$ from Theorem~\ref{thm:one-epsilonb} and
    $1+1/\tau^*$ from the algorithm of Lenstra, Shmoys and
    Tardos~\cite{lenstra1990approximation}. The work of Chakrabarty,
    Khanna and Li~\cite{chakrabarty20151} provided a $2-\epsilon_0$ guarantee for some
    positive $\epsilon_0 > 0$, which is indicated as a red dot at the
    apex of the profile.}
    \label{fig:profile-before}
\end{figure}
\begin{figure}
  \centering
  \includegraphics[width=.67\linewidth]{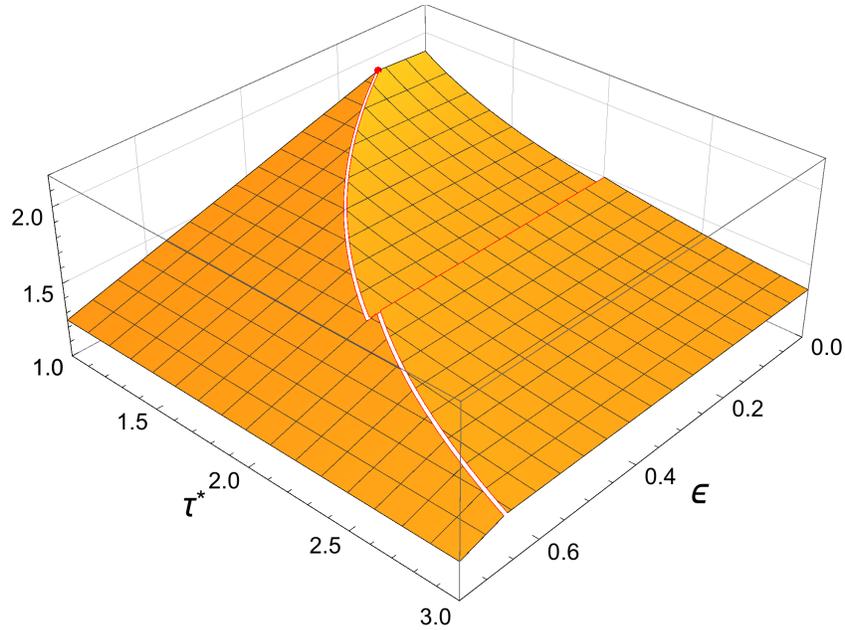}
  \caption{Now, following Theorem~\ref{thm:one-epsilonc}, the worst
    case guarantee is greatest (roughly $1.89$) for instances with
    $\tau^*=1$ and $\epsilon \approx 1/9$ as shown in the figure. The
    third surface arises from the guarantee of
    $1+R(\epsilon, \zeta)/\tau^*$ for $1 \leq \tau^* < 2$ from the
    proof of Theorem~\ref{thm:one-epsilon}.}
    \label{fig:profile-after}
\end{figure}

\bibliographystyle{alpha}
\bibliography{refs}

\appendix

\section{Appendix}\label{sec:appendix}

  

\begin{proof}[Proof of Lemma~\ref{lem:partialschedulesexist}]
  Consider the bipartite graph $G=(M \, \cup \, J_b, E)$ where there
  is an edge $\{i,j\} \in E$ if and only if $i \in \Gamma(j)$. A
  perfect matching in $G$ of size $|J_b|$ corresponds to such a
  map. If there is no such perfect matching, by Hall's condition,
  there is a set $S \subseteq J_b$ such that $|N_G(S)| <
  |S|$. Consider the following setting $(y^*, z^*)$ of variables in
  the dual of $\clp{\tau^*}$.
  \begin{align*}
    y^*_i =
    \begin{cases}
      1,  & \text{if $i \in N_G(S)$}, \\
      0, & \text{else}.
    \end{cases}
    \quad \text{and} \quad z^*_j = 
    \begin{cases}
      1,  & \text{if $i \in S$}, \\
      0, & \text{else}.
    \end{cases}
  \end{align*}
  It is now easily verified that $(y^*, z^*)$ is a feasible solution
  to the dual of $\clp{\tau^*}$ defined in~\eqref{eq:clpdual}. We use
  here the fact that configurations $C \in \conf{\tau^*,i}$ for any
  machine $i \in M$ contain at most one big job since $\tau^* < 2$. As
  the objective function value
  $\sum_{j \in J} z^*_j - \sum_{i \in M} y^*_i$ attained by this
  feasible solution $(y^*, z^*)$ is strictly positive, it follows that
  the dual of $\clp{\tau^*}$ is unbounded--for any $\lambda > 0$,
  $(\lambda y^*, \lambda z^*)$ is a feasible dual solution as
  well. Therefore, by \emph{weak duality}, $\clp{\tau^*}$ must be
  infeasible, a contradiction.
\end{proof}

\begin{proof}[Proof of Proposition~\ref{prop:vertexdisjointpaths}]
  By flow decomposition, the maximum flow in $H^b_\sigma(S, T)$ has
  flow paths $p_1,\dots,p_{|H^b_\sigma(S, T)|}$, each of which sends
  one unit of flow from some vertex in $S$ to some vertex in $T$. The
  flow paths may not share a vertex in $T$ as sinks have unit vertex
  capacities in $H^b_\sigma(S, T)$ as defined in
  Definition~\ref{def:bigflownetwork}. Each machine $i \in M$ has at
  most one outgoing edge with unit capacity in $G^b_\sigma$ due to
  Definition~\ref{def:partialschedule}\ref{partialschedule:3} and
  Definition~\ref{def:jobassignmentgraphs}. So the flow paths may also
  not intersect in some vertex in $M \setminus T$ since there is at
  most one outgoing arc with unit capacity. Similarly, they may not
  share a vertex in $J_b$ as there is only one incoming arc of unit
  capacity to a vertex in $J_b$ in $G^b_\sigma$ using
  Definition~\ref{def:partialschedule} and
  Definition~\ref{def:jobassignmentgraphs} .
\end{proof}

\begin{claim}\label{claim:positivity}
  $\forall 1 \leq \tau^* < 2, 0 < \epsilon < 1, \zeta > 0$,
  \[\Big(\left(2R-\delta-\epsilon-1-\tau^*\mu_1\mu_2-(1+2\tau^*+R)\mu_1\right)
      \cdot \left(\delta(1-\mu_2)-2\mu_2\right)-(1-\mu_2)(1-R)-\mu_2\tau^*\Big)
    > 0, \] where
  $\mu_1 = \min\{1,\zeta\}/4, \mu_2 = \min\{\delta,\zeta\}/4, \delta = \left(\sqrt{3-2 \epsilon }-1\right)/2$, and
  $R = \left(\epsilon +\sqrt{3-2 \epsilon }\right)/2 + \zeta.$
\end{claim}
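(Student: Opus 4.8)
The plan is to reduce the stated inequality, which lives in the three independent parameters $\tau^*\in[1,2)$, $\epsilon\in(0,1)$, $\zeta>0$ (with $\delta,R,\mu_1,\mu_2$ all functions of these), to a short two‑parameter case check. First I would record the algebraic identities implicit in the definitions. Squaring $2\delta+1=\sqrt{3-2\epsilon}$ gives $\delta^2+\delta=\frac{1-\epsilon}{2}$, and feeding this back into the definition of $R$ yields the compact forms $R=1-\delta^2+\zeta$, $\;2R-\delta-\epsilon-1=\delta+2\zeta$, and $1-R=\delta^2-\zeta$. Writing $E_1:=(\delta+2\zeta)-\tau^*\mu_1\mu_2-(1+2\tau^*+R)\mu_1$ and $E_2:=\delta-(2+\delta)\mu_2$, the quantity to be shown positive becomes $\Phi:=E_1E_2-(1-\mu_2)(\delta^2-\zeta)-\mu_2\tau^*$. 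Since $\mu_2\le\delta/4$ and $\delta<\frac{\sqrt3-1}{2}<\frac12$, one has $E_2\ge\delta-(2+\delta)\frac{\delta}{4}=\frac{\delta(2-\delta)}{4}>0$.

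Second, I would eliminate $\tau^*$ by monotonicity: $\mu_1,\mu_2$ do not depend on $\tau^*$, so $\partial\Phi/\partial\tau^*=E_2(-\mu_1\mu_2-2\mu_1)-\mu_2<0$, hence it suffices to establish $\Phi>0$ at $\tau^*=2$. Substituting $\tau^*=2$ and $R=1-\delta^2+\zeta$, only $\delta\in(0,\frac{\sqrt3-1}{2})$ and $\zeta>0$ remain free, and expanding gives $\Phi=\zeta(2\delta+1)+C$, where $C$ collects exactly the terms carrying a factor $\mu_1$ or $\mu_2$. The leading term $\zeta(2\delta+1)$ is the value of $\Phi$ at $\mu_1=\mu_2=0$ and is already strictly positive, so the whole question is whether the correction $C$ can swamp it.

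Third, I would case‑split on which branch of the minimum realizes $\mu_1=\min\{1,\zeta\}/4$ and $\mu_2=\min\{\delta,\zeta\}/4$: regime (a) is $\zeta\le\delta$, so $\mu_1=\mu_2=\zeta/4$; regime (b) is $\delta<\zeta\le1$, so $\mu_1=\zeta/4$ and $\mu_2=\delta/4$; regime (c) is $\zeta>1$, so $\mu_1=\frac14$ and $\mu_2=\delta/4$ (using $\delta<1$). In regime (a), set $m=\zeta/4<\frac18$; a direct expansion produces $\Phi=m\bigl[2+\delta^3-6\delta m-4m-m(2+\delta^2-6m)(2+\delta)\bigr]$, and bounding the bracket from below using $m<\frac18$ and $\delta<\frac12$ (the subtracted terms total less than $13m<\frac{13}{8}$) leaves $\Phi>m\bigl(\frac38+\delta^3\bigr)>0$. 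In regime (c), $\delta^2-\zeta<0$, so $-(1-\mu_2)(\delta^2-\zeta)=(1-\frac\delta4)(\zeta-\delta^2)\ge\frac34(\zeta-\delta^2)>\frac34\cdot\frac34$ since $\zeta>1>\delta^2$; together with $E_1E_2>0$ (one checks $E_1>0$ here because $\frac74\zeta-\frac32>0$) and $-\mu_2\tau^*=-\frac\delta2>-\frac14$, this gives $\Phi>\frac{9}{16}-\frac14>0$. In regime (b), again $E_1E_2>0$, and using $\zeta>\delta$ one gets $-(1-\mu_2)(\delta^2-\zeta)-\mu_2\tau^*=(1-\frac\delta4)(\zeta-\delta^2)-\frac\delta2\ge(1-\frac\delta4)\delta(1-\delta)-\frac\delta2=\delta\bigl(\frac12-\frac{5\delta}{4}+\frac{\delta^2}{4}\bigr)$; the quadratic $\frac12-\frac{5\delta}{4}+\frac{\delta^2}{4}$ has smaller root $\frac{5-\sqrt{17}}{2}\approx0.438$, which exceeds $\frac{\sqrt3-1}{2}\approx0.366$ (equivalently $\sqrt3+\sqrt{17}<6$), so it is positive throughout the admissible range of $\delta$ and $\Phi>0$.

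The hard part is not any individual estimate but that the positivity margin is genuinely narrow: after the $\mu_1,\mu_2$ corrections, only a fraction of the headroom $\zeta(2\delta+1)$ survives, so a single crude bound on $|C|$ cannot close the gap. Two features carry the argument and deserve emphasis: (i) the substitution $R=1-\delta^2+\zeta$, which makes every intermediate expression a short polynomial in $\delta$ and $\zeta$; and (ii) the three‑way case split, which is precisely what allows $\mu_2$ to be bounded by $\delta/4$ when $\zeta>\delta$ and by $\zeta/4$ when $\zeta\le\delta$ — using the wrong bound in either direction loses too much. The constraint $\epsilon>0$, i.e.\ $\delta<\frac{\sqrt3-1}{2}$, is used in an essential way to make the regime‑(b) quadratic positive.
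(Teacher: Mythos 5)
Your proof is correct, and it is a genuinely different route from the paper's. The paper reduces to $\tau^*=2$ (asserting this without justification) and then delegates the remaining two-parameter inequality to a computer algebra system for quantifier elimination. You instead make the argument fully elementary and hand-checkable, and the two steps that make this possible are exactly the ones you flag: the substitution $\delta^2+\delta=(1-\epsilon)/2$, which gives $R=1-\delta^2+\zeta$, $2R-\delta-\epsilon-1=\delta+2\zeta$, and $1-R=\delta^2-\zeta$, turning everything into short polynomials in $\delta$ and $\zeta$; and the three-way split on which branch of the $\min$'s defines $\mu_1$ and $\mu_2$, which lets you bound $\mu_2$ by $\delta/4$ or $\zeta/4$ depending on the regime without losing the narrow positivity margin. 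You also supply the missing justification for the $\tau^*$-reduction by observing $\partial\Phi/\partial\tau^* = E_2(-\mu_1\mu_2-2\mu_1)-\mu_2<0$ once $E_2>0$ is established. I checked the regime computations: in (a) the identity $\Phi=m\bigl[2+\delta^3-6m\delta-4m-m(2+\delta^2-6m)(2+\delta)\bigr]$ is correct and the bracket bound $>3/8+\delta^3$ holds with $m<1/8$ and $\delta<1/2$; in (b) the quadratic $\tfrac12-\tfrac{5\delta}4+\tfrac{\delta^2}4$ has smaller root $\tfrac{5-\sqrt{17}}2$, and $\sqrt3+\sqrt{17}<6$ indeed places $(\sqrt3-1)/2$ below it; in (c) your bounds $(1-\mu_2)(\zeta-\delta^2)>\tfrac34\cdot\tfrac34$ and $-\mu_2\tau^*>-\tfrac14$ give $\Phi>5/16$. (Minor: in (b) you assert $E_1>0$ without a computation; a one-line check such as $E_1\geq\delta+\zeta(\tfrac12-\tfrac1{16}-\tfrac14)=\delta+\tfrac{3\zeta}{16}>0$ closes that.) What your approach buys is transparency and independence from CAS software; what the paper's approach buys is brevity. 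Both are valid, but yours is the kind of proof one can audit line by line.
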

\begin{proof}
  The statement is true if it is true for $\tau^* = 2$. To prove that
  the bracketed expression is positive we substitute the values of
  $\mu_1,\mu_2,\delta$ and $R$ from~\eqref{eq:mudelta} and the
  statement of Theorem~\ref{thm:one-epsilon}, and additionally set
  $\tau^* = 2$ to get the statement
  \begin{multline*}
    -\frac{1}{2} \min \left\{\frac{1}{2} \left(\sqrt{3-2 \epsilon
        }-1\right),\zeta \right\}-\frac{1}{2} \left(2 \zeta +\epsilon
      +\sqrt{3-2 \epsilon }-2\right) \left(\frac{1}{4} \min
      \left\{\frac{1}{2} \left(\sqrt{3-2 \epsilon }-1\right),\zeta
      \right\}-1\right)\\+\frac{1}{16} \left(\frac{1}{4} \left(\sqrt{3-2
          \epsilon }+3\right) \min \left\{\frac{1}{2} \left(\sqrt{3-2
            \epsilon }-1\right),\zeta \right\}-\sqrt{3-2 \epsilon
      }+1\right) \times \\ \left(\min \{1,\zeta\} \left(\min \left\{\frac{1}{2}
          \left(\sqrt{3-2 \epsilon }-1\right),\zeta \right\}+2 \zeta
        +\epsilon +\sqrt{3-2 \epsilon }+10\right)-4 \left(4 \zeta
        +\sqrt{3-2 \epsilon }-1\right)\right) > 0.
  \end{multline*}
  Using a standard computer algebra system for eliminating quantifiers
  over reals, we can verify the truth of the above statement for all
  $0 < \epsilon < 1$ and $\zeta > 0$.
\end{proof}

\begin{claim}\label{claim:positivity2}
  $\forall 0 < \epsilon < 1, \zeta > 0$,
  \[ \delta(1-\mu_2) - 2\mu_2 > 0, \] where
  $\mu_2 = \min\{\delta,\zeta\}/4$, and $\delta = \left(\sqrt{3-2 \epsilon }-1\right)/2$.
\end{claim}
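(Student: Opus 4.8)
The plan is to reduce everything to elementary bounds on $\delta$ and then use monotonicity in $\mu_2$. First I would record the range of $\delta$: for $0<\epsilon<1$ we have $1<3-2\epsilon<3$, hence $1<\sqrt{3-2\epsilon}<\sqrt3$, so $\delta=(\sqrt{3-2\epsilon}-1)/2$ lies strictly between $0$ and $(\sqrt3-1)/2<1$. In particular $0<\delta<2$, which is all that will be needed.

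Next I would observe that, by definition, $\mu_2=\min\{\delta,\zeta\}/4\le \delta/4$, and that the quantity $\delta(1-\mu_2)-2\mu_2=\delta-(\delta+2)\mu_2$ is a strictly decreasing linear function of $\mu_2$ (the coefficient $-(\delta+2)$ is negative since $\delta>0$). Therefore it is minimized, over the allowed values of $\mu_2$, by taking $\mu_2$ as large as possible, i.e.\ by the substitution $\mu_2\le\delta/4$. Plugging this in gives
\[
\delta(1-\mu_2)-2\mu_2 \;\ge\; \delta\Bigl(1-\frac{\delta}{4}\Bigr)-2\cdot\frac{\delta}{4} \;=\; \frac{\delta}{2}-\frac{\delta^2}{4} \;=\; \frac{\delta(2-\delta)}{4}.
\]
Since $0<\delta<2$ from the first step, the right-hand side is strictly positive, which is exactly the claimed inequality.

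There is essentially no hard step here: the only thing to be careful about is getting the inequality direction right when replacing $\mu_2$ by its upper bound $\delta/4$, which is justified precisely by the sign of the coefficient $-(\delta+2)$. Everything else is the two-line computation above together with the trivial bound $\delta<2$. I would present it in roughly this order — range of $\delta$, then the monotonicity substitution, then the final factorization $\delta(2-\delta)/4>0$ — and keep it to a few lines.
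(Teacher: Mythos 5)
Your proof is correct and shares the paper's key step (bounding $\mu_2 \le \delta/4$), but you finish the argument more cleanly. After the substitution $\mu_2 \le \delta/4$, the paper translates everything back to $\epsilon$, reduces to $\epsilon + 3\sqrt{3-2\epsilon} > 4$, and then verifies this by computing a derivative and checking boundary values; you instead stay in the variable $\delta$, arrive at the factored form $\delta(2-\delta)/4$, and conclude immediately from the trivial bound $0 < \delta < 2$. That saves the calculus entirely and makes the monotonicity-in-$\mu_2$ justification explicit (the coefficient $-(\delta+2)$ is negative), which the paper leaves implicit behind the phrase ``it suffices to verify that statement assuming that the $\min$ term always evaluates to the first argument.'' Both routes are sound; yours is tighter.
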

\begin{proof}
  Substituting the values of $\mu_2$ and $\delta$, the statement reads
  \[4 \left(\sqrt{3-2 \epsilon }-1\right) > \left(\sqrt{3-2 \epsilon
      }+3\right) \min \left\{\frac{1}{2} \left(\sqrt{3-2 \epsilon
        }-1\right),\zeta \right\}.\] It suffices to verify that
  statement assuming that the $\min$ term always evaluates to the
  first argument, which then reduces to
  $\epsilon +3 \sqrt{3-2 \epsilon } > 4$. Let $f(\epsilon)$ denote the
  expression on the left. Then,
  $f'(\epsilon) = 1-3/(\sqrt{3-2 \epsilon})$ is negative over
  the range $[0,1]$, $f(0)-4 > 0$ and $f(1)-4=0$.
\end{proof}

\end{document}